\newcommand{\inference}[2]{\inferrule{#1}{#2}}
\newtheorem{assumption}{Assumption}[section]
\definecolor{mygreen}{rgb}{0,0.6,0}
\footnotesize\color{mygreen},
\newif\ifprintcomments
\newcommand{\sref}[1]{\S\ref{sec:#1}}
\newcommand{\flabel}[1]{\label{fig:#1}}
\newcommand{\fref}[1]{Figure~\ref{fig:#1}}
\newcommand{\ie}{\emph{i.e.}, }
\newcommand{\eg}{\emph{e.g.}, }
\newcommand{\etc}{\emph{etc.}\xspace}
\newcommand{\Iff}{iff}
\newcommand{\etal}{\emph{et.~al.}}
\newcommand{\KW}[1]{\textsf{\textbf{#1}}}
\newcommand{\nil}{\KW{nil}}
\newcommand{\Var}[1]{\textsf{\textit{#1}}}
\newcommand{\NT}[1]{\textsf{\textit{#1}}}
\newcommand{\T}[1]{\textsf{\textbf{#1}}}
\newcommand{\hole}{[\![\;]\!]}
\newcommand{\plug}[2]{#1[\![ #2 ]\!]}
\newcommand{\mtch}[2]{\vdash_{\mathit{mtch}} #1 \approx #2}
\newcommand{\sep}{\textbar\;}
\newcommand{\dom}[1]{\ensuremath{\mathsf{dom}(#1)}}
\newcommand{\luastep}{\overset{\mathsf{\scriptscriptstyle L}}{\mapsto}}
\newcommand{\gcstep}{\overset{\mathsf{\scriptscriptstyle GC}}{\mapsto}}
\newcommand{\finstep}{\overset{\mathsf{\scriptscriptstyle F}}{\mapsto}}
\newcommand{\luagcstep}{\overset{\mathsf{\scriptscriptstyle L+GC}}{\mapsto}}
\newcommand{\rcheq}{\overset{\mathsf{rch}}{\sim}}
\newcommand{\setmetatable}{\T{setmetatable}}
\newcommand{\finlt}{\ensuremath{<^{\mathit{fin}}}}
\newcommand{\finle}{\ensuremath{\leq^{\mathit{fin}}}}
\newcommand{\gctool}{\textbf{LuaSafe}}
\newcommand{\safeset}{P_{\mathit{safe}}}
\newcommand{\typed}[1]{#1_{\mathit{typed}}}
\newcommand{\rchdef}[1]{#1_{\mathit{rch\_def}}}
\newcommand{\emptup}{\textbf{()}}
\newcommand{\typee}{\vdash_{\mathit{te}}}
\newcommand{\types}{\vdash_{\mathit{ts}}}
\renewcommand\footnotetextcopyrightpermission[1]{}
\begin{document}

\title{Understanding Lua's Garbage Collection}
\subtitle{Towards a Formalized Static Analyzer}

\author{Mallku Soldevila}
\affiliation{
  \department{FAMAF, UNC and CONICET}
  \country{Argentina}
}
\email{mes0107@famaf.unc.edu.ar}

\author{Beta Ziliani}
\affiliation{
  \department{FAMAF, UNC and CONICET}
  \country{Argentina}
}
\email{beta@mpi-sws.org}

\author{Daniel Fridlender}
\affiliation{
  \department{FAMAF, UNC}
  \country{Argentina}
}
\email{fridlend@famaf.unc.edu.ar}

\begin{abstract}
  We provide the semantics of garbage collection (GC) for the Lua programming
  language. Of interest are the inclusion of \emph{finalizers} (akin to
  destructors in object-oriented languages) and \emph{weak tables} (a particular
  implementation of weak references). The model expresses several aspects
  relevant to GC that are not covered in Lua's documentation but that,
  nevertheless, affect the observable behavior of programs.

  Our model is mechanized and can be tested with real programs. Our
  long-term goal is to provide a formalized static analyzer of Lua
  programs to detect potential dangers. As a first step, we provide a
  prototype tool, \gctool, that typechecks programs to ensure their
  behavior is not affected by GC. Our model of GC is validated in practice by the
  experimentation with its mechanization, and in theory by proving several 
  soundness properties.

\keywords{Lua \and Garbage Collection \and Verification}

\end{abstract}

\maketitle


\section{Introduction}
\label{sec:introduction}

Lua is an extensively used imperative scripting language.  Its
popularity grows to the point that it currently has several
interpreters and compilers~\cite{lua-implementations}, and static
analyzers~\cite{lua-analyzers}. Among its advocates, Lua has a long
standing support within the game industry~\cite{lua-games}. However,
while being a very fast scripting language, it is noted in
\emph{ibid} that:

\begin{center}\it
``Using Lua on performance-constrained platforms \\
can definitely be a challenge if you don't understand\\
 the ins and outs of Lua's memory usage.''  
\end{center}

In particular, Lua's garbage collector (GC) offers a rich interface to
let the developer efficiently deal with memory. For instance, it is
possible to create a \emph{weak table}, that is, a Lua \emph{table}
(akin to a JavaScript's' associative array) whose keys or values are
weak references. Thus, when performing garbage collection (also noted as GC),
it might decide to collect keys or values from a weak table, even if
the table is still in scope.

\begin{figure}
\begin{lstlisting}
local t = {}                     --create an empty table
setmetatable(t, {__mode = 'v'}) --set its values as weak
t[1] = {}                --assign an empty table to key 1
local i = 0
while true do
  i = i + 1
  ...           --some code, possibly generating garbage
  if not t[1] then break end
end

return i                --this value cannot be predicted
\end{lstlisting}
\vspace{-1em}
  \caption{A non-deterministic program using a weak table.}
  \label{fig:ex-non-det}
\end{figure}

If improperly used, weak tables can easily break the program's
invariants, as the simple program listed in \fref{ex-non-det}
shows. In this program, a table \textsf{t} is created containing only
one value, another table referred by a weak reference, and without any
other variable bound directly or indirectly to it. That is, there is
no other path to the value using only \emph{strong} (\ie regular)
references. Then, such value can be GC'ed at any time, making true the
condition \mbox{\lstinline{not t[1]}} at an arbitrary number of
iterations of the loop (the \lstinline{if} breaks the loop when the
value in \lstinline{t[1]} is \lstinline{nil}). Therefore, the returned
number of iterations \lstinline{i} cannot be predicted.

Weak tables are used mainly for caching values \citep{ecwt}, and a
good use of such tables will ensure the references are valid prior to
accessing them. However, in a realistic program manually validating
every use of weak tables is error-prone and, for this reason, it is proposed in~\cite{weak-js} that weak references be only used
within the scope of a library, subject to a larger scrutiny and
testing. However, testing is due to fail given the non-deterministic
nature of GC, a problem exacerbated by specificities of the
interpreter and the platform in which the program is executed.

Therefore, we aim at performing static analysis on code to detect
ill-uses of weak tables.  In this paper we present the first steps
towards that direction: a mathematical model of Lua's GC
together with a prototype tool, \gctool, whose aim is to discover
potential sources of non-determinism (at the moment, focusing only in
GC). Our model builds on top of that from~\cite{dls}, and as such, it
can be applied to the study of real Lua programs, missing only
a handful of features from the language unrelated to GC.

The model is mechanized in PLT Redex~\cite{plt} as an extension of the
mechanization presented in~\cite{dls}. It covers weak tables and
\emph{finalizers}, the latter being functions executed when an element
is about to be disposed. Without these interfaces, we show, GC is
deterministic. But as soon as finalizers or weak tables are
considered, determinism is lost.

After understanding the intricacies of Lua's GC, we develop
\gctool. This tool combines the knowledge about weak tables together
with type inference and data-flow analysis in order to detect ill-uses
of weak tables, that could lead to non-deterministic behavior. For
instance, it rightfully rejects the program from \fref{ex-non-det}.

More concretely, our contributions are:
\begin{itemize}
\item A mathematical model of Lua's GC, including finalizers and weak tables.
\item A theoretical framework under which we can express and prove standard
soundness properties of our model. 
\item A formalization of said model in PLT Redex.
\item A prototype tool, \gctool, to help uncover potential misuses of weak 
tables.
\end{itemize}

The mechanization of the model can be downloaded from~\cite{code}, and 
\gctool{} can be downloaded from~\cite{codeluasafe}.

\section{Basics of the model}
\label{sec:model}

%

%
In this section we introduce the necessary background to understand the model of
GC that we will develop in the coming sections.  As mentioned in the 
introduction, we build our model of GC on top of the semantics of Lua presented 
in~\cite{dls}, and we refer to the cited work for details.



\subsection{A subset of Lua}
\label{sec:subset_lua}

\begin{figure}
  \begin{tabbing}
  \NT{E} ::= \= [\![\;]\!] \sep 
  \NT{E} ( \NT{e} \T{,} \ldots{} ) \sep
                \NT{v} ( \NT{E$_{l}$} ) \sep \NT{var} \T{,} ... \T{=} \NT{E$_{l}$} \\
             \> \sep \T{local} \NT{var} \T{,} ... = \NT{E$_{l}$} 
                     \T{in} \NT{s} \T{end}\\
             \> \sep \T{setmetatable}(\NT{E$_{l}$}) \sep \KW{error} \NT{E} \sep \ldots{} \\

  \NT{E$_{l}$} ::= \> \NT{v} \T{,} \ldots{} \T{,} \NT{E} \T{,} 
                     \NT{e} \T{,} \ldots{}\\

  \NT{s} ::= \> \NT{e} ( \NT{e} \T{,} \ldots{} ) \sep
                   \NT{var} \T{,} ... \T{=} \Var{e} \T{,} ...\\
             \> \sep \T{local} \NT{var} \T{,} ... = \NT{e} \T{,} ... 
                     \T{in} \NT{s} \T{end}\\
             \> \sep \T{setmetatable}(\NT{e} \T{,} \ldots{}) \sep \KW{error} \NT{e} \sep \ldots{} \\

  \NT{e} ::= \> \NT{v} \sep \NT{e} ( \NT{e} \T{,} \ldots{} )
                \sep \{ [\Var{e}] = \Var{e} \T{,} ... \}\\ 
             \> \sep \T{function} ( \NT{x} \T{,} \ldots{} ) \NT{s} \T{end}
                \sep \NT{r} \sep \ldots{}\\

  \NT{v} ::= \> \NT{number} \sep \NT{string} \sep \NT{tid} \sep \NT{cid} 
                \sep \ldots{}\\

  \NT{var} ::=  \NT{x} \sep \NT{e} [\NT{e}]
  \end{tabbing}
  \vspace{-1em}
  \caption{Syntax of evaluation contexts, statements, expressions and values.}
  \flabel{syntax}
\end{figure}

\fref{syntax} shows an extract of the syntax of the model of Lua on which we are
basing our studies. A Lua program is a statement (\NT{s}), for instance a function call; multiple-variable assignment;
definition of multiple local variables; the
primitive \setmetatable{} (discussed below); \emph{error
objects}, Lua's representation of errors; and several others.
%
As expressions (\NT{e}) we have values (\NT{v}); function calls; table constructors;
\T{function} definitions; and references (\NT{r}) to the value store (to
be explained below). Values are numbers, strings, table
identifiers (\NT{tid}) and closures identifiers (\NT{cid}) (also introduced 
below).

To model imperative variables we include a mapping $\NT{r} \rightarrow \NT{v}$,
the \emph{values storage}, denoted with $\sigma$. Tables and closures are also manipulated by reference, although to ease the model we create two different sets of  \emph{identifiers} for them (\NT{tid}
and \NT{cid}). These identifiers refer to tables and closures, respectively, through a 
new mapping, the \emph{objects storage}, denoted with $\theta$. We must point out a difference from the model in~\cite{dls}: they do not consider references to closures, which in our model are required to faithfully record the cleaning of weak tables (\sref{weak_tables}).


Together with the given terms we include the corresponding \emph{evaluation
  contexts} (\NT{E}): terms with a special position marked by $[\![\;]\!]$, a
\emph{hole}. They can be used to formalize many context-dependent concepts, but
the ones shown here indicate a call-by-value execution of programs, with a 
left-to-right ordering in the arguments of sub-expressions.  We will
explain later how they are used to actually impose a particular order of
execution.

The semantics given is operational and is formalized as a relation, which we
will denote with $\luastep$, over configurations of the 
form $\sigma : \theta : \Var{s}$. For instance, the following rule formalizes
a function call:
\begin{mathpar}
 \inference {
   \mathsf{
  \theta(\Var{cid}) = \KW{function}
    \;(\Var{x}_1,...,\Var{x}_n)\; \Var{s}\; \KW{end}}\\
  \mathsf{
    \sigma' = (\Var{r}_1, \Var{v}_1),...,(\Var{r}_{\textsf{n}}, \Var{v}_n), 
              \sigma}
}
{\mathsf{
    \sigma\;\textbf{:}\; \theta\;\textbf{:}\; \Var{cid} \;
    (\Var{v}_1,...,\Var{v}_{n})
    \luastep 
    \sigma'\;\textbf{:}\; \theta\;\textbf{:}\;
    \Var{s}[\Var{x}_1 \backslash \Var{r}_1,..., \Var{x}_{n}
    \backslash \Var{r}_{n}]}}
\end{mathpar}

A function call essentially involves the allocation of its arguments
into the values' store $\sigma$, with fresh references
$\mathsf{\Var{r}_1,...,\Var{r}_n}$, and the substitution of the formal
parameters of the function by these references, in the function's body
($\mathsf{\Var{s}[\Var{x}_1 \backslash \Var{r}_1,..., \Var{x}_{n} \backslash
  \Var{r}_{n}]}$). Note that the closure is referred by its identifier. 

The following rule models the fact that the execution of a statement might happen inside a larger program, modeled with the context \Var{E}:
\vspace{-1em}
\begin{mathpar}
\inference{
  \mathsf{
    \sigma\;\textbf{:}\; \theta\;\textbf{:}\; \Var{s}
    \luastep \sigma'\;\textbf{:}\; \theta'\;\textbf{:}\; 
\Var{s'}}
  }{
  \mathsf{
    \sigma\;\textbf{:}\; \theta\;\textbf{:}\; \Var{E}[\![ \Var{s}]\!]
    \luastep
    \sigma'\;\textbf{:}\; \theta'\;\textbf{:}\; \Var{E}[\![\Var{s'}]\!]}}
\end{mathpar}
\vspace{-.5em}

\noindent The pattern from in the left of
$\luastep$ indicates that the program can be decomposed 
into an
evaluation context \Var{E} and a statement \Var{s}. If the evaluation contexts
and the execution rules are well defined, there should be just one way of
decomposing any program into an \Var{E} and an \Var{s}, and \Var{s} must be an execution-ready statement (for instance, the one presented above
for function calls). The position of the term is determined by the hole of each
evaluation context and, as can be seen in \fref{syntax}, it is unique.

\subsection{Metatables}
Lua presents a powerful metaprogramming mechanism that allows for the 
modification of the behavior of some operations under unexpected circumstances,
like arithmetic operations applied with non-numeric arguments; function
calls over non-function values; indexing a table with a nonexistent key; \etc 
At the heart of this mechanism lies the concept of \emph{metatable}, a regular 
table that maintains handlers to manage unexpected situations, associated with 
specific keys defined beforehand. For instance, in order to \emph{explain} how 
a given table should be represented as a string, through the service 
\textsf{tostring}, the developer can associate a conversion function with the 
key ``\textsf{\_\_tostring}'' in the table's metatable. Some type of objects 
(tables and \emph{userdata}) allows for the definition of a single metatable per 
value, while for the remaining there is just one metatable per type.

A table can be set a metatable through the \setmetatable{} library
service. In~\cite{dls}, tables are modeled as a pair containing
the table's data and a table identity for the metatable, which can be \nil.
As an example, the following rule specifies the creation of a table:

\vspace{-1em}
\begin{mathpar}	
  \inference{\mathsf{\Var{tid} \notin dom(\theta_1)}\\
    \mathsf{\theta_2 = (\Var{tid}, \;
                       (\{[\;\Var{v}_1\;]\; =\;\Var{v}_2,\;... \}\;, 
                       \nil)),\;
                       \theta_1}}
  {\mathsf{\sigma \; \KW{:} \;
      \theta_1\; \KW{:} \;
      \{ [\;\Var{v}_1\;]\; =\;\Var{v}_2, \;... \}
      \;\luastep\;
      \sigma \; \KW{:} \;
      \theta_2\; \KW{:} \;
      \Var{tid}}}
\end{mathpar}
\vspace{-.5em}

\noindent After creation, a table does not contain a metatable set. Only trough
\setmetatable{} one can associate a metatable to the given table.

As we will see in the coming section, metatables play an important
role in the semantics of GC
.

\section{Garbage collection}
\label{sec:applications}


This section represents our main contribution: an abstract model of Lua's GC, modularly divided in three parts. We start by modeling GC without interfaces (\sref{gc:simple}), laying 
the basic concepts upon which the interfaces are added: finalizers (\sref{finalizers}), and  weak tables (\sref{weak_tables}).


\subsection{Reachability-based garbage collection.}
\label{sec:gc:simple}

Lua implements two reachability-based GC strategies: a \emph{mark-and-sweep}    
collector (the default) and a \emph{generational collector}. 
The user is entitled to change the algorithm by calling the 
\textsf{collectgarbage} standard library's service.
In this section we will provide a specification for the behavior of a typical 
reachability-based GC. It should encompass the essential details
of the behavior of the two algorithms included in Lua and any other based on 
reachability. We start with a small set of definitions that we will enrich in coming sections.

\paragraph{Reachability.}

The purpose of GC is to remove from memory (the store) information that will not be used
by the remaining computations of the program. One of the simplest and commonly
used approaches to find such information is based on the notion of
\emph{reachability}~\citep[\eg\!\!][]{acjro}. The idea is simple: given the set of references that
literally occur in the program (the \emph{root set}), it must be the case that
any \emph{information} (\eg value in a store) that may be used by the program
must be \emph{reachable} from that set. Conversely, any \emph{binding} (a
reference with its value) in the store that cannot be reached from the root set,
will not be accessible from the program and, therefore, can be safely removed as
it will not be needed in the remaining computations of the program.

In the context of this work, those values which are not reachable will be called
\emph{garbage}. This notion, sufficient to model Lua's GC, is purely syntactic: it will
take into account just the literal occurrence of references in the program, or
their reachability from this set of references that occur literally,
to determine if a given value is garbage or not. In contrast, there are 
approaches, to identify garbage, where also the semantics of the program may
be taken into account~\citep[\eg\!\!][]{fswr}.

To formally capture the notion of garbage, it will be easier to begin with the
definition of reachable references. The only difference worth to mention, in 
comparison with common definitions found in the literature~\cite{fsf,acjro},
is the inclusion of metatables: a metatable of a reachable table is considered 
reachable, so a reachability \emph{path}, that is, a path between a reference and the root set, might also go through a metatable.

Informally, a location (value reference or an identifier) will be reachable
with respect to a given term \Var{t}, and corresponding stores, if one of the
following conditions hold:
\begin{itemize}
  \item The location occurs literally in \Var{t}.
  \item The location is reachable from the information associated with a
    reachable location. This includes:
    \begin{itemize}
    \item The location is reachable from the closure associated with a
      reachable location.
    \item The location is reachable from the table associated with a
      reachable location.
    \item The location is reachable from a metatable of a reachable table
identifier.
    \end{itemize}
\end{itemize}

This is formalized in the following definition:

\begin{definition}[Reachability for Simple GC]
\label{Reachability Simple GC}
We say that a location $\Var{l} \in \Var{r} \cup \Var{tid} \cup \Var{cid}$ is 
\emph{reachable} in term \Var{t}, given stores $\sigma$ and $\theta$, \Iff:
\begin{tabbing}
  $\mathsf{reach(\Var{l}, \Var{t}, \sigma, \theta)}$ = \=
  $\mathsf{\Var{l} \in \Var{t} \; \vee}$\\
  \> $\mathsf{(\exists \Var{r} \in \Var{t},}$ 
  $\mathsf{reach(\Var{l}, \sigma(\Var{r}),
    \sigma \setminus \Var{r} , \theta)) \; \vee}$\\
  \> $\mathsf{\exists \; \Var{tid} \in \Var{t},}$ \= ($\mathsf
  {reach(\Var{l}, \pi_1(\theta(\Var{tid})), \sigma,
    \theta \setminus \Var{tid}) \; \vee}$\\
  \>\>~$\mathsf{reach(\Var{l}, \pi_2(\theta(\Var{tid})), \sigma,
    \theta \setminus \Var{tid})) \; \vee}$\\
  \> $\mathsf{\exists \; \Var{cid} \in \Var{t},}$ \= $\mathsf
  {reach(\Var{l}, \theta(\Var{cid}), \sigma,
    \theta \setminus \Var{cid})}$
\end{tabbing}
\end{definition}

We write $l \in t$ to indicate that $l$ occurs literally in term $t$, and write
$\gamma \setminus l$ as the store obtained by removing the binding of $l$ in
$\gamma$. Informally, this predicate states that either \Var{l} occurs in 
\Var{t}, or there is a reference in \Var{t} such that \Var{l} is reachable 
from it.

To avoid cycles generated from mutually recursive definitions, in the stores,
 that would render undefined the preceding 
predicate, we remove from the stores the bindings already considered. We assume 
the predicate is false if a given location occurs in \Var{t} but does not 
belong to the domain of any of the stores.

Note that for a table \NT{tid} we not only check its content 
($\pi_1(\theta(\Var{tid}))$) but also its metatable ($\pi_2(\theta(\Var{tid}))$).
That is, a table's metatable is considered reachable when the table itself is 
reachable.
Observe that, being metatables ordinary tables, they can contain other tables' 
ids or even closures, which in turn may have other locations embedded 
into them. Naturally, if metatables were not taken into account for reachability,
we could run straight into the problem of dangling references any time a 
metamethod is recovered from the metatable. Also, note that during the recursive 
call
$\mathsf{reach(\Var{l}, \pi_2(\theta(\Var{tid})), \sigma, \theta \setminus 
\Var{tid})}$, at first it will determine if \Var{l} is exactly
$\mathsf{\pi_2(\theta(\Var{tid}))}$ (because it asks for 
$\Var{l} \in \pi_2(\theta(\Var{tid}))$, for $\pi_2(\theta(\Var{tid}))$ being 
either \nil{} or a table identifier) and, if not, it will continue with the 
inspection of the content of the metatable, by dereferencing its id, given that
it is not \nil{}. Hence, we do not remove
$\pi_2(\theta(\Var{tid}))$ from $\theta$ in the mentioned recursive call.

The last disjunct checks for reachability following a closure 
identifier \Var{cid} present in the root set of references. We need to expand the reachability tree following the environment of the closure 
(\ie the mapping between the external variable's identifiers, present
in the body of the closure, and their corresponding references). 

We conclude this part on reachability with a minor observation:
naturally, the reference manual leaves unspecified details of GC. For
instance, it does not mention how metatables affect GC even though it
does have an observable effect on programs. One of our major
challenges and aim in this work is to unveil such interactions.

\paragraph{Specification of a garbage collection cycle.}

We keep abstract the specification of a cycle of GC in order to accommodate to any implementation of GC:

\begin{definition}[Simple GC cycle]~
\label{Simple GC cycle}\\
\indent $\mathsf{gc(\Var{s}, \sigma, \theta) = (\sigma_1, \theta_1)}$, where:
\begin{itemize}
  \item[-] $\mathsf{\sigma = \sigma_1 \uplus \sigma_2}$
  \item[-] $\mathsf{\theta = \theta_1 \uplus \theta_2}$
  \item[-] \begin{tabbing}
           $\mathsf{\forall l}$ \= 
           $\mathsf{\in dom(\sigma_2) \cup dom(\theta_2),
           \neg reach(l, \Var{s}, \sigma, \theta)}$
           \end{tabbing}  
\end{itemize}
\end{definition}
We use $\gamma_1 \uplus \gamma_2$ to denote the union of stores with disjoint 
domains. This specification states that $\mathsf{gc}(\Var{s}, \sigma, 
\theta)$ returns two stores, $\mathsf{\sigma_1}$ and $\mathsf{\theta_1}$, which 
are a subsets of the stores provided as arguments, $\mathsf{\sigma}$ and 
$\mathsf{\theta}$. We do not specify how these subsets are determined. We just 
require that the remaining part of the stores ($\mathsf{\sigma_2}$ and 
$\mathsf{\theta_2}$) do not contain references that are reachable from the 
program \Var{s}.
Satisfied this condition, it is safe to run code \Var{s} in the new stores 
$\mathsf{\sigma_1}$ and $\mathsf{\theta_1}$, as no dereferencing of a dangling 
pointer may occur.

Observe that the previous specification does not impose 
$\sigma_1$ and $\theta_1$ to be \emph{maximal}, meaning they might have non-reachable references with respect to \Var{s}.


Using the previous specification of GC, we can extend our model of Lua with a 
non-deterministic step of GC, through a relation $\gcstep$:
$$
  \inference
  {\mathsf{(\sigma', \theta') = gc(\Var{s}, \sigma, \theta)}\\
    \mathsf{\sigma' \neq \sigma \vee \theta' \neq \theta}\\
  }
  {\mathsf{\sigma : \theta : s \gcstep \sigma' : \theta' : s}}
$$
We require it to actually perform some change to the stores to ensure progress. This obviously introduces non-determinism:
at any time, as long as there is some garbage left, we can choose to collect the
garbage or to continue with the execution of the program. But, for the
definition provided so far, this non-determinism should not change the behavior
of the program: every execution path will eventually lead to the same result. We
will define formally this concepts in \sref{properties_gc}. This property will 
not longer be true when extending GC with finalizers and weak tables.


\subsection{Finalizers.}
\label{sec:finalizers}

Lua implements finalizers, a mechanism commonly present in programming languages with GC, useful
for helping in the proper disposal of external resources used by the
program. They are defined by the programmer as a function, which is called by
the garbage collector after a value amenable for finalization (table or
userdata) becomes garbage. It should be noted that because finalizers are called
by the garbage collector, there is no possibility of determining the precise 
moment in which finalization will occur. This in contrast with 
\emph{destructors}, a concept present in languages with deterministic 
memory management algorithms (\eg as in C++).

There are several problems that arise from the misuse of this mechanism, 
associated with the fact that finalizers are called in a non-deterministic fashion, 
introducing that non-determinism into the execution of the program. Nonetheless, 
the implementation of finalizers in Lua provides some guarantees about
the execution order of finalizers and the treatment given to resurrected objects 
which makes the algorithm an interesting case study.


\subsubsection{Overview of finalizers in Lua.}

We will begin with an informal presentation of the semantics of finalizers in Lua. After this, we will show how to extend the previous model of GC to include
this interface with the garbage collector.

\paragraph{Setting up a finalizer.} 
\begin{figure}
\begin{lstlisting}
local a, b = {}, {}
setmetatable(a, b) !\label{code:setmeta}!
b.__gc = function () print("bye") end  !\label{code:__gc}!
a = nil
collectgarbage()            --nothing is printed !\label{code:collect}!
local c = {}
setmetatable(c, b)                               !\label{code:newset}!
b.__gc = function () print("goodbye") end
c = nil
collectgarbage()      --now it outputs 'goodbye'  !\label{code:goodbye}!
b.__gc = "not a function"   !\label{code:notafun}!
local d = {}
setmetatable(d, b)
d = nil
collectgarbage()               --nothing happens  !\label{code:nofail}!
\end{lstlisting}
\vspace{-1em}
\caption{Setting up a finalizer.}\label{fig:finalizer}
\end{figure}

The finalizer of an object (table or userdata) is a function stored in
the object's metatable, associated with the key
\textsf{``\_\_gc''}. For \emph{finalization} to occur (\ie the
execution of the finalizer) the key must be defined the first time the
corresponding metatable is set.  In that case, it is said that the
given object is \textit{marked} for finalization. Later definitions of
the \textsf{\_\_gc} field will not be considered. The code shown in
\fref{finalizer} shows this behavior: when \textsf{a} is set an empty
metatable (\textsf{b} in Line~\ref{code:setmeta}), even if later on
\textsf{\_\_gc} is defined (Line~\ref{code:__gc}), when \textsf{a} is
garbage collected (Line~\ref{code:collect}), no output is
produced. But now that \textsf{b} has the \textsf{\_\_gc} field
defined, when it is set as a metatable of a new object
(Line~\ref{code:newset}), this object is correctly marked for
finalization (Line~\ref{code:goodbye}). Also, if the value set in the
field \textsf{\_\_gc} is not a function, GC will simply silently
ignore the error (lines~\ref{code:notafun} to~\ref{code:nofail}). As
a last remark, the last finalizer set, assuming it is a function, is
the one called when the object is disposed.

\paragraph{Execution order of finalizers}
\begin{figure}
\begin{lstlisting}
local a, b = {}, {}
local c = {__gc = function (o) print("bye", o) end}
print(a, b)          --table: 0x56..00 table: 0x56..40
setmetatable(a, c)!\label{code:seta}!
setmetatable(b, c)!\label{code:setb}!
a, b = nil, nil
collectgarbage()
      --bye table: 0x56..40 (b) bye table: 0x56..00 (a) !\label{code:byebye}!
\end{lstlisting}
\vspace{-1em}
\caption{Chronological order of execution of finalizers.}
\label{fig:finalizer:order}
\end{figure}

The execution order of finalizers is chronologically inverse to the
time of the definition of the finalizers. This behavior is explained
in \fref{finalizer:order}. This code performs the following steps: 1)
creates two tables, \textsf{a} and \textsf{b}; 2) sets a
metatable \textsf{c} to these objects containing a finalizer that prints the object being
finalized, first for \textsf{a} and then for \textsf{b}; 3) eliminates
any reference to \textsf{a} and \textsf{b}; and 4) invokes the garbage
collector. As you can see from the output (Line~\ref{code:byebye}),
the order in which the metatable is set affects the order in which the
finalizers are called. While not shown in the code, if we swap
lines~\ref{code:seta} and~\ref{code:setb}, the result will also be
swapped.


\paragraph{Resurrection.}

During finalization of a given object, its location is passed to the finalizer,
turning the object reachable again. This phenomenon is commonly known as 
\textit{resurrection}, and is normally transient.
Then, there exist the possibility that the user code of the finalizer makes 
permanent the resurrection, by creating an external reference to the object, 
turning it reachable again even after finalization, preventing it from being 
collected. 

This possibility introduces problems \cite{fincolint} into the 
implementation of garbage collectors, reduces their 
effectiveness to reclaim memory unused by the program and could reintroduce 
into the program objects that do not satisfy representation invariants.

To mitigate this issue, Lua treats finalized objects specially: it does not allow for a 
finalized object to be marked again for finalization.
In that way, the finalizer of an object will never be called twice, avoiding 
indestructible objects. The object will be destroyed once it becomes unreachable 
again.
This is the only difference of a finalized object: it is still possible to set a new metatable and to configure the 
resurrected objects' behavior using every metamethod \emph{but} ``\textsf{\_\_gc}''.

\paragraph{Error handling.}

During execution of a program, any error in a finalizer is 
propagated to the main thread of execution. Because finalizers are interleaved 
with user code, any error thrown from a finalizer appears in a position in 
the program that cannot be determined in advance. If that position happens to 
be inside a function that was called in \emph{protected mode} ---like a
\textsf{try} in other languages--- then the error is caught.

When a program ends normally, Lua executes each finalizer of the
remaining objects in protected mode.  In that circumstance, any error
occurred during the execution of a given finalizer, interrupts only
that finalizer, allowing for the call of the remaining
finalizers. Also, a finalizer ended by an erroneous situation does not
prevent the corresponding object from being disposed.


\subsubsection{Modeling finalizers.}
\label{sec:mod_finalizers}

We extend the model to include finalizers in two steps: first we update the
internal representation for tables presented in \sref{model} to add information about
finalizers; then, we modify the GC model introduced in \ref{sec:gc:simple}, to 
be aware of the finalization mechanism. 

\paragraph{Representation of tables.}

We extend the tuple for representing a table with a third field, obtaining 
\textsf{(table, metatable, pos)}. The new field \textsf{pos} has three 
different possible values: if it is $\bot$, it means that there is no finalizer 
set for the table; if it is $\oslash$, it means that the table cannot be set 
for finalization (to avoid multiple resurrections);
and if it 
is a value $p$, of a set of values $\mathcal{P}$ ordered by a given order
\finlt, it means the finalizer is set, with priority $p$, according to  
\finlt. Initially, 
\textsf{pos} will be $\bot$, as shown in the first rule of \fref{expr_theta_stores_obj_creat}.
We present its semantics (and the remaining computation rules for finalization), 
with a new relation, $\finstep$.

\begin{figure}
\begin{mathpar}	
  \inference{\mathsf{\forall\;1 \leq i, \Var{field}_i = \Var{v} \vee
      \Var{field}_i = [\;\Var{v}\;]\; =\;\Var{v}'}\\
    \mathsf{\theta_2 = (\;\Var{tid}, \;
                       (\;addkeys(\{\Var{field}_1 ,\;... \})\;, 
                       \;\nil\colorbox{lightgray}{$\mathsf{\!\!,\;\bot}$})),\;
                       \theta_1}}
  {\mathsf{\theta_1\;:\;\{ \Var{field}_1, \;... \}
      \;\finstep\;\theta_2\;:\;\Var{tid}}}
\and
  \inference{\mathsf{\delta(type, \Var{v}) \in \{``table", ``nil" \}}\\
    \mathsf{indexmetatable(\Var{tid} , ``\_\_metatable", \theta_1) = \nil}\\
    \mathsf{\theta_2 =
      \theta_1[\Var{tid} := (\pi_1(\theta_1(\Var{tid})), \Var{v}
      \colorbox{lightgray}{$\mathsf{\!, set\_fin(\Var{tid}, \Var{v}, \theta_1)\!}$})]}}
  {\mathsf{\theta_1 \; \textbf{:}
      \;\setmetatable(\Var{tid}, \Var{v})
      \finstep
      \;\theta_2\; \textbf{:}\; \Var{tid}}
  }
\end{mathpar}
\vspace{-1em}
\caption{Selected rules extended with finalization.}
\flabel{expr_theta_stores_obj_creat}
\end{figure}

As mentioned, \finlt{} is defined chronologically by 
the moment in which an object has been marked for finalization. For our semantics, it suffices to have a 
function \textsf{next} with signature $\mathcal{P} \rightarrow \mathcal{P}$, 
which should provide an element of $\mathcal{P}$ larger than its argument. We 
will also need $\bot \in \mathcal{P}$, and to be minimum with respect to 
\finlt. When a metatable is set with the corresponding call to 
\setmetatable{} (second rule of \fref{expr_theta_stores_obj_creat}), we  use a helper function \textsf{set\_fin} to compute  the corresponding value of \textsf{pos}. 

\newcommand{\interspc}{0em}

\begin{figure}
\begin{align}
\mathsf{set\_fin(\Var{tid}, \Var{v}, \theta)} &
= \oslash, \hspace{30pt} \textit{if} \; \mathsf{\pi_3(\theta(\Var{tid})) = \oslash}\\[\interspc]
\mathsf{set\_fin(\Var{tid}, \nil, \theta)} &= 
\bot, \hspace{30pt} \textit{if} \; \mathsf{\pi_3(\theta(\Var{tid})) \neq \oslash}\\[\interspc]
\mathsf{set\_fin(\Var{tid}, \Var{tid}_{m}, \theta)} &
= \pi_3(\theta(\Var{tid})), \hfill \textit{if} \; \left\{ 
  \begin{array}{ll}
  \mathsf{\pi_2(\theta(\Var{tid})) = \Var{tid}_{m}}\\
  \mathsf{\pi_3(\theta(\Var{tid})) \neq \oslash}
  \end{array}  \right.\\[\interspc]
\mathsf{set\_fin(\Var{tid}, \Var{tid}_{m}, \theta)} &
= \bot, \hspace{29pt} \textit{if} \; \left\{ 
  \begin{array}{ll}
  \text{``\_\_gc''} \notin \mathsf{\pi_1(\theta(\Var{tid}_{m}))}\\
  \mathsf{\pi_2(\theta(\Var{tid})) \neq \Var{tid}_{m}}\\
  \mathsf{\pi_3(\theta(\Var{tid})) \neq \oslash}
  \end{array}  \right.\\[\interspc]
\mathsf{set\_fin(\Var{tid}, \Var{tid}_{m}, \theta)} &
= \mathsf{next}(p), \hspace{9pt} \textit{if} \; \left\{ 
  \begin{array}{ll}
  \text{``\_\_gc''} \in \mathsf{\pi_1(\theta(\Var{tid}_{m}))}\\
  \mathsf{\pi_2(\theta(\Var{tid})) \neq \Var{tid}_{m}}\\
  \mathsf{\pi_3(\theta(\Var{tid})) \neq \oslash}
  \end{array}  \right.
\end{align}
$$\text{where} \;
p = \mathsf{max^{\finlt}(filter(map(\pi_3, img(\theta)), 
                       \lambda \; pos. pos \neq \oslash))}
$$
\caption{Function \textsf{set\_fin} for computing the \textsf{pos} field.}\label{fig:setfin}
\end{figure}

\fref{setfin} shows the \textsf{set\_fin} function, which takes two
tables (a table identifier \Var{tid} and the proposed metatable) and a
store $\theta$. The metatable is another table identifier
$\Var{tid}_\mathsf{m}$ or \nil, and returns the new \textsf{pos} value. 
The first
equation shows the main use of the $\oslash$: no matter what is the
value of the metatable, if the previous \textsf{pos} field of the
table contains an $\oslash$, then it returns $\oslash$ to ensure no
finalization can happen again on \Var{tid}. The second equation
specifies one of the situations when a given table is unmarked for
finalization: if the metatable is \nil, and the previous value of
\textsf{pos} is not $\oslash$, then it returns $\bot$. The third
equation considers the case when the same metatable is set, in which
case the \textsf{pos} field remain unchanged (we use the bracket to
mean that every condition must apply). The fourth equation considers
the case when the metatable does not contain the \textsf{``\_\_gc''}
metamethod: it is unmarked for finalization ($\bot$). In the last
equation \textsf{set\_fin} returns the \textsf{next} value of the
maximum of every \textsf{pos} in $\theta$, if the metatable contains
the metamethod \textsf{``\_\_gc''}.

\paragraph{Specification of GC with finalization.}
\begin{figure}
\begin{tabbing}
$\mathsf{gc_{fin}(\Var{s},\sigma,\theta) = (\sigma_1,\theta_{1}',\Var{t})}, \mathit{where}$\\
$\mathit{gc}
 \left\{ 
   \begin{array}{lcl}
     \mathsf{\sigma = \sigma_1 \uplus \sigma_2} \\
     \mathsf{\theta = \theta_1 \uplus \theta_2} \\
     \mathsf{\forall l \in dom(\sigma_2) \cup dom(\theta_2),
             \neg reach(l, \Var{s}, \sigma, \theta)}
   \end{array}
 \right.$\\ 

$\mathit{fin}
 \left\{ 
   \begin{array}{lcl}
     \mathsf{\forall \Var{tid} \in dom(\theta_2),}\\
     \quad \mathsf{\neg marked(\Var{tid}, \theta_2)}\\
     \mathsf{\forall \Var{l} \in dom(\sigma_2) \cup dom(\theta_2),}\\
     \quad \mathsf{not\_reach\_fin(\Var{l}, \sigma_1, \theta_1)}\\
     \small[ \mathsf{\exists \Var{tid} \in \dom{\theta_1}},\\
     \quad \mathsf{fin(\Var{tid}, \Var{s}, \sigma, \theta)}\\ 
     \quad\mathsf{next\_fin(\Var{tid}, \Var{s}, \sigma, \theta)}\\
     \quad\mathsf{\Var{v} = indexmetatable(\Var{tid}, ``\_\_gc", \theta_1)}\\
     \quad \mathsf{\Var{v} \in \NT{cid} \Rightarrow 
                   \Var{t} = \Var{v}(\Var{tid})}\\
     \quad \mathsf{\Var{v} \notin \NT{cid} \Rightarrow \Var{t} = \nil}\\
     \quad\mathsf{\theta'_{1} = \theta_1[\Var{tid} := 
                               (\pi_1(\theta_1(\Var{tid})), 
                                \pi_2(\theta_1(\Var{tid})),
                                \oslash)]}\\
   \text{or:}\\
     \quad \Var{t} = \nil\\
     \quad \theta_1' = \theta_1\small]
   \end{array}
 \right.$
\end{tabbing}
\vspace{-1em}
  \caption{GC cycle with finalization.}
  \label{fig:gcfin}
\end{figure}

\begin{figure}
  \begin{tabbing}
    $\mathsf{marked(\Var{tid}, \theta) \doteq 
   \pi_3 (\theta(\Var{tid})) \notin \{ \bot, \oslash \}}$\\[.3em]
    $\mathsf{not\_reach\_fin(\Var{l}, \sigma, \theta) \; \doteq}$ \= 
           $\mathsf{\not \exists \; \Var{tid} \in dom(\theta), \Var{l} \neq \Var{tid} \; \wedge}$\\
           \> $\mathsf{marked(\Var{tid}, \theta) \; \wedge \;
           reach(\Var{l}, \Var{tid}, \sigma, \theta)}$\\
    $\mathsf{fin(\Var{tid}, \Var{s}, \sigma, \theta)}$ \=
           $\mathsf{\doteq
             \neg reach(\Var{tid}, \Var{s}, \sigma, \theta) \wedge
             marked(\Var{tid}, \theta)}$\\[.3em]
    $\mathsf{next\_fin(\Var{tid}, \Var{s}, \sigma, \theta)}$ \= 
           $\mathsf{\doteq}$ \= 
           $\mathsf{\forall \Var{tid'} \in dom(\theta)}$,\\
           \>\> $\mathsf{fin(\Var{tid'}, \Var{s}, \sigma, \theta)
     \Rightarrow
             \pi_3(\theta(\Var{tid'})) \finle \pi_3(\theta(\Var{tid}))}$
  \end{tabbing}
\vspace{-1em}
  \caption{Predicates for finalization.}
  \label{fig:finalization_predicates}
\end{figure}

We enrich the specification of GC from \sref{gc:simple} to make it
aware of finalization (figures~\ref{fig:gcfin}
and~\ref{fig:finalization_predicates}). The new predicate,
$\mathsf{gc_{fin}}$, returns two stores $\sigma_1$ and $\theta'_1$,
and a term \Var{t}, the finalizer to be executed if appropriate. The
first part of the predicate (\textit{gc}) replicates the {\sf gc} predicate from \sref{gc:simple}, and states that we can split
the stores into two disjoint parts, the ones to be discarded
($\sigma_2$ and $\theta_2$) and the rest ($\sigma_1$ and
$\theta_1$). But now the partitions have additional conditions
(\textit{fin}): first, every discarded table \Var{tid} in $\theta_2$ must not be
marked for finalization, otherwise we will lose a call to a
finalizer. Second, we ask that every location from the removed stores
is not reachable from the stores that are kept ($\sigma_1$ and
$\theta_1$). 

The previous conditions ensure that $\theta_2$ only contain tables already finalized or not
set for finalization, and avoids potential dangling pointer
errors when executing a finalizer. The following conditions 
characterize the next table to be finalized. If there exists a
\Var{tid} in $\theta_1$ such that it is finalizable and the next in 
the order \finle{} (as expressed by the predicates {\sf fin} and 
{\sf next\_fin}), and has a proper finalizer set (a function \Var{v} in its 
``\textsf{\_\_gc}'' field), then the next statement to be executed is \Var{v} 
applied to the table identifier (transiently resurrecting the table), and the 
new table store $\theta_1'$ is the same as $\theta_1$, except that \Var{tid} 
 is forbidden to be marked again for finalization (by setting its 
\textsf{pos} field to $\oslash$), therefore avoiding more than one 
resurrection of the table. Note that \Var{tid} is still in the returned 
$\theta_1'$, otherwise it could not be made accessible to the finalizer. In 
our model, the table is actually collected in another GC cycle, as we cannot 
know before hand if it will be resurrected or not by its finalizer.

In case there is no table with a valid finalizer, then \Var{t} is \nil{} and
$\theta_1'$ is just $\theta_1$.



\paragraph{Interleaving finalization with the user program.}

\begin{figure}
  \begin{mathpar}
  \inference
  {\mathsf{(\sigma', \theta', \Var{v} (\Var{tid})) =
      gc(\sigma, \theta, E [\![ s ]\!])}
    }
  {\mathsf{\sigma : \theta : E [\![ s ]\!]\;
      \finstep
      \sigma' : \theta' : E [\![\; \Var{v} (\Var{tid}) ; s ]\!]\;
    }}
  
  \inference
  {\mathsf{(\sigma', \theta', \Var{v} (\Var{tid})) =
      gc(\sigma, \theta, E [\![ e ]\!])}
  }
  {\mathsf{\sigma : \theta : E [\![ e ]\!]\;
      \finstep
      \sigma' : \theta' : E [\![\;
      \KW{function} \; \$ \; () \;
      \KW{return} \; e \; \KW{end} \; (\Var{v} (\Var{tid}))]\!]\;
    }}

  \inference
  {\mathsf{(\sigma', \theta', \nil) =
      gc(\sigma, \theta, \Var{s})}\\
      \mathsf{\sigma' \neq \sigma \vee \theta' \neq \theta}
  }
  {\mathsf{\sigma : \theta : \Var{s}
      \finstep
      \sigma' : \theta' : \Var{s}
    }}
\end{mathpar}
\vspace{-1em}
  \caption{Interleaving the execution \\of finalizers with the program.}
  \label{fig:interleaving}
\end{figure}

From the definition of $\mathsf{gc_{fin}}$ given above, it is clear that a single GC 
cycle encompasses collection of garbage together with at most one call to a  
finalizer. The reasons are two-fold: first, the small-step fashion of our 
semantics, and the interleaved execution of finalizers with the user's program. However, 
this does not prevent the execution of more than one finalizer before the 
execution of the next user program's instruction, given the non-determinism of the
execution rules for GC.

What remains to specify is how finalization is actually interleaved with the
user program. This is stated by the rules in \fref{interleaving}.
We allow for the possibility of interleaving the finalization step with any
statement or expression to be executed. The first case can be expressed
directly, as shown in the first rule. Interleaving it with an expression, shown
in the second rule, requires some more work, since we cannot express directly
the concatenation of expressions. In that case, we reduce the desired execution
order of expressions to the one defined for function call. 

Finally, if no finalizer is chosen (third rule), as before, we ask for some of the stores
returned to be modified in order disallow infinite sequences of GC steps.

\subsection{Weak tables}
\label{sec:weak_tables}


A \emph{weak table} is a table whose keys and/or values are referred by 
\emph{weak references}: references which are not taken into account by the 
garbage collector when determining reachability. In Lua, among the types 
included into our model, only tables and closures can be garbage collected from 
weak tables, the general rule being 
that \textit{``only objects that have an explicit construction are removed from 
weak tables''} \citep[\S2.5.2 of][]{lua-refman}.

In order to specify a table's weakness, the user adds in the table's metatable 
the key \textsf{``\_\_mode''} with a string value containing the characters 
\textsf{`k'} (for keys to be referred by weak references) and/or \textsf{`v'}
(for values to be referred by weak references) .

\paragraph{Introducing weak tables into the model.}

To model weak tables we do not introduce weak references
explicitly. Instead, we modify the criterion used to determine the
reachability of a given reference to consider its occurrences on weak
tables, according to the tables' weakness. Key to the new definition of GC cycle is a new predicate {\sf reachCte} that allows us to consider the reachability of a \emph{collectible table element} (\NT{cte}), which is an element of the set with the same name formed from the union of table
and closures identifiers.

\paragraph{Reachability of a \NT{cte}.}

We distinguish two situations with respect to the reachability of a 
\NT{cte}: either there is a path from the root set of references to the value 
itself using just \emph{strong} references (non-weak references), or every 
path to the value from the root set contains at least one weak reference. In the first case the value will not be garbage collected, and we 
refer to such value as \emph{strongly reachable}. In the second case the value can be GC.

In order to distinguish these cases, we define what are a table's \emph{strong occurrences} (\fref{so}): the keys
and/or values of a table (limited to \NT{cte}s) that are not referred
by weak references.
\begin{figure}
  \begin{tabbing}
  $\mathsf{SO(\Var{tid}, \theta)}$ \= =
  $ \left\{ 
    \arraycolsep=0pt\def\arraystretch{0.8}
    \begin{array}{lcl}
              \setlength{\tabcolsep}{12cm}
              \mathsf{\{ \Var{k}_i | \Var{k}_i \in (\{ \Var{k}_1,... \}
              \cap \NT{cte}) \}}
              & \textit{if} & \mathsf{wv?(\Var{tid}, \theta)}\\
              &&     \;\;\wedge\; \mathsf{\neg wk?(\Var{tid}, \theta)}\\[0.3em]
              \mathsf{\{ \Var{v} | \Var{v} \in \{ \Var{k}_1, \Var{v}_1 ,... \} 
              \cap \NT{cte} \}}
              
              & \textit{if} & \mathsf{\neg (wv?(\Var{tid}, \theta)}\\
              &&     \;\;\;\vee\; \mathsf{wk?(\Var{tid}, \theta))}\\[.3em]
              \mathsf{\{ (\Var{k}_i, \Var{v}_i) | \Var{v}_i 
              \in \{ \Var{v}_1,... \}
              \cap \NT{cte} \}} 
              & \textit{if} & \mathsf{\neg wv?(\Var{tid}, \theta)}\\
              &&     \;\;\wedge\; \mathsf{wk?(\Var{tid}, \theta)}\\[.3em]
              
              \mathsf{\emptyset}
              & \qquad & \hspace{-12pt}\textit{otherwise}
            \end{array}
          \right.$\\[.3em]
          $\text{where} \;
          \mathsf{\pi_1(\theta(\Var{tid})) = 
            \{ [\Var{k}_1] = \Var{v}_1,... \}}$
\end{tabbing}
\vspace{-1em}
  \caption{Strong occurrences of a table.}
  \label{fig:so}
\end{figure}
If a given table has weak values then just its keys' occurrences are
considered strong (predicates \textsf{wk?} and \textsf{wv?}, elided
for brevity, allow us to know the weakness of a given table). The
second and fourth cases can be explained on the same basis. The third
case, weak keys and strong values, has to do with what is known as an
\emph{ephemeron} table, which is treated in a special way by the
garbage collector, in order to avoid the problems that arise with
cycles into a weak table (\eg values referring to their own keys),
which could prevent them from proper GC, or between weak tables with
this level of weakness, which could delay GC (see \cite{ecwt} for an
analysis of the problem from Lua's perspective). In an ephemeron table, an occurrence of a
value from \NT{cte} as the value of a table field is considered strong
just if its associated key is still strongly reachable. Because this
is not a property that can be determined locally, by just looking at
the table being inspected, we return each key-value pair.


\begin{figure}
  \begin{tabbing}
    $\mathsf{eph(\Var{id}, (\Var{k}, \Var{v}), \Var{tid}, \Var{rt}, \sigma, 
             \theta)}$ = \=
           $\mathsf{reachCte(\Var{id}, \Var{v}, \sigma, \theta, \Var{rt}) \; \wedge}$\\
    \>$\mathsf{[\Var{k} \notin \Var{cte} \vee
             reachCte(\Var{k}, \Var{rt}, \sigma, 
             \left.\theta\right|_{\left.\Var{tid}\right|_{\Var{k}}}, \Var{rt})]}$ 
  \end{tabbing}

\begin{tabbing}
    $\mathsf{reachTable}$\=$\mathsf{(\Var{id}, \Var{tid}, \sigma, \theta, \Var{rt})}$ =\\ 
    \>$\mathsf{[\exists}$
    $\mathsf{(\Var{k}, \Var{v}) \in SO(\Var{tid}, \theta),}$
     $\mathsf{
      eph(\Var{id}, (\Var{k}, \Var{v}), \Var{tid}, \Var{rt}, \sigma, \theta)]    
      \; \vee}$\\
    \> $\mathsf{[\exists}$
       $\mathsf{\Var{v} \in SO(\Var{tid}, \theta),}$
       $\mathsf{reachCte(\Var{id}, \Var{v}, \sigma, \theta, \Var{rt})] \; 
                \vee}$\\
    \>$\mathsf{reachCte(\Var{id}, \pi_2(\theta(\Var{tid})), \sigma, 
                          \theta, \Var{rt})}$
\end{tabbing}

\begin{tabbing}
    $\mathsf{reachCte(\Var{id}, \Var{t}, \sigma, \theta, \Var{rt})}$ = \= 
    $\mathsf{\Var{id} \in \Var{t} \; \vee}$\\
    \> $\mathsf{\exists \; \Var{r} \in \Var{t}, \; 
      reachCte(\Var{id}, \sigma(\Var{r}), \sigma, \theta, \Var{rt})\; \vee}$\\
    \> $\mathsf{\exists}$ \=
    $\mathsf{\Var{tid} \in \Var{t},
      reachTable(\Var{id}, \Var{tid}, \sigma, \theta, \Var{rt})\; \vee}$\\
    
    \> $\mathsf{\exists \; \Var{cid} \in \Var{t},
      reachCte(\Var{id}, \theta(\Var{cid}), \sigma, \theta, \Var{rt})}$
\end{tabbing}
\vspace{-1em}  
  \caption{Reachability of a collectible element.}\label{fig:reachCte}
\end{figure}

Before presenting the predicate $\mathsf{reachCte}$, we introduce the predicate {\sf reachTable} (shown in \fref{reachCte}), 
which expands the reachability tree for a table \Var{tid}, when 
determining reachability for an identifier \Var{id} with respect
to a term \Var{rt}, and stores $\sigma$ and $\theta$.
%
%
We first check for reachability following references from a table
\Var{tid}, when it happens to be an ephemeron, as specified by the
predicate {\sf eph}.  This predicate says that $\Var{id}$ is reachable
from the value of a field (\Var{k}, $\Var{v}$) of an ephemeron table $\Var{tid}$
\Iff{} it is strongly reachable from $\Var{v}$, according to {\sf
  reachCte}, and the key \Var{k} cannot be GC, \ie is not a member of
\Var{cte} or it is reachable from the root set of references from
\Var{rt}: \ie the reachability of the value is affected by the reachability of
its key. In doing so, we must not take into account
$\mathsf{\Var{v}}$ to allow the collection of a field where the only
reference to the key comes from the value.  We use the notation
$\left.\theta\right|_{\left.\Var{tid}\right|_{\Var{k}}}$ to denote the
resulting store from removing the field with key \Var{k} from the
table \Var{tid}.

If the table is not an ephemeron, we just need to
consider each strong occurrence of a \NT{cte} present into the table, as defined
by \textsf{SO}.
Finally, for any table found during the expansion of the reachability tree, we
also need to look into its metatable, as it was the case when defining the 
predicate \textsf{reach}, in \sref{gc:simple}.






We now turn to the definition of {\sf reachCte} (also in \fref{reachCte}), which will have almost the same
signature as {\sf reach}, except for the addition of the term from which the 
root set of references is determined for the case of ephemeron tables.
As an aside, while it is possible to give a primitive or well-founded recursive definition, 
it would require cumbersome expressions for the recursive calls over stores of 
decreasing size. Instead, we followed~\cite{fsf} and defined the desired 
predicate as the least fixed point that satisfies the previous equation.

 
The predicate is defined assuming that the mere occurrence of a \Var{cte} into
\Var{t} implies that such value is strongly reachable. Recursive cases are 
defined such that they maintain this property of \Var{t}. The second disjunct 
dereferences references to values found into the term 
\Var{t}. Next we expand the reachability tree by following tables, as expressed
by {\sf reachTable}. The last disjunct checks into the environment of the closures
found during expansion, as in
Definition~\ref{Reachability Simple GC}.

\paragraph{GC cycle.}

Note that by enriching the notion of reachability with weak references, it could
be possible for the garbage collector to remove the binding of a table 
or closure identifier which is not strongly reachable but that is still 
present into a reachable weak table. This, of course, would generate dangling 
pointers if the program tries to dereference such identifiers through the weak
table.

If we forget about finalizers, we avoid such problems by simply replacing the predicate {\sf reach} in Definition~\ref{Simple GC cycle} of {\sf gc} with the new predicate {\sf reachCte}.
However, when considering finalizer, special care must be taken. We
therefore introduce a new predicate $\mathsf{gc_{fin\_weak}}$
(\fref{gcweak}), which is based on a modified $\sf{gc_{fin}}$ predicate. In
concrete, the new new predicate $\sf{gc'_{fin}}$ is a verbatim copy
of $\sf{gc_{fin}}$ but with the following changes:
\begin{enumerate}
\item We replace {\sf reach} with {\sf reachCte} in the {\sf fin}
  predicate.
\item We prevent for finalization to occur on a table that is also
  present as a value from a weak table by adding the following
  predicate:
\begin{align*}
\mathsf{not\_fin\_val(\Var{tid}, \theta)} & \doteq 
  \mathsf{\nexists \Var{tid'} \in \dom \theta, \Var{k} \in \NT{v} /}\\
  &\mathsf{(wk?(\Var{tid'}, \theta) \vee 
                           wv?(\Var{tid'}, \theta))} \wedge 
  \mathsf{(\Var{k}, \Var{tid}) \in \pi_1(\theta(\Var{tid'}))}  
\end{align*}
\item We remove the \emph{fin} portion of the predicate to let the new $\mathsf{gc_{fin\_weak}}$ predicate take care of marking the table with $\oslash$.
\end{enumerate}

\begin{figure}
\flushleft
$\mathsf{gc_{fin\_weak}(\Var{s},\sigma,\theta) = 
(\sigma_1',\theta_{1}''',\Var{s'})}$,
where 
$\mathsf{(\sigma_1', \theta_1', \Var{s'}) =  
gc'_{fin}(\Var{s}, \sigma, \theta)}$,
and:\\
$\mathit{wt}
 \left\{ 
   \begin{array}{lcl}
     \mathsf{\exists \theta_1'', dom(\theta_1'') = dom(\theta_1')}\\[.5em]
     
     \mathsf{\forall \Var{tid} \in dom(\theta_1''), 
                     \theta_1''(\Var{tid}) = \theta_1'(\Var{tid}) \; \vee}\\
     \mathsf{\;[\pi_1(\theta_1''(\Var{tid})) \subset 
               \pi_1(\theta_1'(\Var{tid}))\; 
     \wedge}\\
     \mathsf{\;\;\exists (k, v) \in \pi_1(\theta_1'(tid)),
                 (k, v) \notin \pi_1(\theta_1''(\Var{tid}))/}\\[.3em]
     \;\;\;\;\;\mathit{reach}
      \left\{ 
       \begin{array}{cc}
         \mathsf{wk?(\Var{tid}, \theta) \; \wedge \; \Var{k} \in \NT{cte} \; \wedge \; \neg reachCte(k, \Var{s}, \sigma, \theta, \Var{s})}\\
                 \vee\\ 
         \mathsf{wv?(\Var{tid}, \theta) \; \wedge \; \Var{v} \in \NT{cte} \; \wedge \; \neg reachCte(v, \Var{s}, \sigma, \theta, \Var{s})}\\
         \end{array}
      \right.\\[1.5em]
     
     \;\;\;\;\;\mathit{fin\;key}
      \left\{ 
       \begin{array}{lcl}
         \mathsf{wk?(\Var{tid}, \theta) \Rightarrow \neg marked(k, \theta)}\\
         \end{array}
          \right.\\[.5em]
     \;\;\;\;\;\mathit{rem}
      \left\{ 
       \begin{array}{lcl}
         \mathsf{\pi_2(\theta_1''(\Var{tid})) = \pi_2(\theta_1'(\Var{tid}))}\\ 
         \mathsf{\pi_3(\theta_1''(\Var{tid})) = \pi_3(\theta_1'(\Var{tid}))\quad]}
         \end{array}
     \right.
            \end{array}
          \right.$~\\~\\[1em]
$\mathit{fin}
\mathsf{
 \left\{ 
   \begin{array}{l}
     \mathsf{
     \theta'''_{1} = \theta''_1[\Var{tid} := 
     (\pi_1(\theta''_1(\Var{tid})), 
     \pi_2(\theta''_1(\Var{tid})),
     \oslash)]},
      \textit{if}\;\,
     \mathsf{\Var{s'} =  \Var{v}(\Var{tid}))}\\
     \qquad \vee \\
     \mathsf{\theta'''_{1} = \theta''_1,}
     \; \textit{if}\;
     \mathsf{\Var{s'} = \nil }
   \end{array}
 \right.}
 $
  \caption{GC cycle extended with weak tables.}
  \label{fig:gcweak}
\end{figure}

Essentially, after obtaining a new $\sf{\theta'_1}$ from ${\sf gc'_{fin}}$, the returned object store $\sf{\theta'''_1}$ might have a few discrepancies from that of $\sf{\theta'_1}$, since GC may remove fields of tables when
their keys or values are not strongly reachable.

More concretely, $\sf \theta'''_1$ is the store obtained from updating
$\sf \theta''_1$ after marking with $\oslash$ the table being
finalized, if applicable (\emph{fin}). And $\sf \theta''_1$ is
obtained from $\sf \theta'_1$ after noting that they have the same
domain (table ids), and for every table \Var{tid}, they either have
the same definition or the table in $\sf \theta'_1$ has a field
(\Var{k}, \Var{v}) that is not present in $\sf \theta'''_1$ and:
  \begin{itemize}[wide,labelindent=0pt,leftmargin=\parindent]
  \item[{\it reach}:] The field has a not strongly reachable key or
    value, depending on the table weakness. Note that we pass \Var{s}
    as the last argument of {\sf reachCte}, to preserve it as the root
    set of references from which any new expansion of the reachability
    tree must begin.
  \item[{\it fin\; key}:] In the case of weak keys susceptible for
    finalization, they are removed only after they are finalized. This
    restriction allows for a finalizer of a weak key to access any
    information associated with that key.
  \item[{\it rem}:] The remaining components of the internal
    representation of tables are not altered.
  \end{itemize}





Finally, there is no need for the redefinition of the GC step: the details of GC
of weak tables are all abstracted into the $\mathsf{gc_{fin\_weak}}$ metafunction, 
and its interference with the execution of the program does not differ from what 
regular GC does.



\section{Properties of GC}
\label{sec:properties_gc}
In this section we present the formal framework used to study
properties of our specification of GC. We conclude this section with
an important theorem about Correctness of GC, and in the way we
provide the necessary tools required to discuss about
non-deterministic computations; which form the foundation stone of
\gctool~(\sref{gctool}).




\subsection{Result of a program}

\begin{figure}
\begin{tabbing}
  $\mathsf{result}$\=$\mathsf{(\sigma : \theta : 
           \Var{E}\;[\![\KW{return} \; \Var{v}_1,...,\Var{v}_n]\!])}$ \=
  $\mathsf{= 
    \left.\sigma\right|_{S} :
    \left.\theta\right|_{T} : \KW{return} \; \Var{v}_1,...,\Var{v}_n}$\\[.2em]
  \quad $\mathit{where} \left\{ 
    \begin{array}{ll}
      \mbox{\Var{E} does not contain a return point}\\[.2em]
      \mathsf{S = \bigcup\limits_{\Var{r} \,\in\, \dom \sigma, R(\Var{r})} r}
      \\[.2em]
      \mathsf{T = \bigcup\limits_{\Var{id} \,\in\, \dom \theta, R(\Var{id})} \Var{id}}\\[.2em]
      \mathsf{R(i) = reach(i, \KW{return} \; \Var{v}_1,...,\Var{v}_n,
                                        \sigma, \theta)}
    \end{array}
  \right.$\\[.3em]

  $\mathsf{result(\sigma : \theta : \KW{\$err} \; \Var{v})}$ \>\>
  $\mathsf{= \left.\sigma\right|_{S} :
    \left.\theta\right|_{T} : \KW{\$err} \; \Var{v}}$\\
  \>$\mathit{where}~{\sf S}~\mathit{and}~{\sf T}~\mathit{are~defined~as~before,~but~with}$\\
  \> $\mathsf{R(i) = reach(i, \KW{\$err} \; \Var{v},
                                        \sigma, \theta)}$\\[.3em]

    $\mathsf{result(\sigma : \theta : \KW{;})}$ \>\>
    $\mathsf{= \emptyset : \emptyset : \KW{;}}$
  \end{tabbing}
\caption{Result of a program and associated functions.}
\flabel{result_function}
\end{figure}



We start by defining the notion of \emph{result} of a Lua
program. Essentially, it consists of the term from the last
configuration of its convergent computation, together with the
information from the stores needed to give meaning to the term's free
variables. That is, we strip off from the stores any information
irrelevant to the final computation of the program.

To capture the previous idea we use a function, {\sf result} 
(\fref{result_function}), that given a final configuration of a program it 
extracts the required information from the stores to explain the result 
represented by said configuration. In order to understand the different cases 
considered by the function, we must state a standard corollary of the progress 
property for our semantics, which explains the expected final configurations for 
$\luastep$ (that is, Lua without GC):
\begin{corollary}[Corollary of progress]
\label{progress_prop}
For every well formed configuration $\sigma : \theta : \Var{s}$,
just one of the following situations hold:
  \begin{itemize}
  \item The execution diverges, denoted $\sigma : \theta : s \; \Uparrow$.

  \item The execution ends with an error $\KW{error} \; \Var{v}$, and
    stores $\sigma'$ and $\theta'$, denoted 
    $\sigma : \theta : \Var{s} \; \Downarrow \sigma' : \theta' : 
     \KW{error} \; \Var{v}$.

  \item The execution ends normally, with stores $\sigma'$ and $\theta'$,
    and some values $\Var{v},...$ are returned:
    \begin{quote}
      $\sigma : \theta : s \; \Downarrow \sigma' : \theta' : 
      \Var{E} \; [\![\KW{return} \; \Var{v},...]\!]$, where \Var{E} does not
      contain the point to which the \KW{return} statement must jump.  
    \end{quote}

  \item The execution ends normally, with stores $\sigma'$ and $\theta'$, and 
    no value is returned:
    \begin{quote}
      $\sigma : \theta : s \; \Downarrow \sigma' : \theta' : \T{;}$
    \end{quote}
  \end{itemize}
  Where it corresponds, the resulting configuration is also well formed.
\end{corollary}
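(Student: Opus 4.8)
The plan is to obtain this corollary from the standard \emph{progress} property for $\luastep$, which (following the development of~\cite{dls}, here extended with the objects store $\theta$ and closure references) guarantees that every well-formed configuration is either in a normal form or admits a $\luastep$ reduction. The first thing to make precise is that the four clauses classify the fate of a \emph{maximal} reduction sequence issued from $\sigma : \theta : \Var{s}$, not the outcome of a single step. The argument then splits on whether this maximal sequence is infinite or finite: if it is infinite we are in the divergence case $\sigma : \theta : \Var{s}\;\Uparrow$, and if it is finite its last configuration $\sigma' : \theta' : \Var{s'}$ admits no further $\luastep$, so by progress it must already be a normal form.

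The core of the proof is therefore a canonical-forms lemma classifying the irreducible well-formed configurations. Proceeding by inspection of the grammar of \fref{syntax}, I would show that whenever $\Var{s'}$ is none of the three listed shapes---an error object $\KW{error}\;\Var{v}$, a top-level return $\Var{E}\,[\![\KW{return}\;\Var{v},\ldots]\!]$ whose context $\Var{E}$ encloses no return point, or the empty statement $\T{;}$---the term decomposes uniquely as $\Var{E}\,[\![\Var{s''}]\!]$ with $\Var{s''}$ a redex, so a reduction step applies and contradicts irreducibility. This relies on the unique-decomposition property of evaluation contexts already noted after \fref{syntax}: each well-formed term that is neither terminal nor a value exposes exactly one redex at its hole. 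Mutual exclusivity (the ``just one'') is then immediate, since divergence is incompatible with reaching a normal form and the three terminal shapes are syntactically disjoint. The closing sentence---that the resulting configuration is again well formed---follows by transporting well-formedness of $\sigma : \theta : \Var{s}$ along the sequence via the preservation (subject-reduction) property.

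The main obstacle I anticipate is precisely this canonical-forms analysis, because the grammar of \fref{syntax} is presented with several elided productions (the ``$\ldots$''): a fully rigorous proof must verify that each omitted construct is likewise either a redex in context or an instance of one of the three terminal forms. Since the statement is phrased as a \emph{corollary} of an already-established progress theorem, however, the bulk of this case analysis is inherited from~\cite{dls}, and the only genuinely new obligation is to check that the extensions introduced here---closure references resolved through $\theta$ and the metatable/identifier machinery---do not create additional stuck states.
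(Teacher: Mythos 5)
The paper offers no proof of this corollary to compare yours against: it is introduced only as ``a standard corollary of the progress property for our semantics,'' with the underlying progress and unique-decomposition results silently delegated to~\cite{dls}, and no argument appears in the body or the appendix. Your reconstruction --- classify the (unique) maximal $\luastep$-sequence as infinite or finite, and in the finite case apply a canonical-forms analysis showing that every irreducible well-formed configuration is one of the three terminal shapes, with preservation transporting well-formedness to the final configuration --- is exactly the standard route the authors are gesturing at, and your closing caveat that the elided productions of \fref{syntax} are where the real case-analysis burden lies is a fair assessment. One point you gloss over: the exclusivity claim (``just one of the following situations hold'') is not purely a matter of the terminal shapes being syntactically disjoint; it also requires that a configuration cannot both diverge and converge, which needs $\luastep$ to be deterministic (or at least for the maximal sequence to be unique). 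The paper does adopt determinism of $\luastep$ in Assumption~\ref{ass:restrictions}, but that assumption is stated \emph{after} the corollary, so your proof should invoke it explicitly rather than writing ``this maximal sequence'' as though uniqueness came for free.
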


The condition expressed for the evaluation context \Var{E}, in the case of
a computation that ends in 
$\Var{E} \; [\![\KW{return} \; \Var{v},...]\!]$, implies that the \KW{return}
statements occurs outside of a function: it is the result returned by the 
program, which will be received, for example, in the host application where the
Lua program is embedded.

We omit the notion of well-formedness, as it is standard: it rules out not just 
ill-formed programs, but also ill-formed terms that represent intermediate 
computations. It express, mainly, restrictions that cannot be captured by our 
context-free grammar.

Coming back to the function \textsf{result}, it considers each possible final configuration, keeping only the 
bindings from the stores that are needed to completely describe the result. It uses the function {\sf reach} from \sref{gc:simple}. In the case of a \KW{return}, it strips out the context \Var{E}. Though
simple, in the context of syntactic GCs such notion of result 
is not be sensible to different syntactic GC strategies, or even to the 
complete absence of GC.



Computing the result of a program allows us to compare different runs from the  
same program. We 
%
assume that there exists an $\alpha$-conversion between locations from $\sigma$ 
and $\theta$, even when real programming languages often provide several library
services that may break $\alpha$-conversion. For example, in Lua it is possible 
to convert a table id to a string using the library service 
{\sf tostring}. Naturally, if we include this service, we would be able 
to write programs whose returned values will depend upon obscure details of 
memory management, and that will be beyond formal treatment for the purpose of
comparison of results. 
Thus, we assume that the semantics
of $\luastep$ is deterministic, which basically boils down to:
\begin{assumption}[Restrictions to the model]\label{ass:restrictions}~

\begin{enumerate}
\item The memory manager is deterministic, and new references are
  always created fresh, \ie there is no re-use of memory location.
\item There are no services that exposes external variables, like the time, the file system, a random number generator, \etc
\end{enumerate}
\end{assumption}
\noindent
The first assumption can be lifted off if services that
expose the details of memory management are prohibited (iterators for tables 
with non-numeric keys, the {\sf tostring} service, \etc). 

\subsection{Observations}


The standard sanity check of our specification of GC (without interfaces to the
garbage collector), consists in showing that the addition of a step of GC does
not change the semantics of the running program. In the context of our dynamic
semantics we capture this idea with a notion of \emph{observations} over 
programs.

We parameterize the definition over a relation $\rightarrow$ that
formalizes execution steps. For our studies, $\rightarrow$ will be
$\luastep$ (\ie our original model of Lua's operational semantics)
with or without GC steps.  We will reuse the notation introduced in
Corollary~\ref{progress_prop} to speak about the convergence of
computations, but now we will subscript with $\rightarrow$, to
indicate that we are computing using only the execution rules from
$\rightarrow$.  For brevity, we will use \Var{C} for a variable that
ranges over the set of configurations.
 




\begin{definition}[Observations]
For a given well-formed configuration \Var{C}, and execution rules 
$\rightarrow$:
\begin{quote}
$\mathsf{obs(\Var{C}, \rightarrow) = \{ \bot \; | \; \Var{C} \; 
                                             \Uparrow_{\rightarrow} \} \cup
                                     \{ result(\Var{C'}) \; | \;
                                        \Var{C} \; 
                                        \Downarrow_{\rightarrow}
                                        \Var{C'}\}}$
\end{quote}
\end{definition}

The previous definition hinges on the fact that a progress property holds for
$\rightarrow$: if {\sf result} is defined over the last configuration of a 
convergent computation, this configuration must be a valid final configuration. 
While this is true for $\rightarrow = \luastep$, we have not provided evidence 
that this is also the case after the addition of GC and its interfaces. Later,
in \sref{correctness_gc}, we will argue that by including $\gcstep$ to $\luastep$
we are not introducing stuck states.

Observations are useful to describe program equivalence:
\begin{definition}[Program equivalence]\label{program_equivalence}~\\[-1em]
\begin{quote}
$\mathsf{(\Var{C}, \rightarrow) \equiv 
 (\Var{C'}, \rightarrow')
 \Leftrightarrow
  obs(\Var{C}, \rightarrow) = 
  obs(\Var{C'}, \rightarrow')}$
\end{quote}
\end{definition}

\subsection{Garbage}
With the definitions developed so far we can now formalize a notion of garbage 
as a binding (a pair reference-value) that can be removed without changing the 
meaning of a program:

\begin{definition}[Garbage]
For a given well-formed configuration 
$\sigma \uplus \{ (\Var{r}, \Var{v}) \} : \theta : s$, operational semantics
$\rightarrow$, the binding $(\Var{r}, \Var{v})$ is \emph{garbage} with respect to 
$\rightarrow$, \Iff:
\begin{quote}
$(\sigma \uplus \{ (\Var{r}, \Var{v}) \} : \theta : s, \rightarrow) \equiv 
(\sigma : \theta : s, \rightarrow)$
\end{quote}
A binding from $\theta$ is defined as garbage in an analogous manner.
\end{definition}


The concepts introduced so far will allow us to define and study a notion of
correctness of GC in the absence of its interfaces, but also could be of 
use for future studies of properties and applications of the model for weak 
tables and finalization.
 

\subsection{Correctness of $\gcstep$.}
\label{sec:correctness_gc}

With the previously defined notions we can tackle the study of several desirable
properties of GC. For GC without interfaces we can perform its standard sanity
check, \ie to prove its soundness property: the addition of a GC step does not
change the semantics of a program. Informally, it consists in showing that by
adding $\gcstep$ the observations over a given program are not altered. The
desired statement is captured in the following statement (where
$\luagcstep = \luastep \cup \gcstep$):

\begin{theorem}[GC correctness]
For a given well-formed configuration 
$\sigma : \theta : \Var{s}$, 
\begin{center}
$(\sigma : \theta : \Var{s}, \luastep) \equiv
(\sigma : \theta : \Var{s}, \luagcstep)$
\end{center}
\end{theorem}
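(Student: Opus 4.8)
The plan is to unfold Definition~\ref{program_equivalence} so that the goal becomes the set equality $\mathsf{obs}(\sigma : \theta : \Var{s}, \luastep) = \mathsf{obs}(\sigma : \theta : \Var{s}, \luagcstep)$. The inclusion $\subseteq$ is immediate, since $\luastep \subseteq \luagcstep$: every terminating $\luastep$-computation is a terminating $\luagcstep$-computation with the identical final configuration, and every diverging $\luastep$-computation is still an infinite $\luagcstep$-path, so both the concrete results and $\bot$ carry over. The whole weight of the argument therefore lies in $\supseteq$: every outcome reachable with GC enabled is already reachable by pure Lua steps. The central tool will be the reachability equivalence $\rcheq$, which I would take to relate two configurations when, up to a bijection of the locations occurring in them, they carry the same statement and their stores agree on the fragment reachable from that statement (via $\mathsf{reach}$ of Definition~\ref{Reachability Simple GC}, so that metatables of reachable tables are included).

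Two lemmas drive the proof. First, \emph{GC is silent}: if $C \gcstep C'$ then $C' \rcheq C$. This is read off Definition~\ref{Simple GC cycle}, since $\gcstep$ only discards the disjoint stores $\sigma_2, \theta_2$, all of whose locations are non-reachable from $\Var{s}$; the reachable fragment, and the statement, are untouched. Second, and crucially, \emph{$\rcheq$ is a strong bisimulation for $\luastep$}: if $C_1 \rcheq C_2$ and $C_1 \luastep C_1'$, then there is a $C_2'$ with $C_2 \luastep C_2'$ and $C_1' \rcheq C_2'$. Together these say that $\rcheq$ is a weak bisimulation between $\luagcstep$ and $\luastep$ in which $\gcstep$-transitions are internal: starting from $C \rcheq C$, any $\luagcstep$-computation out of $C$ is matched by a $\luastep$-computation out of $C$ that drops the GC transitions and keeps every Lua transition, ending in an $\rcheq$-related configuration.

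From this simulation the inclusion $\supseteq$ follows by analysing the two kinds of observations. If $C \Downarrow_{\luagcstep} C'$ with $C'$ final, the simulation produces $C \Downarrow_{\luastep} C''$ with $C'' \rcheq C'$; since being a final configuration is a syntactic property of the statement (preserved by $\rcheq$) and since $\mathsf{result}$ is by construction exactly the reachable restriction of the stores, we get $\mathsf{result}(C'') = \mathsf{result}(C')$, so the same element lands in the $\luastep$-observations. If instead $C \Uparrow_{\luagcstep}$, I would first note that a $\gcstep$ strictly shrinks $|\mathsf{dom}(\sigma)| + |\mathsf{dom}(\theta)|$ (it requires $\sigma' \neq \sigma \vee \theta' \neq \theta$ and never adds bindings), so no infinite $\luagcstep$-path can contain only finitely many $\luastep$-steps; the simulation then matches each of the infinitely many $\luastep$-steps by a genuine $\luastep$-step, yielding an infinite $\luastep$-computation and hence $\bot$ in the pure-Lua observations. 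The same bisimulation argument also discharges the progress obligation flagged in \sref{correctness_gc}: a configuration reachable under $\luagcstep$ is $\rcheq$-related to one reachable under $\luastep$, so GC introduces no new stuck states.

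The main obstacle is the bisimulation lemma for $\luastep$, proved by induction on the derivation of $C_1 \luastep C_1'$ with a case split over the reduction rules (the context rule handling the inductive step). Each case must show two things: that the rule's premises depend only on the reachable fragment of the stores — so that the matching rule fires at $C_2$ — and that the rewrite preserves $\rcheq$. The delicate points are allocation and freshness: a rule such as the function-call or table-creation rule introduces fresh references or identifiers, and because GC on the $\luagcstep$-side may have freed locations that are never reused (Assumption~\ref{ass:restrictions}(1)), the fresh names chosen on the two sides generally differ; this is exactly why $\rcheq$ must be defined up to a location bijection, which one then extends to the newly allocated names. One must further check that reachability is stable under the rewrite — for instance, that storing a table identifier into a reachable table keeps its metatable reachable — so that the updated stores still agree on their reachable parts. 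These verifications are routine but numerous, and constitute the technical core of the proof.
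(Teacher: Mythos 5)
Your proposal is correct, and it reaches the theorem by a genuinely different decomposition than the paper. The paper also reduces everything to the relation $\rcheq$ and to the same supporting facts (GC only discards non-reachable bindings, $\rcheq$-related final configurations have equal $\mathsf{result}$, only finitely many consecutive $\gcstep$ steps are possible), but its trace manipulation is a \emph{postponement} argument in the style of $\lambda_{GC}$: a single-step commutation lemma showing that $\gcstep$ followed by $\luastep$ can be reordered to $\luastep$ followed by $\gcstep$ up to $\rcheq$, after which an induction pushes all GC steps to the tail of a convergent trace, yielding a pure $\luastep$ prefix. You instead factor the commutation into two pieces --- $\gcstep$ is silent with respect to $\rcheq$, and $\rcheq$ is a simulation for $\luastep$ (the paper's Corollary~\ref{gc_preserve_reach_eq} and Lemma~\ref{l_preserve_reach_eq}, respectively) --- and erase GC steps on the fly via a weak-bisimulation argument. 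Your factorization is arguably the cleaner one: the paper's inductive rearrangement is delicate precisely because each postponement only returns an $\rcheq$-related (not identical) configuration, so the simulation lemma is implicitly needed anyway to continue commuting; you make that dependency explicit and avoid the separate postponement lemma altogether. Your treatment of divergence is also more direct: the paper derives preservation of divergence from the double implication of its convergence statement plus determinism and preservation of well-formedness, whereas you argue explicitly that an infinite $\luagcstep$-path must contain infinitely many $\luastep$-steps (since $\gcstep$ strictly shrinks the stores) and then simulate those. What the paper's route buys is a reusable commutation property of independent interest and a convergence-phrased theorem that isolates the result-preservation claim; what yours buys is modularity and a shorter path from the two key lemmas to the observation equality. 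The technical core --- the case analysis over the reduction rules showing that $\luastep$ only inspects the reachable fragment, with the freshness/renaming subtlety at allocation sites --- is identical in both, and you identify it correctly.
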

The proof is included in appendix \ref{app:properties_gc}. 

We obtain as corollary that $\luagcstep$ is deterministic:
\begin{corollary}[Determinism of GC]
\label{thm:singleton_under_luagcstep}
For a well-formed configuration 
$\mathsf{\sigma : \theta : \Var{s}},\;\mathsf{\vert obs(\sigma : \theta : \Var{s}, \luagcstep) \vert = 1}$
\end{corollary}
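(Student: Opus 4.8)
The plan is to obtain the corollary essentially for free from the GC correctness theorem, reducing the claim about $\luagcstep$ to the corresponding --- and, by assumption, already settled --- claim about $\luastep$. First I would unfold Definition~\ref{program_equivalence}: the theorem states $(\sigma : \theta : \Var{s}, \luastep) \equiv (\sigma : \theta : \Var{s}, \luagcstep)$, which by definition of $\equiv$ is precisely the set equality $\mathsf{obs}(\sigma : \theta : \Var{s}, \luastep) = \mathsf{obs}(\sigma : \theta : \Var{s}, \luagcstep)$. It therefore suffices to show that the left-hand side is a singleton, \ie $|\mathsf{obs}(\sigma : \theta : \Var{s}, \luastep)| = 1$.

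For the latter I would rely on the determinism of $\luastep$, which Assumption~\ref{ass:restrictions} secures: since references are allocated fresh and deterministically (item~1) and no service exposes external state (item~2), every well-formed configuration has at most one $\luastep$-successor. Hence the reduction sequence issuing from $\sigma : \theta : \Var{s}$ is a unique maximal chain, and I would case-split on its length using Corollary~\ref{progress_prop}. If the chain is infinite then $\sigma : \theta : \Var{s} \Uparrow_{\luastep}$ and there is no convergent endpoint, so by the definition of $\mathsf{obs}$ the set is exactly $\{\bot\}$. If the chain is finite, Corollary~\ref{progress_prop} guarantees that its endpoint $\Var{C'}$ is one of the admissible final forms --- a \KW{return}, an error object, or the empty statement --- and in particular lies in the domain of $\mathsf{result}$, so the set is exactly $\{\mathsf{result}(\Var{C'})\}$. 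The progress corollary also excludes stuck states and makes the two cases mutually exclusive (a single chain cannot both diverge and terminate), so the observation set is non-empty and of cardinality one; transporting this along the set equality above yields $|\mathsf{obs}(\sigma : \theta : \Var{s}, \luagcstep)| = 1$.

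The main obstacle is not in the corollary, whose argument is pure bookkeeping once the theorem is available, but in making the passage from \emph{deterministic step relation} to \emph{singleton observation set} fully rigorous. The delicate point is guaranteeing that divergence and convergence never both contribute to $\mathsf{obs}$ --- so that we never union $\{\bot\}$ with some $\mathsf{result}(\Var{C'})$ --- which hinges on the single-successor property from Assumption~\ref{ass:restrictions} together with the exhaustive, mutually exclusive dichotomy of Corollary~\ref{progress_prop}. All the genuine difficulty lives upstream, in the deferred proof of the GC correctness theorem.
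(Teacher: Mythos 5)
Your proposal is correct and takes essentially the same route as the paper, whose own proof is a one-liner deriving the corollary from the GC correctness theorem plus the determinism of programs under $\luastep$; your elaboration --- the unique maximal $\luastep$-chain secured by Assumption~\ref{ass:restrictions} and the exhaustive, mutually exclusive dichotomy of Corollary~\ref{progress_prop} ruling out a stuck or mixed observation set --- simply spells out what the paper leaves implicit. The only cosmetic difference is that the appendix invokes the double-implication form of correctness (Theorem~\ref{gc_correctness}) rather than the $\equiv$ form you unfold, but the two are interchangeable for this purpose.
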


Naturally, after the introduction of weak tables or finalizers, programs may
no longer exhibit deterministic behavior, hence the requirement of a set of 
observations in order to be able to express the possible outcomes of a Lua 
program under $\mapsto$, the complete dynamic semantics: 

\begin{theorem}[Non-deterministic behavior]
Form some well-formed configuration $\sigma : \theta : \Var{s}$, 
\begin{center}
$(\sigma : \theta : \Var{s}, \luagcstep) \not \equiv
(\sigma : \theta : \Var{s}, \mapsto)$
\end{center}
\end{theorem}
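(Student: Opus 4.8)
The statement is existential, so it suffices to exhibit a single well-formed configuration whose two observation sets differ. By Corollary~\ref{thm:singleton_under_luagcstep} the set $\mathsf{obs}(\sigma : \theta : \Var{s}, \luagcstep)$ is always a singleton; hence my plan is to produce a program \Var{s} for which $\mathsf{obs}(\sigma : \theta : \Var{s}, \mapsto)$ either contains two distinct elements or a single element different from that singleton. The witness I would use is the weak-table program of \fref{ex-non-det}, taking the elided region \dbq{\ldots} to be empty so that the only collectible object is the inner table assigned to \textsf{t[1]}.

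For the left-hand side I would compute $\mathsf{obs}(\cdot, \luagcstep)$ and show it equals $\{\bot\}$. Under $\luagcstep$ the only collector available is the simple one of \sref{gc:simple}, whose notion of liveness is the predicate $\mathsf{reach}$ of Definition~\ref{Reachability Simple GC}. That predicate follows the contents of every reachable table and is completely insensitive to the \textsf{``\_\_mode''} metafield: the inner table is reached from the reachable local \textsf{t} through the field \textsf{t[1]}, so it is reachable. Consequently no instance of the cycle of Definition~\ref{Simple GC cycle} may place it in the discarded part of the store, the guard \textsf{not t[1]} is never satisfied, and the loop \textsf{while true} never breaks. Every $\luagcstep$-execution therefore diverges, giving $\mathsf{obs}(\cdot, \luagcstep) = \{\bot\}$.

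For the right-hand side I only need to exhibit one $\mapsto$-execution with a non-$\bot$ result. Because \textsf{t} has weak values, the inner table occurs only in a weak-value field and is not \emph{strongly} reachable in the sense of $\mathsf{reachCte}$ (\fref{reachCte}); hence the \emph{reach} clause of the weak-table cycle $\mathsf{gc_{fin\_weak}}$ (\fref{gcweak}) licenses removing that field. An execution that interleaves this weak-collection step makes \textsf{t[1]} evaluate to \nil, so \textsf{not t[1]} holds, the loop breaks, and the program reduces to \textsf{return i} for a concrete value of \textsf{i}. This yields a finite result in $\mathsf{obs}(\cdot, \mapsto)$, so $\mathsf{obs}(\cdot, \mapsto) \neq \{\bot\} = \mathsf{obs}(\cdot, \luagcstep)$, and the two configurations are not $\equiv$.

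The main obstacle is the divergence argument on the $\luagcstep$ side: I must argue not merely that the inner table \emph{happens} to survive, but that it survives under \emph{every} interleaving of simple-GC steps, which reduces to the observation that $\mathsf{reach}$ treats \textsf{``\_\_mode''} as an ordinary field and can therefore never certify the inner table as garbage. Assumption~\ref{ass:restrictions} is what lets me attribute the entire behavioral gap to the weak-table machinery rather than to incidental non-determinism of $\luastep$. As a robustness check I would also record a finalizer-based witness --- a table marked for finalization whose \textsf{``\_\_gc''} method mutates a still-reachable counter that is then returned --- for which $\mapsto$ admits a run returning the mutated value while $\luagcstep$, having no finalizers, can only return the original; this isolates the finalizer source of the discrepancy independently of weak tables.
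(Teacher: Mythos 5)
Your proposal is correct and takes essentially the same approach as the paper: the paper's entire proof consists of pointing to the program of \fref{ex-non-det} as the witness, exactly the configuration you chose. You have simply supplied the verification the paper leaves implicit --- that $\mathsf{reach}$ ignores \textsf{``\_\_mode''} so every $\luagcstep$-run diverges, while $\mathsf{gc_{fin\_weak}}$ licenses a $\mapsto$-run that clears the weak field and returns --- which is a faithful and complete elaboration of the intended argument.
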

\begin{proof}
Consider the program presented in 
\sref{introduction} (\fref{ex-non-det}).
\end{proof}

As a first attempt in recovering the deterministic behavior of programs that make
use of weak tables, in the next section we introduce \gctool.

\section{\gctool: ensuring GC-safeness}\label{sec:gctool}

As the code in \fref{ex-non-det} shows, a program using weak tables
could exhibit non-deterministic behavior. Nonetheless, given the
usefulness of weak tables to easily implement several data-structures
(\eg caches, weak sets, property tables)~\cite{ecwt}, it is important
to understand their semantics, and to have tools to prevent common
pitfalls in their use. In the first part of the present paper we
aimed at the former, and now we turn our attention into what
constitutes the first steps into the later.

More concretely, in this section we introduce \gctool, a prototype
static analyzer that aims at the detection of ill-uses of weak tables,
that could lead to non-deterministic behavior. We are mostly concerned
with access to fields of weak tables that are not strongly
reachable. While the general problem is known to be
undecidable~\citep{fswr}, we propose an approximation to the solution
by combining techniques from statics semantics (type inference, type
checking and data-flow analysis) together with weak tables'
semantics.

For a given Lua program \Var{p}, being $\mapsto$ the dynamic semantics
that includes $\luastep$ and GC with interfaces, if
$\mathsf{obs}(\emptyset : \emptyset : \Var{p}, \mapsto)$ is a
singleton we say that \Var{p} is \emph{gc-safe}, and denote with
$\safeset$ the set of gc-safe programs. In our approach we aim at
taking a user program and trying our best to guess if it belongs to
$\safeset$, without asking the user for modifications of the program
or to use weak tables according to some particular idioms, as proposed
in~\cite{fswr}.

As this is the first step taken in implementing \gctool, we assume
some restrictions---that we mention where relevant---on the Lua
programs under consideration. We expect in the future to increase the
analysis power of the tool.

\paragraph{Overview}
\begin{figure*}
\input{figures/tool.tex}
\caption{Design of \gctool.}
\label{fig:gctool_design}
\end{figure*}

\fref{gctool_design} shows the design of \gctool. As a first step, we take a user
program \Var{p}, an we infer the type of its local variables and function 
definitions. In particular, at this point we recognize if the 
evaluation of a given expression involves the access to a field of a weak table,
and to determine the kind of information such field contains. This is 
important to understand if the result of such evaluation could be unpredictable. The result of type inference is an annotated program $\typed{\Var{p}}$. 

The next step consists in the extraction of 
information useful to determine if the references in a given expression are reachable
from the root set of references. To compute 
the root set at some point of the program, we use a syntactic approximation 
consisting in the set of definitions of variables which are valid at that point. That is, we solve the problem of \emph{reachable definitions}~\citep{drag} for $\typed{\Var{p}}$ by constructing its 
\emph{control flow graph} (cfg), annotating each expression and statement of 
the program with the 
set of definitions that are valid at that particular point, obtaining $\rchdef{\Var{cfg}}$.

The last step consists in taking $\typed{\Var{p}}$ and $\rchdef{\Var{cfg}}$, and
performing type checking over $\typed{\Var{p}}$. In that way, we are able to 
reconstruct the type of complex expressions, and to recognize whether the
evaluation of a given expression involves the access to a field of a weak 
table. 
If it is the case, we will
query $\rchdef{\Var{cfg}}$ for the set of valid variable definitions at the
corresponding point of the program 
and
determine the reachability of the corresponding table field,
following the semantics of weak tables from \sref{weak_tables}.


In the reminder of this section we explain the different steps of
\gctool{} and present examples showing its potential.

\subsection{Type system}
\newcommand{\st}[1]{\ensuremath{\langle #1 \rangle}}
\newcommand{\vst}{\NT{v}\ensuremath{_{\sf st}}}

\begin{figure}
  \begin{tabbing}
  \textbf{Typed language}\\[.3em]
  \NT{s} ::=  \= \ldots{} \sep 
              \T{local} \NT{x} \T{:} \NT{t} \T{,} ... = \NT{e} \T{,} ... 
              \T{in} \NT{s} \T{end}\\
             
  \NT{e} ::= \ldots{} 
             \sep \NT{t} \T{function} ( \NT{x} \T{:} \NT{t} \T{,} \ldots{} ) 
                            \NT{s} 
                  \T{end}\\[.5em]

  \textbf{Types}\\[.3em]
  \NT{t} ::= \= \NT{prmt} \sep \NT{st} \sep \T{dyn} \sep \NT{t} $\rightarrow$ \NT{t} \sep
                \{ [ \NT{st} ] : \NT{t} ... \} \NT{wkness} \sep 
                $\mu$ \NT{y} . \NT{t} \sep \NT{t} $\times$ \NT{t} \sep \emptup\\
  \NT{prmt} ::= \T{nil} \sep \T{num} \sep \T{bool} \sep \T{str}\\
  \NT{st} ::= \st{ \vst{} : \NT{prmt} } \\
  \vst{} ::= \T{nil} \sep \NT{string} \sep \NT{boolean} \sep \NT{number}\\
  \NT{wkness} ::= \T{strong} \sep \T{wk} \sep \T{wv} \sep \T{wkv}
  \end{tabbing}
  \vspace{-1em}
  \caption{Grammar for typed terms.}
  \flabel{type_terms}
\end{figure}

Common to all the steps of \gctool{} lies the type system. 
\fref{type_terms} shows the language extended with type
annotations for local variables and function definitions. As for the types, we have primitive types, \NT{prmt}, where we include the
\T{nil} type, \T{num}bers, \T{bool}eans and \T{str}ings. Then, we have
\emph{singleton types} \NT{st}, which \emph{lift} to the level of
types a literal value \vst{} (\T{nil}, a number, a string, or a
boolean). Singleton types serve two purposes in our work: to allow us
to statically know which field of a table is being indexed, and
therefore to know if the access is valid or not; and to track the
changes in the weakness of tables at each call to
\setmetatable.


Besides primitive and singleton types we have the \T{dyn} supertype
for variables whose type cannot be properly inferred statically; function types,
\mbox{\NT{t} $\rightarrow$ \NT{t}}; table types, \{ [ \NT{st} ] :
\NT{t} ... \} \NT{wkness}, which include a tag (\NT{wkness})
indicating the weakness of the table, and are restricted to be indexed
by singleton types; recursive types, $\mu$\NT{y}. \NT{t}, to better
support common programming idioms using tables; and product types,
\mbox{\NT{t} $\times$ \NT{t}} and \emptup{} (the empty tuple), which we use to 
express the domains of functions, though they have many other roles in typing Lua
programs (see, for example,~\cite{typedlua}).

Types are ordered by a typical subtyping relation~$<:$, except for minor 
simplifications: \T{dyn} is the supertype of every type; every primitive type 
\NT{p} is the supertype of any \st{\NT{v} : \NT{p}}; subtyping for function and
recursive types will be reduced to reflexivity, for purposes of simplification of
type inference; table types are related by width, depth and permutation 
subtyping; and product types are covariant. As an important
remark, we do not take a tables' weakness into account for subtyping
in order to let the weakness of a table to change through a given
program.

\subsection{Type inference}









 
Our type inference algorithm is based mostly on ideas introduced in~\cite{ttij},
where it is presented a type inference algorithm for a language that includes 
some features of JavaScript. For reasons of brevity we will not cover its 
details. We refer the reader to the cited work and our mechanization with PLT 
Redex
.

Informally, the essence of the process consists in traversing the AST
of a given program, generating constraints over the type that we
should assign to each expression. These constraints are generated
observing the way in which expressions are used in the program. For
our purposes constraints relates types of terms, according to our
subtyping relation, and restricts the fields that a given table type
should have.

The solution proposed in~\cite{ttij}, which we follow, works in
steps. First, for each expression, a new type variable is constructed;
then, these variables are constrained.
Once a set of constraints \Var{Cs} is generated for a given program, the 
algorithm proceed by inferring new constraints from \Var{Cs}, for which it is 
guaranteed that, if a solution exist for \Var{Cs}, then the same solution solves
the newly inferred constraints. This step intends to make evident the existence 
of a solution or expose any inconsistency present among the constraints, showing 
the absence of a solution. The last step generates solutions for 
each constraint.

Our type inference algorithm follows the previously described process, with 
minor additions to tackle the problem of type inference given our subtyping 
relation, which is slightly more complex: 
we have the supertype \T{dyn}, tuple types and a slightly more complex subtyping 
relation for primitive types, since we also have singleton types. The main additions involve enriching the expressiveness
of the language to express constraints over types, and an added step that refines
the possible types that could be assigned to an expression, for the case of
primitive types.

\subsection{Computing the Control Flow Graph}

In order to compute the cfg for the program we follow traditional ideas from~\cite{drag} adapted to Lua code. The resulting
$\rchdef{\Var{cfg}}$ contains a family of sets of definitions of variables 
that are valid at every statement and expression of the program being typed. We 
identify each of such points with a context \Var{C}, that we need to update 
accordingly through the whole type checking process. Such contexts also serve to identify the exact point in the program
where the tool identified a potentially non-deterministic
behavior. $\rchdef{\Var{cfg}}$ is indexed by these contexts. For brevity we do not show its definition, but 
it can be seen in the mechanization accompanying this paper.

\subsection{Type checking}


\begin{figure}
\begin{mathpar}
\inference{
\Gamma_1, \rchdef{\Var{cfg}}, \plug{\Var{C}}{\; \hole \T{[} \Var{e}_2 \T{]} \;} 
\typee 
\Var{e}_1: \Gamma_2: \{ [\Var{st}_1]:\Var{t}_1\T{,} \ldots \}\;\T{strong}\\
\Gamma_2,\rchdef{\Var{cfg}}, \plug{\Var{C}}{\; \Var{e}_1 \T{[} \hole \T{]} \;}
\typee 
\Var{e}_2: \Gamma_3: \Var{st}_2\\
\mtch{\{ [\Var{st}_1]:\Var{t}_1\T{,} \ldots \}}
     {\{ \ldots{} \T{,} [\Var{st}_2]:\Var{t}_2\T{,} \ldots \}}
}
{\Gamma_1, \rchdef{\Var{cfg}},\Var{C} 
\typee  
\Var{e}_1 \T{[} \Var{e}_2 \T{]} : \Gamma_3 : \Var{t}_2}

\inference{
\Gamma_1, \rchdef{\Var{cfg}},\plug{\Var{C}}{\; \hole \T{[} \Var{e}_2 \T{]} \;} 
\typee 
\Var{e}_1: \Gamma_2: \{ [\Var{st}_1]:\Var{t}\T{,} \ldots \}\;\T{wv}\\
\Gamma_2,\rchdef{\Var{cfg}},\plug{\Var{C}}{\; \Var{e}_1 \; \T{[} \hole \T{]} \;} 
\typee 
\Var{e}_1: \Gamma_3: \Var{st}_2\\
\mtch{\{ [\Var{st}_1]:\Var{t}\T{,} \ldots \}}
     {\{ \ldots{} \T{,} [\Var{st}_2]:\Var{cte} \T{,} \ldots \} } \\
 \mathsf{reachCte}(\rchdef{\Var{cfg}}[\Var{C}], \Var{e}_1 \T{[} \Var{e}_2 \T{]}, 
          \Gamma_3)}
{\Gamma_1, \rchdef{\Var{cfg}},\Var{C} 
\typee
\Var{e}_1 \T{[} \Var{e}_2 \T{]} : \Gamma_3 : \Var{cte}}

\inference{
\Gamma_1, \rchdef{\Var{cfg}},\plug{\Var{C}}{\; \hole \T{[} \Var{e}_2 \T{]} \;} 
\typee 
\Var{e}_1: \Gamma_2: \{ [\Var{st}_1]:\Var{t}_1\T{,} \ldots \}\;\T{wv}\\
\Gamma_2,\rchdef{\Var{cfg}},\plug{\Var{C}}{\; \Var{e}_1 \; \T{[} \hole \T{]} \;} 
\typee 
\Var{e}_2: \Gamma_3: \Var{st}_2\\
\mtch{\{ [\Var{st}_1]:\Var{t}_1\T{,} \ldots \}}
     {\{ \ldots{} \T{,} [\Var{st}_2]:\Var{t}_2 \T{,} \ldots \}} \\
 \;\; \Var{t}_2 \notin \NT{cte}}
{\Gamma_1, \rchdef{\Var{cfg}},\Var{C} 
\typee 
\Var{e}_1 \T{[} \Var{e}_2 \T{]} : \Gamma_3 : \Var{t}_2}
\end{mathpar}
\vspace{-1em}
\caption{Type checking for table indexing.}
\flabel{type_check_table_indx}
\end{figure}

\begin{figure*}
\begin{mathpar}
\inference{
\Gamma_1(\Var{x}) = \{ [\Var{st}]:\Var{t}\T{,} \ldots \}\;\Var{wkness}_1\\
\Gamma_1, \rchdef{\Var{cfg}}, 
\plug{\Var{C}}{\; \setmetatable (\Var{x}, \hole) \;} 
\typee 
\Var{e}: \Gamma_2: 
\{ \ldots \T{,} [\st{\mathsf{``\_\_mode"}:\T{str}}] : \st{ s : \T{str}} \T{,} \ldots \}\;
\Var{wkness}_2\\
\textsf{`v'} \in \Var{s}\\
\Gamma_3 = \Gamma_2[\Var{x} : \{ [\Var{st}]:\Var{t}\T{,} \ldots \}\;\T{wv}]
}
{\Gamma_1, \rchdef{\Var{cfg}},\Var{C} 
\types  
\setmetatable (\Var{x}, \Var{e}) : \Gamma_3}

\inference{
\Gamma_1(\Var{x}) = \{ [\Var{st}_1]:\Var{t}_1\T{,} \ldots \}\;\Var{wkness}_1\\
\Gamma_1, \rchdef{\Var{cfg}}, 
\plug{\Var{C}}{\; \setmetatable (\Var{x}, \hole) \;} 
\typee 
\Var{e}: \Gamma_2: \{ [\Var{st}_2]:\Var{t}_2\T{,} \ldots \}\;\Var{wkness}_2\\
\not \mtch{\{ [\Var{st}_2]:\Var{t}_2\T{,} \ldots \}}
     {\{ \ldots \T{,} [\st{\mathsf{``\_\_mode"} : \T{str}}] : \st{s : \T{str}}\T{,} \ldots \}} \vee
\textsf{`k'},\textsf{`v'} \notin \Var{s}\\
\Gamma_3 = \Gamma_2[\Var{x} : \{ [\Var{st}_1]:\Var{t}_1\T{,}\ldots\}\;\T{strong}]
}
{\Gamma_1, \rchdef{\Var{cfg}},\Var{C} 
\types  
\setmetatable (\Var{x}, \Var{e}) : \Gamma_3}
\end{mathpar}
\vspace{-1em}
\caption{Type checking: \setmetatable.}
\flabel{type_check_setmeta}
\end{figure*}

For brevity, we focus on the peculiarities of determining gc-safeness. Type checking is described 
by the typing relations 
$\typee \subseteq \Gamma \times \rchdef{\Var{cfg}} \times C \times \NT{e} 
                  \times \Gamma \times \NT{t}$ and
$\types \subseteq \Gamma \times \rchdef{\Var{cfg}} \times C \times \NT{s} 
                  \times \Gamma$, (partially) described in 
\fref{type_check_table_indx} and \fref{type_check_setmeta}, respectively. 

We denote with $\Gamma$ the environments mapping variable identifiers with their
types. Since we are typing a dynamic language, the statements and
expressions could change this mapping because of assignments of the
same variable to values having different type. 
Therefore, the typing relation includes a second environment to reflect the
changes. In the typing rules we ensure that the the values assigned to
a certain variable have types related by subtyping.


The first rule in \fref{type_check_table_indx} shows the typing of a (non-weak) 
table indexing, $\Var{e}_1 \T{[} \Var{e}_2 \T{]}$. As mentioned, in this 
prototype we simplify type checking by assuming that each key is a literal value.
Nonetheless, this is enough to type check common idioms involving tables, in Lua.
Assuming that $\Var{e}_2$ can be 
successfully typed as $\Var{st}_2$, we look into the type of $\Var{e}_1$ for a 
field with the same type,
$\mtch{\{ [\Var{st}_1] : \Var{t}_1\T{,} \ldots \}} {\{ \ldots{} \T{,} 
[\Var{st}_2] : \Var{t}_2\T{,} \ldots \}}$. In that case, we successfully type the 
whole indexing expression as $\Var{t}_2$, carrying in $\Gamma_3$ the (possible) 
modifications to the original environment. 

The second rule in \fref{type_check_table_indx} shows the typing of the indexing 
of a table that has weak values. If we can determine that the value being indexed
belongs to the set of values that can be garbage-collected (\NT{cte}), we 
need to check for the reachability of the value, to ensure a deterministic 
behavior of the indexing. We query $\rchdef{\Var{cfg}}$ for the set of definitions of 
variables that reaches the point \Var{C} of the program (\ie the table indexing),
and then traverse that set of definitions to check for the reachability of
the exact expression $\Var{e}_1 \T{[} \Var{e}_2 \T{]}$ as expressed by the predicate $\mathsf{reachCte}$ from \sref{weak_tables} (properly adapted to work with the information from our static analysis).


The last rule in \fref{type_check_table_indx} shows the case of indexing a table that has weak values, but when the value being accessed 
does not belong to the set \NT{cte}. In this case there is no risk of non-determinism.

For the present prototype the case of tables with weak keys (ephemerons) is 
trivially solved, since for any table, all of its keys will not belong to the set
\NT{cte} of values that could be garbage collected. If we allow
also \NT{cte}s as keys, checking for determinism would proceed analogous to the
case of weak values, though with the expected modifications dictated by the
semantics of ephemerons.

Another requirement for our typing relations is for them to recognize and keep
track of changes in the weakness of a given table, as a result of calls to the
service \setmetatable. \fref{type_check_setmeta} shows the typing rules for
calls to this service. 
In the first rule we show the case when the
given metatable contains the corresponding field to inform about a change in the
weakness of the table. We therefore require the metatable to have a field with  key of singleton type 
$\st{\mathsf{``\_\_mode"} : \T{str}}$ and value of singleton type \st{\Var{s} : \T{str}}, with \Var{s} containing 
the character `{\sf v}'. The environment $\Gamma_3$ will contain the updated weakness of table
\Var{x}. The last rule shows the case of a call to \setmetatable{} with a 
metatable which does not contain proper information about changes in weakness of
the table: it will result in the table's weakness being set to \T{strong}, 
regardless of the original weakness of the table.

\subsection{Examples}

\begin{figure}
\begin{lstlisting}
local cache1 = {[1] = function() return 1 end,
                [2] = function() return 2 end,
                [3] = function() return 3 end}
local obj = {method = cache1[1], attr = {}}
local cache2 = {[1] = cache1[2]}
setmetatable(cache1, { __mode = "v"})
setmetatable(cache2, { __mode = "v"})
cache1[1]()
cache1[2]()
cache1[3]()
\end{lstlisting}  
\vspace{-1.5em}
\caption{Example: Implementation of a simple cache.}
\label{fig:ex-cache}
\end{figure}

\begin{figure}
\begin{lstlisting}
local t1 = {}
t1["attr1"] = 1
t1["method"] = function(x) return x + t1["attr1"] end
t1["attr2"] = (t1["method"] (t1["attr1"]))
setmetatable(t1, {__mode = "v"})
t1["method"](t1["attr2"])
\end{lstlisting} 
\vspace{-1.5em}
  \caption{Example: Tracking the addition of table fields.}
  \label{fig:ex-reach-tree}
\end{figure}

In this section we show the capabilities of code analysis of the
present version of \gctool{} with examples that, though artificial in
concept, are meant to pinpoint the possibilities of the proposed
approach.  As mentioned in the introduction, the program from
\fref{ex-non-det} is correctly flagged as non-deterministic.

Additionally, \fref{ex-cache} shows the implementation of a cache-like 
structure, \lstinline{cache1} in Line~1, as a table with weak values. This cache stores several closures in 
fields indexed by different numbers. Beginning from Line 4, we create  
weak and strong references to the closures stored in \lstinline{cache1}. In Line 
4 we create an object-like table, \lstinline{obj}, where we store a reference 
to one of the closures from \lstinline{cache1}. In Line 5 we define another 
cache-like table, \lstinline{cache2}, and we add another reference to a closure 
stored in \lstinline{cache1}. In lines 6--7 we set \lstinline{cache1} and 
\lstinline{cache2} to have weak values. What follows are accesses to the closures in 
\lstinline{cache1}, through indexing. \gctool{} correctly recognizes that the indexing in
Line 8 is safe, since it involves the access of a \NT{cte} (a closure), stored in a table with
weak values, but for which there is a strong reference coming from the presence
of the closure as a method from \lstinline{obj}. The situation is different for the last two accesses (lines 8 and 9): it recognizes two different kinds of ill accesses: in Line 9 the indexing involves a \NT{cte} value from a weak table,
but for which every reachability path contains at least one weak reference
(besides \lstinline{cache1}, it is only referenced from a value of 
\lstinline{cache2}): \ie it is not strongly reachable. In Line 10, the value accessed has no other reference besides the one from \lstinline{cache1}. 

In \fref{ex-reach-tree} we illustrate the possibility of keeping track of the
addition of new fields to tables, by means of assignments. The example features
a table, \lstinline{t1}, which is defined field by field, with every new field
defined in terms of the previous. The tool recognizes that, in the function call 
in Line 6, there is a \NT{cte} being accessed which is not strongly reachable. 
Also, type inference and type checking correctly solve the type of the parameter 
being passed in the call, which is not a \NT{cte}, hence, there is no risk of
non-deterministic behavior. The example also serves to showcase some of the 
constructions of the language that \gctool{} handles, which includes every 
syntactic form except functions returning multiple values, assignment and 
definitions of multiple variables and tables with \NT{cte}s keys.


\section{Future and related work}
\label{sec:relatedwork}


 
Future work could include one of the several venues of improvement of \gctool: 
an enriched type system, proofs of soundness of type inference and checking, 
and the inclusion of language features that were left out of this first prototype. The main known
drawback of using PLT Redex for this investigations is the poor 
performance of the resulting programs. The implementation is useful mainly for the testing ideas about 
static analysis, rather than tackling the analysis of real-world Lua programs.
Future work could include the re-implementation of \gctool{} in a more efficient 
language.

Another a promising line of work for the future is to adapt the core concepts to the
new ECMAScript, which includes weak references and finalizers~\citep{weak-js}.

As for related work, we group them in three: formalizations of GC,
theoretic tools related to the inference of types, and tools for
static analysis of GC.

\paragraph*{Formalizations of GC:} 
Leal \etal{} present in~\cite{fsf} a formal
semantics for a $\lambda$-calculus extended with references (strong and weak), 
and finalizers. From the literature surveyed, this is the only work where both 
interfaces to the GC are considered. The 
semantics presented for finalization does not impose an order of execution among
finalizers, and resurrected objects' semantics does not differ from live 
ones. Also, there is no interaction between weak references and finalization.
As described in~\sref{finalizers}, Lua's implementation of finalization imposes
a chronological order of finalization, and resurrected objects' semantics differs
from live ones in certain conditions, even with regard to resurrected objects
present in weak tables. This adds a certain level of interaction between 
finalization and weak tables.

Morrisett \etal{} present in~\cite{ammm} a reduction semantics for GC (named
$\lambda_{GC}$), but 
without any interface with the garbage collector. The theory developed for 
proving correctness for GC served as a major source of inspiration for our own 
development. The given specification for a GC cycle does not consider 
reachability, but rather observes for the appearance of free variables when
removing a given binding from the heap. In~\cite{acjro} is shown that 
specifying GC in terms of reachability results in an increased expressiveness
of the resulting model, reflected in the possibility of emulating even more 
trace-based GC strategies. We followed that path.

Donnelly \etal{} extended $\lambda_{GC}$ including weak 
references~\citep{fswr}. They use their model (named $\lambda_{\it weak}$) to tackle the 
semantics of the key/values weak references present in the GHC implementation 
of Haskell (a concept similar to ephemerons, also present in Lua). Also, 
they present a type system for their model and show how to use it in the 
collection of reachable garbage (\ie \emph{semantic} garbage). Finally, they 
tackle the problem of the introduction of non-determinism into the evaluation of 
a program that makes use of weak reference. They provide a decidable syntactic 
criterion for recognizing programs well-behaved with regard to GC (\ie with a 
deterministic behavior, regardless of their use of weak reference), and 
characterize semantically a larger class of programs with the same deterministic 
behavior. Because $\lambda_{\it weak}$ is directly derived form $\lambda_{GC}$ it 
lacks the expressiveness of a model based on reachability. On the other hand,
the theory developed for their model is based on a set of observations  
over programs that considers the possibility of a non-deterministic behavior.
 Being non-determinism a phenomenon also present in our model, their theory 
served as a source of inspiration for the development of ours.


The concept of ephemerons and their implementation in Lua is described 
in~\cite{ecwt}. However, they are not studied into a formal setting.

\paragraph*{Type inference for Lua:}
Type inference for Lua has being already tackled by Mascarenhas
\etal{} in~\cite{fabiophd} to obtain an optimized compiler for Lua
5.1.  In the same vein, Maidl \etal{} present Typed
Lua~\citep{typedlua}, a type system for Lua 5.2 that tackles several
of the complexities of the language, with special care in the typing
of common idioms used by the community of Lua.

While not strictly related to Lua, Anderson \etal{} introduce
in~\cite{ttij} a type inference algorithm for a language similar to
JavaScript, together with the formulation of several properties that
characterize the soundness of the proposed approach.

\paragraph*{Static analysis for GC:}
In~\cite{aspects-trace} the authors consider a form of local
static analysis to detect the type of reference (\emph{collectable},
\emph{weak}, and \emph{strong}) that occurs in a trace of
execution. They use this information to avoid
memory-leaks. In~\cite{region-based-mm} the static analysis performed
is used to determine the correct scope of weak and strong
references. Donnelly \etal{} propose in~\cite{fswr} the recognition of 
gc-safeness, first, by providing a restricted set of programs, characterized 
syntactically, for which it can be asserted their deterministic behavior, and 
later, by a semantic definition of a wider class of programs, though not 
recognizable through syntactic analysis. In contrast, our approach to gc-safeness
recognition aims at receiving the user program as it is, and doing a best-effort 
attempt in reasoning about the program's behavior.






\bibliographystyle{abbrv}
\bibliography{references}

\begin{appendices}
\section{Properties of GC}
\label{app:properties_gc}

To reach to a proof of the correctness of $\gcstep$ we will require, first, to 
check for several lemmas about simple properties that hold for both, $\luastep$
and $\gcstep$. 

\paragraph*{Properties preserved by $\luastep$.}
The first lemma states that once a binding becomes amenable 
for collection, it will remain in that state after any computation step from 
$\luastep$.\footnote{Note that such simple property does not hold anymore if
we introduce weak tables or finalization.} For its proof we will assume that the
reader is familiar with the model presented in \cite{dls}. A complete proof would
require case analysis on every computation step from said model. For reasons of 
brevity, we will consider just a few cases.

\begin{lemma}
\label{preserve_non_reach}
For configurations 
$(\sigma_1 : \theta_1 : \Var{s}_1)$, $(\sigma_2 : \theta_2 : \Var{s}_2)$, if 
$(\sigma_1 : \theta_1 : \Var{s}_1) \luastep (\sigma_2 : \theta_2 : \Var{s}_2)$, 
for $(\sigma_1 : \theta_1 : \Var{s}_1)$ well-formed, then
$\forall \Var{l} \in dom(\sigma_1) \cup dom(\theta_1)$,
$\neg \mathsf{reach}(\Var{l}, \Var{s}_1, \sigma_1,\theta_1)
\Rightarrow 
\neg \mathsf{reach}(\Var{l}, \Var{s}_2, \sigma_2 , \theta_2) 
$.
\end{lemma}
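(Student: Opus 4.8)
The plan is to prove the contrapositive: assuming $\Var{l} \in dom(\sigma_1) \cup dom(\theta_1)$ together with $\mathsf{reach}(\Var{l}, \Var{s}_2, \sigma_2, \theta_2)$, I would derive $\mathsf{reach}(\Var{l}, \Var{s}_1, \sigma_1, \theta_1)$. In words, a plain Lua step cannot resurrect an already-unreachable \textbf{old} location; reachability, restricted to pre-existing locations, is non-increasing under $\luastep$. Crucially, all freshly allocated locations fall outside $dom(\sigma_1) \cup dom(\theta_1)$, so they never need to be considered in the conclusion, and well-formedness of the source configuration guarantees that the literal occurrences we appeal to actually reside in the domains of the stores.

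The argument proceeds by case analysis on the rule used to derive the step, exactly as the excerpt anticipates when it notes that a complete proof inspects every computation rule of~\cite{dls}. The recurring pattern for the ``leaf'' computational rules is uniform: each such rule may allocate fresh locations and install new bindings, but every value it installs already occurs literally in the redex $\Var{s}_1$. For table creation, $\Var{s}_2 = \Var{tid}$ is a fresh identifier whose stored content $\{ [\Var{v}_1] = \Var{v}_2, \dots \}$ consists precisely of the entries of the constructor that \emph{was} $\Var{s}_1$; for $\setmetatable$, the newly installed metatable $\Var{v}$ occurs literally in $\Var{s}_1 = \setmetatable(\Var{tid}, \Var{v})$, and the table's content is untouched. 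Since anything reachable from a value that occurs literally in $\Var{s}_1$ is, by the very definition of $\mathsf{reach}$, reachable from $\Var{s}_1$ itself, any old $\Var{l}$ reached in the post-configuration through these new bindings was already reached in the pre-configuration.

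The function-call rule is the delicate case, since it combines fresh allocation with substitution. Here $\Var{s}_2 = \Var{s}[\Var{x}_i \backslash \Var{r}_i]$, where $\Var{s}$ is the body of the closure $\theta_1(\Var{cid})$ and the fresh references $\Var{r}_i$ point to the arguments $\Var{v}_i$. I would argue that every old location reachable from $\Var{s}_2$ arrives along one of two kinds of paths: through some $\Var{r}_i$ into $\Var{v}_i$, where $\Var{v}_i$ occurs literally in $\Var{s}_1 = \Var{cid}(\Var{v}_1, \dots, \Var{v}_n)$; or through a free reference of the body that was already reachable via the closure's environment, where $\Var{cid}$ occurs literally in $\Var{s}_1$ and so the last disjunct of $\mathsf{reach}$ expands through $\theta_1(\Var{cid})$. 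Substitution only renames the formal parameters to the fresh $\Var{r}_i$ and so cannot introduce references to old locations beyond those already present in the body or the arguments. The congruence (context) rule is then handled by a compositionality observation: writing $\Var{s}_1 = \plug{\Var{E}}{\Var{s}}$ and $\Var{s}_2 = \plug{\Var{E}}{\Var{s}'}$, the locations occurring literally in $\Var{E}$ are identical on both sides and contribute the same root references, while the change is confined to the hole and governed by the inner step, to which the leaf cases apply; I would phrase this as an auxiliary lemma characterizing $\mathsf{reach}(\Var{l}, \plug{\Var{E}}{\Var{t}}, \sigma, \theta)$ in terms of reachability from the references of $\Var{E}$ together with reachability from $\Var{t}$.

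The main obstacle I anticipate lies not in the case analysis but in reconciling the store-shrinking recursive structure of $\mathsf{reach}$ across the two configurations: each recursive call removes the binding just visited, so ``replaying'' a post-configuration reachability path inside the pre-configuration stores requires a careful monotonicity/containment lemma asserting that, modulo the fresh locations and modulo the literal-occurrence observation above, every dereferencing step available in $(\sigma_2, \theta_2)$ is matched by a corresponding dereferencing step in $(\sigma_1, \theta_1)$. Establishing that auxiliary lemma, which must track exactly which bindings have been discarded at each level of the recursion, is where the real work resides.
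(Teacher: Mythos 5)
Your proposal is correct and follows essentially the same route as the paper's proof: a case analysis on the rule deriving the $\luastep$ step, with the key observation that every binding a rule installs (and every term it produces) only contains locations that already occur literally in the redex or are reachable through it, so reachability restricted to $dom(\sigma_1) \cup dom(\theta_1)$ cannot increase. You are in fact somewhat more thorough than the paper --- which treats only dereferencing and table allocation as representative cases --- by explicitly handling \setmetatable, function calls, and the congruence rule, and by flagging the store-shrinking recursion in the definition of $\mathsf{reach}$ as the point needing an auxiliary containment lemma, a subtlety the paper's argument elides.
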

\begin{proof}
We will follow the modular structure of $\luastep$ to reason over 
the step that transforms $(\sigma_1 : \theta_1 : \Var{s}_1)$ into 
$(\sigma_2 : \theta_2 : \Var{s}_2)$. We have the following cases for the step
taken from $\luastep$:
\begin{itemize}
  \item[-] {\it The computation does not depend on the content of the stores
(\ie it does not change bindings from a store or dereferences locations):} then,
it can be seen, by case analysis on each computation rule, that such computation 
step does not introduce any reference into the instruction term. What could 
happen is that the root set is reduced, by deleting references present into 
$\Var{s}_1$. In any case, for a given 
$\Var{l} \in dom(\sigma_1) \cup dom(\theta_1)$, if 
$\neg \mathsf{reach}(\Var{l}, \Var{s}_1, \sigma_1,\theta_1)$ it must be the 
case that also $\neg \mathsf{reach}(\Var{l}, \Var{s}_2, \sigma_2,\theta_2)$. 
   \item[-] {\it The computation changes or dereferences locations from 
$\sigma_1$} :
for a given $\Var{l} \in dom(\sigma_1) \cup dom(\theta_1)$, such that
$\neg \mathsf{reach}(\Var{l}, \Var{s}_1, \sigma_1,\theta_1)$ let us assume that  
$\mathsf{reach}(\Var{l}, \Var{s}_2, \sigma_2 , \theta_2)$. To reason about the
statement, we would need to do case analysis on every possible computation step
that interacts with the values store. As an example, let us consider the implicit
dereferencing of references to $\sigma_1$.\footnote{In \cite{dls}, for purposes of
simplification of the desugared Lua code from test suites, we included implicit 
dereferencing of references to values, as done in \cite{get}.} Then it must be 
the case that $\Var{s}_1$ matches against the pattern 
$\NT{E} [\![\; \Var{r} \;]\!]$, for an evaluation context \NT{E} and a reference
\Var{r}, and the computation is:
\begin{center}
$\mathsf{\sigma_1\; \textbf{:}\; \theta_1\; \textbf{:}\; 
        \overset{\overset{\Var{s}_1}{=}}{\NT{E} [\![\; \Var{r} \;]\!]}
           \luastep\;
        \sigma_1\; \textbf{:}\; \theta_1\; \textbf{:}\; 
        \overset{\overset{\Var{s}_2}{=}}{\Var{E} [\![ \sigma_1(\Var{r})]\!]}
        }$
\end{center}
where both stores remain unmodified after the computation.
Then, the root set just changed by replacing \Var{r} by the references in
$\sigma_1(\Var{r})$. If $\neg 
\mathsf{reach}(\Var{l}, \Var{s}_1, \sigma_1,\theta_1)$ but\\ 
$\mathsf{reach}(\Var{l}, \Var{s}_2, \sigma_2 , \theta_2)$, this would mean that
\Var{l} is reachable from the references in $\sigma_1(\Var{r})$. But in 
$\Var{s}_1$, the references from $\sigma_1(\Var{r})$ were also reachable, making
\Var{l} reachable in $\Var{s}_1$, contradicting our hypothesis. Then, it must be 
the case that if\\ $\neg \mathsf{reach}(\Var{l}, \Var{s}_1, \sigma_1,\theta_1)$, 
\Var{l} remains unreachable in $(\sigma_2 : \theta_2 : \Var{s}_2)$.
    \item[-] {\it The computation changes or dereferences locations from 
$\theta_1$}:
let us assume that for a given $\Var{l} \in dom(\sigma_1) \cup dom(\theta_1)$,\\
$\neg \mathsf{reach}(\Var{l}, \Var{s}_1, \sigma_1,\theta_1) \; \wedge \; 
\mathsf{reach}(\Var{l}, \Var{s}_2, \sigma_2 , \theta_2)$. Again we will just 
analyze one case, among every computation that interacts with the store 
$\theta_1$. We will consider the rule that describes how tables are allocated in
$\theta_1$. Then, it must be the case that $\Var{s}_1$ matches against the pattern
$\NT{E} [\![\; \Var{t} \;]\!]$, for an evaluation context \NT{E} and a table
constructor \Var{t}, where every field haven been evaluated, making the table
ready for allocation. Then, the (simplified) computation is:
\begin{mathpar}	
  \inference{\mathsf{\Var{tid} \notin dom(\theta_1)}\\
    \mathsf{\theta_2 = (\Var{tid}, \;
                       (\Var{t}, 
                       \;\nil,
                       \;\bot)),\;
                       \theta_1}}
  {\mathsf{(\sigma_1\; : \; \theta_1\;:\;\NT{E} [\![\; \Var{t} \;]\!])
      \;\luastep\;
      (\sigma_1\; : \; \theta_2\;:\;\NT{E} [\![\; \Var{tid} \;]\!])}}
\end{mathpar}
where the values store remains unchanged, \ie $\sigma_2 = \sigma_1$. 
Then, the root set just changed by replacing the references in \Var{t} by the 
fresh table identifier \Var{tid}. If $\neg 
\mathsf{reach}(\Var{l}, \Var{s}_1, \sigma_1,\theta_1)$ but 
$\mathsf{reach}(\Var{l}, \Var{s}_2, \sigma_2 , \theta_2)$, this would mean 
that $\Var{l} = \Var{tid}$, which cannot be the case as
$\Var{l} \in dom(\sigma_1) \cup dom(\theta_1)$ and\\ 
$\Var{tid} \notin dom(\sigma_1) \cup dom(\theta_1)$. Then it must
be the case that if\\
$\neg \mathsf{reach}(\Var{l}, \Var{s}_1, \sigma_1,\theta_1)$, \Var{l} remains 
unreachable in $(\sigma_2 : \theta_2 : \Var{s}_2)$. 
\end{itemize}
\end{proof}

The following definition and lemma capture a standard concept in operational 
semantics for imperative languages: for a given instruction term, the outcome 
of its execution under given stores will depend on the content of the reachable 
portion of said stores.

\begin{definition}[]
For well-formed configurations 
$(\sigma_1 : \theta_1 : \Var{s})$ and $(\sigma_2 : \theta_2 : \Var{s})$,
we will say that both configurations \emph{coincide in the reachable portion
of their stores}, denoted
\begin{center}
$(\sigma_1 : \theta_1 : \Var{s}) \rcheq 
(\sigma_2 : \theta_2 : \Var{s})$
\end{center}

if and only if $\forall \Var{l} \in dom(\sigma_1) \cup dom(\theta_1) /
\mathsf{reach}(\Var{l}, \Var{s}, \sigma_1,\theta_1)$, then:
\begin{itemize}
  \item[-] $\mathsf{reach}(\Var{l}, \Var{s}, \sigma_2,\theta_2)$

  \item[-] $\Var{l} \in dom(\sigma_1) \Rightarrow 
         \sigma_1(\Var{l}) = \sigma_2(\Var{l})$

  \item[-] $\Var{l} \in dom(\theta_1) \Rightarrow 
        \theta_1(\Var{l}) = \theta_2(\Var{l})$
\end{itemize}
and the same holds $\forall \Var{l} \in dom(\sigma_2) \cup dom(\theta_2)$.
\end{definition}

In the previous definition, we are assuming that, if needed, it is always 
possible to provide a renaming of locations from both configurations to make them
equivalent in the sense expressed by $\rcheq$. Finally, it is easy to show that 
$\rcheq$ is an equivalence relation.

The important property, satisfied by configurations that coincide in the 
reachable portion of their stores, is stated in the following lemmas:

\begin{lemma}
\label{l_preserve_reach_eq}
For well-formed configurations
$(\sigma_1 : \theta_1 : \Var{s}_1)$ and $(\sigma_2 : \theta_2 : \Var{s}_1)$,
such that:
\begin{center}
{\small $(\sigma_1 : \theta_1 : \Var{s}_1) \rcheq 
(\sigma_2 : \theta_2 : \Var{s}_1)$}
\end{center} 
if $\; \exists (\sigma_3 : \theta_3 : \Var{s}_2)/ 
(\sigma_1 : \theta_1 : \Var{s}_1) \luastep 
(\sigma_3 : \theta_3 : \Var{s}_2)$, then\\
$\exists (\sigma_4 : \theta_4 : \Var{s}_2)/ 
(\sigma_2 : \theta_2 : \Var{s}_1) \luastep
(\sigma_4 : \theta_4 : \Var{s}_2)$ and:
\begin{center}
{\small
$(\sigma_3 : \theta_3 : \Var{s}_2) \rcheq 
(\sigma_4 : \theta_4 : \Var{s}_2)$}
\end{center}
\end{lemma}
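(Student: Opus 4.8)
The plan is to proceed, as in the proof of Lemma~\ref{preserve_non_reach}, along the modular structure of $\luastep$. Since both configurations share the statement $\Var{s}_1$, it decomposes in a unique way (by the discussion following \fref{syntax}) as $\Var{E}[\![\;\Var{s}\;]\!]$ for the \emph{same} evaluation context $\Var{E}$ and the \emph{same} execution-ready redex $\Var{s}$. I would then do a case analysis on the computation rule that fires on $\Var{s}$. The central observation, reused in every case, is that a single $\luastep$ only inspects locations occurring literally in the redex---hence in the root set of $\Var{s}_1$, and therefore reachable---together with the data immediately obtained by dereferencing them. Since $(\sigma_1 : \theta_1 : \Var{s}_1) \rcheq (\sigma_2 : \theta_2 : \Var{s}_1)$ forces $\sigma_1$ and $\sigma_2$ (resp.\ $\theta_1$ and $\theta_2$) to agree on every reachable location and its bound value, the same rule applies to $(\sigma_2 : \theta_2 : \Var{s}_1)$ and yields the very same contractum; by determinism of $\luastep$ (Assumption~\ref{ass:restrictions}) the statement produced is exactly the $\Var{s}_2$ of the hypothesis, so the witness $(\sigma_4 : \theta_4 : \Var{s}_2)$ does exist, and it is well formed by Corollary~\ref{progress_prop}.

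The cases split according to how the rule touches the stores. For rules that neither read nor write the stores the step is immediately replayable and the stores are unchanged. For rules that dereference a location $\Var{r}$ (resp.\ $\Var{tid}$, $\Var{cid}$) literally present in the redex---such as the implicit dereferencing $\Var{E}[\![\;\Var{r}\;]\!] \luastep \Var{E}[\![\;\sigma(\Var{r})\;]\!]$ considered in Lemma~\ref{preserve_non_reach}---that location is reachable, so $\rcheq$ gives $\sigma_1(\Var{r}) = \sigma_2(\Var{r})$ and the contracta coincide. For rules that write to the stores I would distinguish two subcases: in-place updates (variable assignment, field assignment, \setmetatable) modify a reachable binding using data drawn from the reachable redex, so $\sigma_4$ and $\theta_4$ are chosen to perform the identical update; allocations (function call, table/closure creation) introduce fresh identifiers, which I match on both sides by an $\alpha$-renaming of locations, as licensed by the remark accompanying the definition of $\rcheq$ and by Assumption~\ref{ass:restrictions}.

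It then remains to re-establish $(\sigma_3 : \theta_3 : \Var{s}_2) \rcheq (\sigma_4 : \theta_4 : \Var{s}_2)$. Here I would argue that every location reachable from the root set $\Var{s}_2$ in $(\sigma_3,\theta_3)$ is either (i) freshly allocated, in which case it is matched, with equal bound value, by the corresponding fresh location on the other side; or (ii) already reachable from $\Var{s}_1$ in $(\sigma_1,\theta_1)$, because the references syntactically present in $\Var{s}_2$ are either inherited from $\Var{s}_1$ or obtained by dereferencing a reachable location, so that $\mathsf{reach}(\cdot,\Var{s}_2,\sigma_3,\theta_3)$ can only expose data on which $\rcheq$ already agrees---taking care, as in Definition~\ref{Reachability Simple GC}, that a reachable table drags in its metatable, whose identifier is part of the agreeing table representation. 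The symmetric condition for the $(\sigma_4,\theta_4)$ side follows identically.

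I expect the main obstacle to be precisely this preservation step. Unlike in Lemma~\ref{preserve_non_reach}, where it suffices to track that unreachable stays unreachable, here I must show that the two reachable subheaps remain pointwise equal \emph{up to renaming} after an arbitrary rule fires. The delicate points are the clean bookkeeping of the fresh-identifier $\alpha$-renaming so that it stays compatible with the substitution performed into the redex during a function call, and the metatable clause of reachability, which makes the reachable closure of a configuration slightly larger than the naive set of references syntactically in $\Var{s}_2$. In both cases the resolution is the same: whatever becomes newly reachable is built solely from data on which $\rcheq$ was already assumed to agree, or from freshly allocated bindings chosen identically on the two sides.
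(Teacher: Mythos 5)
Your proposal is correct and follows essentially the same route as the paper's proof: a case analysis along the modular structure of $\luastep$ split by whether the rule ignores, dereferences, or writes the stores, using $\rcheq$-agreement on reachable bindings to replay the step and an $\alpha$-renaming of fresh identifiers for allocations. If anything, your treatment is more thorough than the paper's (which only works through one representative example per case), and your explicit attention to the metatable clause and to the compatibility of the renaming with substitution addresses exactly the points the paper glosses over.
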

\begin{proof}
We will follow the modular structure of $\luastep$ to reason over 
the step that 
transforms $(\sigma_2 : \theta_2 : s_1)$ into $(\sigma_4 : \theta_4 : s_2)$:
\begin{itemize}
  \item[-] {\it The computation does not change bindings from a store or 
dereferences locations} : then it must be the case that every information from 
the stores is already put into the instruction term $s_1$ so as to make the 
computation from $\luastep$ viable, without regard to the content of
the stores. Also, after the computation the stores are not modified. It implies 
that:
\begin{center}
{\small
$(\sigma_1 : \theta_1 : \Var{s}_1) \luastep
(\overset{\overset{\sigma_1}{=}}{\sigma_3} : 
 \overset{\overset{\theta_1}{=}}{\theta_3} : \Var{s}_2) \;
\wedge
\;
(\sigma_2 : \theta_2 : \Var{s}_1) \luastep
(\overset{\overset{\sigma_2}{=}}{\sigma_4} : 
 \overset{\overset{\theta_2}{=}}{\theta_4} : \Var{s}_2)$}
\end{center}
The root set of references in both configurations,\\
$(\sigma_3 : \theta_3 : \Var{s}_2)$ and $(\sigma_4 : \theta_4 : \Var{s}_2)$, is
the same. And, since
$(\sigma_1 : \theta_1 : \Var{s}_1) \rcheq (\sigma_2 : \theta_2 : \Var{s}_1)$
and the stores are not modified after the step from $\luastep$, it follows that
the reachable portion of the stores, from the root set defined by $\Var{s}_2$,
must coincide, according to $\rcheq$, in the configurations obtained after  
$\luastep$. Hence:
\begin{center}
{\small
$(\sigma_3 : \theta_3 : \Var{s}_2) \rcheq (\sigma_4 : \theta_4 : \Var{s}_2)$}
\end{center}
\item[-] {\it The computation changes or dereferences locations from $\sigma_1$}:
we would need to do case analysis on each computation that interacts with the 
value store. As an example, let us consider the implicit dereferencing of 
references to the values store. The hypothesis can be rewritten as:
\begin{center}
{\small
$(\sigma_1 : \theta_1 : \Var{s}_1) \luastep 
(\overset{\overset{\sigma_1}{=}}{\sigma_3} : 
 \overset{\overset{\theta_1}{=}}{\theta_3} : \Var{s}_2)$}
\end{center} 
where $\Var{s}_2$ contains the value associated with the reference dereferenced
by $\luastep$. Because:
\begin{center}
{\small
$(\sigma_1 : \theta_1 : \Var{s}_1) \rcheq (\sigma_2 : \theta_2 : \Var{s}_1)$}
\end{center}

the dereferencing operation will return the same result, if executed over 
$\sigma_2$. Then:
\begin{center}
$(\sigma_2 : \theta_2 : \Var{s}_1) \luastep 
 (\overset{\overset{\sigma_2}{=}}{\sigma_4} : 
 \overset{\overset{\theta_2}{=}}{\theta_4} : \Var{s}_2)$ 
\end{center}

Finally, because the stores are unmodified, after the step from $\luastep$, and
since the reachable portions of the stores in the original configurations 
coincide, according to $\rcheq$, then, it must be the case that the reachable 
portions of the stores obtained after $\luastep$ must also coincide, if we 
consider the same root of references. Hence:
\begin{center}
$(\sigma_3 : \theta_3 : \Var{s}_2) \rcheq (\sigma_4 : \theta_4 : \Var{s}_2)$
\end{center}
\item[-] {\it The computation changes or dereferences locations from $\theta_1$}:
we would need to do case analysis on each computation that interacts with 
$\theta_1$. As an example, let us consider table allocation. 
The hypothesis can be rewritten as:
\begin{center}
{\small
$\mathsf{(\sigma_1 : \theta_1 : \NT{E} [\![ \Var{t} ]\!])
      \;\luastep\;
      (\overset{\overset{\sigma_3}{=}}{\sigma_1} : 
      \overset{\overset{\theta_3}{=}}
      {\theta_1 \uplus \{ (\Var{tid}, \Var{t}, \nil, \bot) \}} :
      \NT{E} [\![ \Var{tid} ]\!])}$
}
\end{center}
Then we can assume that:
\begin{center}
{\small
$\mathsf{(\sigma_2: \theta_2:\NT{E} [\![ \Var{t} ]\!])
      \;\luastep\;
      (\overset{\overset{\sigma_4}{=}}{\sigma_2} : 
      \overset{\overset{\theta_4}{=}}
      {\theta_2 \uplus \{ (\Var{tid}, \Var{t}, \nil, \bot) \}}: 
      \NT{E} [\![ \Var{tid} ]\!])}$
}
\end{center}

where, if needed, we could apply a consistent renaming of tables' id
in $(\sigma_2 : \theta_2 : \NT{E} [\![ \Var{t} ]\!])$, such that 
it preserves its equivalence with 
$(\sigma_1 : \theta_1 : \NT{E} [\![ \Var{t} ]\!])$ and \Var{tid} is 
available as a fresh table identifier. Then, it follows immediately that:
\begin{center}
{\small
$(\sigma_3 : \theta_3 : \NT{E} [\![ \Var{tid} ]\!]) \rcheq
(\sigma_4 : \theta_4 : \NT{E} [\![ \Var{tid} ]\!])$}
\end{center}
\end{itemize}
\end{proof} 

Finally, the following lemma express an intuitive property that holds among 
final configurations that happen to be equivalent, according to $\rcheq$:

\begin{lemma}
\label{reach_eq_preserve_result}
For final configurations
$(\sigma_1 : \theta_1 : \Var{s})$ and $(\sigma_2 : \theta_2 : \Var{s})$,
such that
$(\sigma_1 : \theta_1 : \Var{s}) \rcheq (\sigma_2 : \theta_2 : \Var{s})$,
then:
\begin{center}
$\mathsf{result}(\sigma_1 : \theta_1 : \Var{s}) =
\mathsf{result}(\sigma_2 : \theta_2 : \Var{s})$
\end{center}
\end{lemma}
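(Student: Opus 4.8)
The plan is to proceed by case analysis on the shape of the common final configuration, using Corollary~\ref{progress_prop} to enumerate the possibilities for $\Var{s}$. When $\Var{s} = \KW{;}$ the statement is immediate, since $\mathsf{result}$ returns $\emptyset : \emptyset : \KW{;}$ regardless of the stores. In the remaining two cases ($\Var{s} = \Var{E}[\![\KW{return}\;\Var{v}_1,\ldots,\Var{v}_n]\!]$ with $\Var{E}$ containing no return point, and $\Var{s}$ an error configuration) I treat both uniformly by letting $\Var{t}$ denote the \emph{stripped} root term that $\mathsf{result}$ keeps (the bare \KW{return} or error term, with the context $\Var{E}$ discarded). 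Since the two configurations share the same $\Var{s}$, they share the same decomposition and hence the same $\Var{t}$; consequently the \emph{term} component produced by $\mathsf{result}$ coincides in both. It therefore remains to show that the restricted stores agree, \ie that $\left.\sigma_1\right|_{S_1} = \left.\sigma_2\right|_{S_2}$ and $\left.\theta_1\right|_{T_1} = \left.\theta_2\right|_{T_2}$, where $S_i$ and $T_i$ are the sets of locations reachable from $\Var{t}$ in configuration $i$, as in \fref{result_function}.

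The crux is a sub-lemma asserting that reachability from the stripped root $\Var{t}$ is insensitive to the non-reachable part of the stores, so that it is preserved between $\rcheq$-equivalent configurations: for every location $\Var{l}$, $\mathsf{reach}(\Var{l},\Var{t},\sigma_1,\theta_1) \Leftrightarrow \mathsf{reach}(\Var{l},\Var{t},\sigma_2,\theta_2)$, and whenever this holds the corresponding store entries coincide. I would prove this by induction following the recursive structure of $\mathsf{reach}$ from Definition~\ref{Reachability Simple GC}, along its store-shrinking recursive calls. The base case ($\Var{l} \in \Var{t}$) is trivial since $\Var{t}$ is common to both. For the inductive case, observe that every location dereferenced while unfolding $\mathsf{reach}(\Var{l},\Var{t},\cdots)$ --- be it a reference $\Var{r}$, the content or metatable of a $\Var{tid}$, or the closure at a $\Var{cid}$ --- is itself reachable from $\Var{t}$, hence reachable from $\Var{s}$, because $\Var{t}$ is a subterm of $\Var{s} = \Var{E}[\![\Var{t}]\!]$ and so any location reachable from $\Var{t}$ is reachable from $\Var{s}$. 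By the hypothesis $\rcheq$, all such store entries agree across $\sigma_1,\theta_1$ and $\sigma_2,\theta_2$; therefore every reachability derivation valid in one configuration transfers verbatim to the other, and the associated values are identical.

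The main obstacle I anticipate is precisely this mismatch between the root set used by $\rcheq$ --- the full statement $\Var{s}$ --- and the one used by $\mathsf{result}$ --- the stripped term $\Var{t}$. The inclusion \emph{reachable from $\Var{t}$ implies reachable from $\Var{s}$} is what lets me invoke $\rcheq$ on exactly the locations the induction visits; some care is needed because the recursive calls of $\mathsf{reach}$ delete already-inspected bindings from the stores, so I must track that the deleted locations never need to be re-read along a path. Granting the sub-lemma, it follows that $S_1 = S_2$ and $T_1 = T_2$ (up to the renaming of locations implicit in $\rcheq$), and that $\sigma_1$ and $\sigma_2$, respectively $\theta_1$ and $\theta_2$, agree on these sets; hence the restricted stores coincide and, together with the equal term component, $\mathsf{result}(\sigma_1 : \theta_1 : \Var{s}) = \mathsf{result}(\sigma_2 : \theta_2 : \Var{s})$, as required.
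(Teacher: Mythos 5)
Your proposal is correct and follows essentially the same route as the paper's proof: a case analysis on the shape of the final configuration, with the trivial cases (\T{;}, primitive return values) dispatched immediately and the remaining cases settled by observing that $\mathsf{result}$ depends only on the portion of the stores reachable from the kept term, which coincides across the two configurations by the $\rcheq$ hypothesis. If anything you are more careful than the paper, which silently passes from $\rcheq$ at the full statement $\Var{s}$ to $\rcheq$ at the stripped term (writing $(\sigma_1 : \theta_1 : \Var{tid}) \rcheq (\sigma_2 : \theta_2 : \Var{tid})$ without justification), whereas you make explicit the needed inclusion that reachability from the stripped root implies reachability from $\Var{s}$.
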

\begin{proof}
  The result will follow directly from the definition of {\sf result}, in 
  \fref{result_function}, and $\rcheq$. We will do a case analysis on the 
  structure of \Var{s}, for the configuration $(\sigma_1 : \theta_1 : \Var{s})$, 
  considering that it is the final state of a convergent computation:
  \begin{itemize}
    \item $\Var{s} = \KW{return} \Var{v}_1,...,\Var{v}_n$: for simplicity
     we consider the case $n = 1$, and we omit a possible context \Var{E} where
     the \KW{return} statement could occur, since it is not taken into account 
     by the notion of result of a program, as defined by $\sf result$. For larger 
     values of $n$ the reasoning remains the same:
     \begin{itemize}
       \item $\Var{v}_1 \in \NT{number} \cup \NT{string}$: then, neither 
         $\mathsf{result}(\sigma_1 : \theta_1 : \Var{s})$ nor 
         $\mathsf{result}(\sigma_2 : \theta_2 : \Var{s})$ depend on the content 
         of the stores. Hence,
           $\mathsf{result}(\sigma_1 : \theta_1 : \Var{s}) =
           \mathsf{result}(\sigma_2 : \theta_2 : \Var{s})$.
       
       \item $\Var{v}_1 \in \NT{tid} \cup \NT{cid}$: let us consider that 
         $\Var{v}_1 = \NT{tid}$ for some $\NT{tid} \in \dom {\theta_1}$ (the 
         reasoning for the case $\Var{v}_1 \in \NT{cid}$ is similar). Then, by 
         definition of {\sf result}:
         \begin{tabbing}
         $\mathsf{result(\sigma_1 : \theta_1 : \KW{return} \; \Var{tid})}$ \=
         $\mathsf{= 
           \left.\sigma_1\right|_{S} :
           \left.\theta_1\right|_{T} : \KW{return} \; \Var{tid}}$\\
         where 
         $\left\{ 
           \begin{array}{ll}
             \mathsf{S = \bigcup\limits_{\Var{r} \; \in \; \dom {\sigma_1},\;
             reach(\Var{r}, 
             \KW{return} \; \Var{tid},
             \sigma_1, \theta_1)}^{} r}
             \\
             \\
             \mathsf{T = \bigcup\limits_{\Var{id} \; \in \; \dom {\theta_1},\; 
             reach(\Var{id},
             \KW{return} \; \Var{tid},
             \sigma_1, \theta_1)}^{} \Var{id}}
           \end{array}
         \right.$
       \end{tabbing}

       Then, clearly 
       $\mathsf{result(\sigma_1 : \theta_1 : \KW{return} \; \Var{tid})}$ is
       depending on the reachable portions of $\sigma_1$ and $\theta_1$, 
       beginning with the root set defined by \Var{tid}. Because
       \begin{center}
         $(\sigma_1 : \theta_1 : \Var{tid}) \rcheq 
         (\sigma_2 : \theta_2 : \Var{tid})$
       \end{center} 
       the reachable portions of both configurations coincide. Hence,
       $\mathsf{result}(\sigma_1 : \theta_1 : \Var{s}) =
       \mathsf{result}(\sigma_2 : \theta_2 : \Var{s})$.
       
     \item $\Var{s} = \KW{error}~\Var{v}$: this case is identical to the 
       previous one.
       
     \item $\Var{s} = \T{;}$: then, 
       $\mathsf{result}(\sigma_1 : \theta_1 : \Var{s})$ is
       not depending on the content of the stores, and so is the case for\\
       $\mathsf{result}(\sigma_2 : \theta_2 : \Var{s})$. Hence,
       \begin{center}
         $\mathsf{result}(\sigma_1 : \theta_1 : \Var{s}) =
         \mathsf{result}(\sigma_2 : \theta_2 : \Var{s})$
       \end{center}

     \end{itemize}
  \end{itemize}
\end{proof}

\paragraph*{Properties preserved by $\gcstep$.}
A simple property to ask for is that reachable bindings are preserved, in the 
sense that they are still reachable and the value to which a given location is 
mapped is not changed after $\gcstep$. We express this property with the 
following two lemmas:

\begin{lemma}
\label{reach_bindings_remain}
For a well-formed configuration \mbox{$(\sigma_1 : \theta_1 : s)$}, if\\
$(\sigma_1 : \theta_1 : s) \gcstep (\sigma_2 : \theta_2 : s)$, for some
configuration \mbox{$(\sigma_2 : \theta_2 : s)$}, then
$\forall \Var{r} \in dom(\sigma_1), 
\mathsf{reach}(\Var{r}, \Var{s}, \sigma_1,\theta_1)
\Rightarrow \sigma_1(\Var{r}) = \sigma_2(\Var{r})$. The analogous holds for any
$\Var{id} \in dom(\theta_1)$.
\end{lemma}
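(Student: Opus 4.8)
The plan is to unfold the definition of $\gcstep$ and to exploit the fact that a GC step only ever \emph{discards} bindings and never rewrites the value stored at a binding it keeps. Concretely, the hypothesis $(\sigma_1 : \theta_1 : \Var{s}) \gcstep (\sigma_2 : \theta_2 : \Var{s})$ means, by the single rule defining $\gcstep$, that $(\sigma_2, \theta_2) = \mathsf{gc}(\Var{s}, \sigma_1, \theta_1)$. By Definition~\ref{Simple GC cycle} this yields disjoint decompositions $\sigma_1 = \sigma_2 \uplus \sigma_2'$ and $\theta_1 = \theta_2 \uplus \theta_2'$, where the discarded parts $\sigma_2'$ and $\theta_2'$ satisfy $\forall \Var{l} \in dom(\sigma_2') \cup dom(\theta_2'),\ \neg\mathsf{reach}(\Var{l}, \Var{s}, \sigma_1, \theta_1)$. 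Crucially, the $\mathsf{reach}$ predicate appearing in this GC condition is evaluated against the \emph{original} stores $\sigma_1, \theta_1$, which is exactly the predicate occurring in the statement to be proved.

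First I would fix an arbitrary $\Var{r} \in dom(\sigma_1)$ with $\mathsf{reach}(\Var{r}, \Var{s}, \sigma_1, \theta_1)$, and argue that $\Var{r}$ cannot lie in the discarded store $\sigma_2'$: if $\Var{r} \in dom(\sigma_2')$, the GC condition would force $\neg\mathsf{reach}(\Var{r}, \Var{s}, \sigma_1, \theta_1)$, contradicting reachability of $\Var{r}$. Since the disjoint union $\sigma_1 = \sigma_2 \uplus \sigma_2'$ partitions $dom(\sigma_1)$ into $dom(\sigma_2)$ and $dom(\sigma_2')$, we conclude $\Var{r} \in dom(\sigma_2)$. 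The defining property of $\uplus$ then gives $\sigma_2(\Var{r}) = \sigma_1(\Var{r})$, which is the claimed equality. The argument for $\theta$ is verbatim the same, replacing $\sigma$ by $\theta$ and $\Var{r}$ by $\Var{id}$ and using the decomposition $\theta_1 = \theta_2 \uplus \theta_2'$.

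There is essentially no hard step: the claim reduces to the single observation that GC removes only non-reachable bindings and leaves the kept ones untouched, so the only work is to notice that reachability of $\Var{r}$ excludes it from the discarded portion. The one point requiring a little care is bookkeeping with the reused variable names, since in the lemma the \emph{input} stores of the step are called $\sigma_1, \theta_1$ whereas in Definition~\ref{Simple GC cycle} the \emph{output} stores carry those subscripts. I would therefore be explicit that $\mathsf{gc}$ is applied with $\sigma_1, \theta_1$ as its store arguments and returns $\sigma_2, \theta_2$; this alignment guarantees that the reachability condition discharged by the GC specification is precisely the one in the hypothesis, so no re-evaluation of $\mathsf{reach}$ under the new stores $\sigma_2, \theta_2$ is ever needed.
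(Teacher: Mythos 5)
Your proposal is correct and follows essentially the same route as the paper's own (much terser) proof: unfold the $\gcstep$ rule and Definition~\ref{Simple GC cycle}, observe that a reachable binding cannot lie in the discarded portion of the disjoint decomposition, and conclude that the kept portion preserves its value. Your explicit handling of the subscript clash between the lemma's naming and the definition's naming is a welcome bit of extra care, but it does not change the argument.
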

\begin{proof}
Let $\Var{r} \in dom(\sigma_1), 
\mathsf{reach}(\Var{r}, \Var{s}, \sigma_1,\theta_1)$. 
Then, by Definition~\ref{Simple GC cycle} and $\gcstep$, it must be the case
that $\mathsf{gc(s, \sigma_1, \theta_1) = (\sigma_2, \theta_2)}$ and 
$\sigma_1(r) = \sigma_2(r)$.

For elements from $\mathsf{dom(\theta_1)}$ the reasoning is analogous to the 
previous case.
\end{proof}

\begin{lemma}
\label{gc_preserves_reach}
For a well-formed configuration \mbox{$(\sigma_1 : \theta_1 : s)$}, if\\
$(\sigma_1 : \theta_1 : s) \gcstep (\sigma_2 : \theta_2 : s)$, for some
configuration \mbox{$(\sigma_2 : \theta_2 : s)$}, then
$\forall \Var{l} \in dom(\sigma_1) \cup dom(\theta_1),
\mathsf{reach}(\Var{l}, \Var{s}, \sigma_1,\theta_1) \Rightarrow 
\mathsf{reach}(\Var{l}, \Var{s}, \sigma_2 , \theta_2)$.
\end{lemma}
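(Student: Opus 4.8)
The plan is to show that a GC step never discards anything a reachability witness depends on. First I would unfold the hypothesis: by the rule defining $\gcstep$ together with Definition~\ref{Simple GC cycle}, $(\sigma_2,\theta_2) = \mathsf{gc}(\Var{s},\sigma_1,\theta_1)$, so there are disjoint decompositions $\sigma_1 = \sigma_2 \uplus \sigma_g$ and $\theta_1 = \theta_2 \uplus \theta_g$ in which every location of the \emph{collected} fragments $\sigma_g,\theta_g$ satisfies $\neg\mathsf{reach}(\Var{l},\Var{s},\sigma_1,\theta_1)$. Hence on the retained parts the stores agree with the originals ($\sigma_2(\Var{l}) = \sigma_1(\Var{l})$ for $\Var{l}\in\dom{\sigma_2}$, and likewise for $\theta$), and the only difference between the two configurations is the deletion of bindings that were already unreachable from $\Var{s}$.

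The conceptual core is a locality observation about Definition~\ref{Reachability Simple GC}: whenever $\mathsf{reach}(\Var{l},\Var{s},\sigma_1,\theta_1)$ holds, the derivation follows a finite chain of dereferences starting from a location occurring literally in $\Var{s}$, and every intermediate location visited along that chain is itself reachable from $\Var{s}$ (a prefix of the same chain witnesses it). In particular, no location on the chain can belong to $\sigma_g$ or $\theta_g$, since those are exactly the unreachable ones. Therefore each dereferencing step the witness performs uses a binding that survives into $(\sigma_2,\theta_2)$ with an unchanged value, and the very same chain re-derives $\mathsf{reach}(\Var{l},\Var{s},\sigma_2,\theta_2)$.

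To turn this into a formal argument I would induct on the structure of the reachability derivation --- equivalently, on the combined size $|\dom{\sigma'}| + |\dom{\theta'}|$ of the stores, which strictly decreases at every recursive call --- proving the generalized claim: for every term $\Var{t}$ all of whose locations are reachable from $\Var{s}$ in $(\sigma_1,\theta_1)$, and for the store fragments $\sigma' \subseteq \sigma_1$, $\theta' \subseteq \theta_1$ arising at that point of the recursion, $\mathsf{reach}(\Var{l},\Var{t},\sigma',\theta')$ implies $\mathsf{reach}(\Var{l},\Var{t},\sigma'\setminus\sigma_g,\theta'\setminus\theta_g)$. The base case is the first disjunct $\Var{l}\in\Var{t}$, which is insensitive to the stores. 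In each inductive case the witness dereferences some $\Var{r}\in\Var{t}$ (or a $\Var{tid}$, or a $\Var{cid}$); that location occurs in $\Var{t}$, hence is reachable from $\Var{s}$, hence lies outside $\sigma_g,\theta_g$, so it is still bound in the reduced store with the same image. The content it points to again has only from-$\Var{s}$-reachable locations, so the induction hypothesis applies to the strictly smaller store, and reassembling the disjunct yields the conclusion. Instantiating $\Var{t} := \Var{s}$, $\sigma' := \sigma_1$ and $\theta' := \theta_1$ gives the lemma.

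The main obstacle I expect is bookkeeping rather than conceptual: stating the generalized induction hypothesis so that it simultaneously threads through the store-shrinking built into Definition~\ref{Reachability Simple GC} (introduced only to break cycles) and carries the side condition that the current term's locations are reachable from $\Var{s}$ in the \emph{full} original stores --- it is this side condition, maintained because the content of a reachable location has only reachable locations, that guarantees the witness never touches a collected binding. The metatable disjunct through $\pi_2(\theta(\Var{tid}))$ and the closure-environment disjunct need the same treatment as the ordinary reference case and introduce no new difficulty.
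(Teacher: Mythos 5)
Your proposal is correct and follows essentially the same route as the paper's proof: an induction on the reachability witness, using the observation that every intermediate location along a witnessing chain is itself reachable from $\Var{s}$, hence survives collection with its binding unchanged, so the same chain re-derives reachability in $(\sigma_2,\theta_2)$. The only cosmetic differences are that the paper inducts on the minimum number of dereferences and cites its Lemma~\ref{reach_bindings_remain} for preservation of reachable bindings, whereas you derive that preservation directly from the disjoint decomposition $\sigma_1 = \sigma_2 \uplus \sigma_g$ and are somewhat more explicit about the generalized induction hypothesis needed to thread the store-shrinking in Definition~\ref{Reachability Simple GC}.
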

\begin{proof}
We will prove it by induction on the minimum number of dereferences of 
locations from $\sigma_1$ or $\theta_1$ that needs to be performed to reach to a given location \Var{l}, for which
$\mathsf{reach}(\Var{l}, \Var{s}, \sigma_1,\theta_1)$ 
holds. By looking at Definition~\ref{Reachability Simple GC}, one of the following cases 
should hold:
\begin{itemize}
  \item[-] $\mathsf{\Var{l} \in \Var{s}}$: then it follows directly that
    $\mathsf{reach(\Var{l}, \Var{s}, \sigma_2 , \theta_2)}$.

  \item[-] $\mathsf{\exists \Var{r} \in dom(\sigma_1), 
   \Var{l} \in \sigma_1(\Var{r})}$, which is in a reachability path of minimum
   distance, from the root set to \Var{r} : then 
   $\mathsf{reach}(\Var{r}, \Var{s}, \sigma_1,\theta_1)$, and by inductive 
hypothesis, 
   $\mathsf{reach}(\Var{r}, \Var{s}, \sigma_2 , \theta_2)$. Also, by lemma 
   \ref{reach_bindings_remain}, $\sigma_1(\Var{r}) = \sigma_2(\Var{r})$. Then
   $\Var{l} \in \sigma_2(\Var{r})$ and
   $\mathsf{reach}(\Var{l}, \Var{s}, \sigma_2 , \theta_2)$,
   by definition.
  
  \item[-] $\mathsf{\exists \Var{tid} \in dom(\theta_1),
                   \Var{l} \in \pi_1(\theta_1(\Var{tid}))}$, which is in a
           reachability path of minimum distance, from the root set to \Var{l}: 
           then\\ 
           $\mathsf{reach}(\Var{tid}, \Var{s}, \sigma_1,\theta_1)$, and by 
           inductive hypothesis,\\ 
           $\mathsf{reach}(\Var{tid}, \Var{s}, \sigma_2 , \theta_2)$. Also, by 
           Lemma~\ref{reach_bindings_remain},
           $\theta_1(\Var{tid}) = \theta_2(\Var{tid})$. Then
           $\Var{l} \in \pi_1(\theta_2(\Var{tid}))$ and
           $\mathsf{reach}(\Var{l}, \Var{s}, \sigma_2 , \theta_2)$ by 
           definition.
  \item[-] $\mathsf{\exists \Var{cid} \in dom(\theta_1),
                   \Var{l} \in \theta_1(\Var{cid})}$, which is in a
           reachability path of minimum distance, from the root set to \Var{l}: 
           the reasoning is analogous to the previous case. It follows directly 
           that $\mathsf{reach}(\Var{l}, \Var{s}, \sigma_2 , \theta_2)$.
  \item[-] $\mathsf{\exists \Var{tid} \in dom(\theta_1), 
                    \Var{l} \in \pi_2(\theta_2(\Var{tid}))}$, which is in a 
           reachability path of minimum distance, from the root set to \Var{l}: 
           the situation is analogous to the previous case. It follows directly 
           that $\mathsf{reach}(\Var{l}, \Var{s}, \sigma_2 , \theta_2)$.
  \end{itemize}
\end{proof}

\begin{corollary}
\label{gc_preserve_reach_eq}
For well-formed configurations\\ 
$(\sigma_1 : \theta_1 : \Var{s})$ and $(\sigma_2 : \theta_2 : \Var{s})$,
if $(\sigma_1 : \theta_1 : \Var{s}) \gcstep 
(\sigma_2 : \theta_2 : \Var{s})$, then 
$(\sigma_1 : \theta_1 : \Var{s}) \rcheq 
(\sigma_2 : \theta_2 : \Var{s})$.
\end{corollary}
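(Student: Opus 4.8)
The plan is to read this corollary as a repackaging of the two preceding lemmas about $\gcstep$, augmented by a short monotonicity argument for the symmetric half of $\rcheq$. First I would unfold the hypothesis: by the definition of $\gcstep$ together with Definition~\ref{Simple GC cycle}, the step $(\sigma_1 : \theta_1 : \Var{s}) \gcstep (\sigma_2 : \theta_2 : \Var{s})$ means $(\sigma_2, \theta_2) = \mathsf{gc}(\Var{s}, \sigma_1, \theta_1)$, so that $\sigma_1 = \sigma_2 \uplus \sigma_2'$ and $\theta_1 = \theta_2 \uplus \theta_2'$ with every binding of the discarded parts $\sigma_2', \theta_2'$ unreachable from \Var{s}. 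In particular GC only deletes bindings, hence $\dom{\sigma_2} \subseteq \dom{\sigma_1}$, $\dom{\theta_2} \subseteq \dom{\theta_1}$, and every surviving binding keeps its value unchanged.

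For the forward half of $\rcheq$ I would take any $\Var{l} \in \dom{\sigma_1} \cup \dom{\theta_1}$ with $\mathsf{reach}(\Var{l}, \Var{s}, \sigma_1, \theta_1)$. Lemma~\ref{gc_preserves_reach} gives $\mathsf{reach}(\Var{l}, \Var{s}, \sigma_2, \theta_2)$, and Lemma~\ref{reach_bindings_remain} gives the corresponding value equality ($\sigma_1(\Var{l}) = \sigma_2(\Var{l})$ when $\Var{l} \in \dom{\sigma_1}$, and likewise for $\theta$). These are exactly the three clauses required by $\rcheq$ in this direction, so nothing more is needed here.

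The symmetric half requires, for any $\Var{l} \in \dom{\sigma_2} \cup \dom{\theta_2}$ with $\mathsf{reach}(\Var{l}, \Var{s}, \sigma_2, \theta_2)$, that $\Var{l}$ be reachable and carry the same value in the original configuration. The value clause is immediate, since $\dom{\sigma_2} \subseteq \dom{\sigma_1}$ and the surviving bindings are literally unchanged. The reachability clause, $\mathsf{reach}(\Var{l}, \Var{s}, \sigma_1, \theta_1)$, is a monotonicity property of reachability under store extension: a reachability witness over the smaller stores dereferences only bindings that are present---with identical values---in $\sigma_2 \subseteq \sigma_1$ and $\theta_2 \subseteq \theta_1$, so the very same chain of dereferences is a valid witness over the larger stores. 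I would prove this by induction on the minimal number of dereferences needed to reach \Var{l}, mirroring the case analysis in the proof of Lemma~\ref{gc_preserves_reach} but run in the store-extension direction.

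I expect the monotonicity step to be the only real obstacle, since it is not among the lemmas already established. The delicate point is the removal of bindings in the recursive calls of Definition~\ref{Reachability Simple GC} (the $\sigma \setminus \Var{r}$ and $\theta \setminus \Var{tid}$ sub-terms): one has to check that, after deleting the same binding from both stores, the sub-store built from $\sigma_1, \theta_1$ still contains the sub-store built from $\sigma_2, \theta_2$, so the induction hypothesis remains applicable along the path. Once this bookkeeping is in place the corollary follows by combining the two halves.
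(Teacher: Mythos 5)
Your proposal is correct and follows the same route as the paper, whose entire proof of this corollary is the single sentence that it is a direct consequence of Lemmas~\ref{reach_bindings_remain} and~\ref{gc_preserves_reach} and the definition of $\rcheq$. You in fact go one step further than the paper by observing that the symmetric clause of $\rcheq$ --- that a location reachable over the shrunken stores $\sigma_2, \theta_2$ is also reachable over $\sigma_1, \theta_1$ --- is not literally covered by either cited lemma (both run in the large-to-small direction) and requires a monotonicity-of-reachability-under-store-extension argument; your induction on the minimal dereference chain, including the bookkeeping for the $\sigma \setminus \Var{r}$ and $\theta \setminus \Var{tid}$ sub-stores in Definition~\ref{Reachability Simple GC}, is exactly the right way to discharge that step the paper leaves implicit.
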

\begin{proof}
It is a direct consequence of lemmas \ref{reach_bindings_remain},
\ref{gc_preserves_reach} and the definition of $\rcheq$.
\end{proof}

While the following lemma directly refers to the notion of well-formedness of
configurations, it is not required to describe in detail such notion in order to
gain confidence about the following statement and its proof, since they are 
intuitive enough (for details about well-formedness, we refer the reader 
to~\cite{dls}).  Also, the lemma will allow us to 
extend the mentioned progress property for $\luastep$
to the semantics obtained adding  $\gcstep$. In particular, it will guarantee 
that the introduced notion of observations over programs is well-defined also
for $\luastep \cup \gcstep$, allowing us to state the desired correctness for
$\gcstep$.

\begin{lemma}
\label{gc_preserves_well-formed}
For a well-formed configuration \mbox{$(\sigma_1 : \theta_1 : s)$}, if
$(\sigma_1 : \theta_1 : s) \gcstep (\sigma_2 : \theta_2 : s)$, for some
configuration \mbox{$(\sigma_2 : \theta_2 : s)$}, then 
$(\sigma_2 : \theta_2 : s)$ is well-formed.
\end{lemma}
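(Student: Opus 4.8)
The plan is to reduce well-formedness of the output configuration to the only two ingredients that a $\gcstep$ can possibly affect, and to show that neither is disturbed. Recall that well-formedness of a configuration $(\sigma : \theta : \Var{s})$ consists of (i) the syntactic side conditions on the term $\Var{s}$ that the context-free grammar cannot express, and (ii) a closure (no-dangling-pointer) condition stating that every location reachable from the root set of $\Var{s}$, and everything it transitively reaches through the stores, is bound in $\sigma$ or $\theta$, with its associated value again well-formed. Since $\gcstep$ leaves the instruction term $\Var{s}$ untouched, part (i) holds for $(\sigma_2 : \theta_2 : \Var{s})$ immediately from its validity for $(\sigma_1 : \theta_1 : \Var{s})$.

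For part (ii), I would first observe that by Definition~\ref{Simple GC cycle} and the side condition of $\gcstep$ we have $\sigma_1 = \sigma_2 \uplus \sigma'$ and $\theta_1 = \theta_2 \uplus \theta'$, with every removed location not reachable from $\Var{s}$. Taking the contrapositive, any location reachable from $\Var{s}$ under $(\sigma_1, \theta_1)$ is retained in $(\sigma_2, \theta_2)$; in particular every reference occurring literally in $\Var{s}$ (reachable by the first disjunct of Definition~\ref{Reachability Simple GC}) stays bound. I would then invoke Lemma~\ref{reach_bindings_remain} and Lemma~\ref{gc_preserves_reach}, conveniently packaged as Corollary~\ref{gc_preserve_reach_eq}, which yields $(\sigma_1 : \theta_1 : \Var{s}) \rcheq (\sigma_2 : \theta_2 : \Var{s})$: every location reachable from $\Var{s}$ in the old stores is still reachable in the new stores and is mapped to the same value. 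Hence the closure condition, which holds for $(\sigma_1 : \theta_1 : \Var{s})$, transports verbatim to $(\sigma_2 : \theta_2 : \Var{s})$, since it only quantifies over the (identical) reachable sub-store.

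The subtle point, and the one I expect to require the most care, is the non-maximality of $\gcstep$: the specification of $\mathsf{gc}$ permits $(\sigma_2, \theta_2)$ to retain locations that are themselves unreachable from $\Var{s}$, and such a retained-but-garbage binding could conceivably point to a binding that was collected, producing a dangling pointer inside the discarded region. I would dispel this by emphasizing that the no-dangling clause of well-formedness is imposed only along reachability paths originating in the root set: a location not reachable from $\Var{s}$ can never be dereferenced by the remaining computation, so whether or not its value mentions a collected location is immaterial to well-formedness (and, in any case, such a stranded garbage binding remains unreachable and is simply eligible for a subsequent $\gcstep$). Making this explicit --- that well-formedness constrains exactly the reachable fragment that $\rcheq$ shows to be preserved --- closes the argument. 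A full proof would additionally spell out, clause by clause, that the well-formedness predicate from~\cite{dls} is phrased relative to reachable locations, which I would read off directly from its definition.
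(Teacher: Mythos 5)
Your proof is correct and follows essentially the same route as the paper's: the instruction term is untouched by $\gcstep$, and the reachable portion of the stores is preserved verbatim by Lemmas~\ref{reach_bindings_remain} and~\ref{gc_preserves_reach} (packaged as Corollary~\ref{gc_preserve_reach_eq}), so both components of well-formedness transport. You are in fact somewhat more careful than the paper's one-paragraph argument, which silently passes over the non-maximality issue you raise --- that a retained-but-unreachable binding may mention a collected location --- and which, like your proposal, ultimately rests on reading the well-formedness predicate of~\cite{dls} as constraining only the reachable fragment of the stores.
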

\begin{proof}
From the definition of $\gcstep$, it follows that the step does not change the
instruction term. Also, by the previous lemmas, it follows that $\gcstep$
does not introduce dangling pointers. They also state that $\gcstep$ does not
modify the stores in any other way, besides removing garbage.
Then, it must be the case that also $(\sigma_2 : \theta_2 : s)$ is well-formed. 
\end{proof}

\begin{lemma}
\label{gc_finite_steps}
Over a well-formed configuration $(\sigma : \theta : s)$, only a finite number of
$\gcstep$ steps can be applied.
\end{lemma}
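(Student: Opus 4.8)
The plan is to exhibit a strictly decreasing natural-number measure on configurations that every $\gcstep$ step is forced to decrease, from which finiteness follows immediately since $\mathbb{N}$ is well-founded. The natural candidate is the total number of bindings held in the stores, $\mu(\sigma : \theta : s) = |\dom{\sigma}| + |\dom{\theta}|$.

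First I would observe that, by Definition~\ref{Simple GC cycle}, whenever $\mathsf{gc(\Var{s}, \sigma, \theta) = (\sigma', \theta')}$ we have $\sigma = \sigma' \uplus \sigma_2$ and $\theta = \theta' \uplus \theta_2$ for some $\sigma_2, \theta_2$. Because $\uplus$ denotes union of stores with disjoint domains, $\sigma'$ and $\theta'$ are literally sub-stores of $\sigma$ and $\theta$: their domains are subsets and the value mapped to each retained location is unchanged. In particular $|\dom{\sigma'}| \le |\dom{\sigma}|$ and $|\dom{\theta'}| \le |\dom{\theta}|$, so $\mu$ never increases along a $\gcstep$ step; note also that $\gcstep$ never allocates fresh references, so there is no mechanism by which the measure could grow.

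Next I would invoke the progress side-condition built into the definition of $\gcstep$, namely $\sigma' \neq \sigma \vee \theta' \neq \theta$. Since $\sigma'$ is a sub-store of $\sigma$, the inequality $\sigma' \neq \sigma$ can only be witnessed by $\sigma_2$ being non-empty, whence $|\dom{\sigma'}| < |\dom{\sigma}|$ strictly; the same reasoning applies to $\theta$. As at least one of the two disjuncts holds while the other component is at worst unchanged, the sum decreases strictly at every step: $\mu(\sigma' : \theta' : s) < \mu(\sigma : \theta : s)$.

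Finally, since $\mu$ takes values in $\mathbb{N}$ and strictly decreases with each application of $\gcstep$, no infinite chain of $\gcstep$ steps can start from $(\sigma : \theta : s)$; indeed the number of such steps is bounded above by $\mu(\sigma : \theta : s)$. I do not expect any genuine obstacle here. The only point demanding care is confirming that the store change forced by the side-condition genuinely \emph{removes} bindings rather than merely rebinding some location to a different value, and this is guaranteed precisely because $\mathsf{gc}$ returns a disjoint-union split of the original stores rather than an arbitrary new store.
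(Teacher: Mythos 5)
Your proof is correct and follows essentially the same route as the paper's: both argue that, by Definition~\ref{Simple GC cycle} together with the side-condition $\sigma' \neq \sigma \vee \theta' \neq \theta$, each $\gcstep$ step strictly shrinks at least one of the finite stores, so only finitely many steps are possible. You merely make the decreasing measure $|\dom{\sigma}| + |\dom{\theta}|$ explicit, which the paper leaves implicit.
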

\begin{proof}
By Definition~\ref{Simple GC cycle} and $\gcstep$, if
\begin{center}
$(\sigma : \theta : s) \gcstep (\sigma' : \theta' : s)$
\end{center}
then it must be the case that either $\sigma'$ or $\theta'$ is a proper subset 
of $\sigma$ or $\theta$, respectively. Then, being the stores partial finite 
functions, it is clear that GC can be performed at most a finite number of steps.
\end{proof}

The following lemma is a useful tool taken from \cite{ammm}. It codifies a 
simple intuition of plain GC: it must be possible to postpone any GC step, 
without changing the observations of the program. In its statement we use the 
fact that $\gcstep$ does not change the instruction term.

\begin{lemma}[Postponement]
\label{postponement}
For a given well-formed configuration $(\sigma_1 : \theta_1 : s_1)$, if
\begin{center}
$(\sigma_1 : \theta_1 : s_1) \gcstep (\sigma_2 : \theta_2 : s_1) 
\luastep
 (\sigma_3 : \theta_3 : s_2)$.
\end{center}
then $\exists (\sigma_4 : \theta_4 : s_2)$ such that:
\begin{center}
$(\sigma_1 : \theta_1 : s_1) \luastep (\sigma_4 : \theta_4 : s_2) \gcstep
 (\sigma_3' : \theta_3' : s_2)$
\end{center}
where $(\sigma_3 : \theta_3 : s_2) \rcheq (\sigma_3' : \theta_3' : s_2)$.
\end{lemma}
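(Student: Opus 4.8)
The plan is to exploit the two facts already established about the interaction between $\gcstep$ and $\luastep$: that a GC step leaves the reachable portion of the stores untouched (Corollary~\ref{gc_preserve_reach_eq}), and that $\luastep$ is insensitive to the non-reachable portion of the stores (Lemma~\ref{l_preserve_reach_eq}). Concretely, I would first apply Corollary~\ref{gc_preserve_reach_eq} to the hypothesis $(\sigma_1 : \theta_1 : s_1) \gcstep (\sigma_2 : \theta_2 : s_1)$, obtaining $(\sigma_1 : \theta_1 : s_1) \rcheq (\sigma_2 : \theta_2 : s_1)$. Since the two configurations carry the same instruction term $s_1$ and coincide on the reachable parts of their stores, and since $(\sigma_2 : \theta_2 : s_1)$ takes the step $\luastep$ to $(\sigma_3 : \theta_3 : s_2)$, Lemma~\ref{l_preserve_reach_eq} (using the symmetry of $\rcheq$) yields a matching step $(\sigma_1 : \theta_1 : s_1) \luastep (\sigma_4 : \theta_4 : s_2)$ with $(\sigma_3 : \theta_3 : s_2) \rcheq (\sigma_4 : \theta_4 : s_2)$. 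This already gives the first half of the desired rewriting: the Lua step performed before any collection.

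It then remains to recover a collection step out of $(\sigma_4 : \theta_4 : s_2)$ that lands in something equivalent to $(\sigma_3 : \theta_3 : s_2)$. The key observation is that the bindings discarded by the original GC step, namely $\sigma_{\mathit{rem}} = \sigma_1 \setminus \sigma_2$ and $\theta_{\mathit{rem}} = \theta_1 \setminus \theta_2$, are still garbage after the Lua step. Indeed they were non-reachable in $(\sigma_1 : \theta_1 : s_1)$ by Definition~\ref{Simple GC cycle}, and since $\luastep$ never deletes a store binding but only allocates fresh ones, they persist in $\sigma_4, \theta_4$; by Lemma~\ref{preserve_non_reach} they remain non-reachable in $(\sigma_4 : \theta_4 : s_2)$. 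Hence removing exactly $\sigma_{\mathit{rem}}$ and $\theta_{\mathit{rem}}$ is a legitimate GC step, and it is progress-making because the original $\gcstep$ was required to change the stores, so $\sigma_{\mathit{rem}} \cup \theta_{\mathit{rem}} \neq \emptyset$. This produces a step $(\sigma_4 : \theta_4 : s_2) \gcstep (\sigma_3' : \theta_3' : s_2)$.

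Finally I would close the diagram by transitivity of $\rcheq$: Corollary~\ref{gc_preserve_reach_eq} applied to this new collection step gives $(\sigma_4 : \theta_4 : s_2) \rcheq (\sigma_3' : \theta_3' : s_2)$, and combining it with $(\sigma_3 : \theta_3 : s_2) \rcheq (\sigma_4 : \theta_4 : s_2)$ (using that $\rcheq$ is an equivalence relation) yields $(\sigma_3 : \theta_3 : s_2) \rcheq (\sigma_3' : \theta_3' : s_2)$, as required. The main obstacle I anticipate is the middle step: making precise that the discarded garbage literally survives the Lua step inside $(\sigma_4 : \theta_4 : s_2)$. Lemma~\ref{l_preserve_reach_eq} only asserts the existence of $(\sigma_4 : \theta_4 : s_2)$ up to the renaming of fresh locations, so I would pin down $\sigma_4, \theta_4$ as the concrete, deterministic outcome of firing the same redex directly on $(\sigma_1 : \theta_1 : s_1)$ (justified by Assumption~\ref{ass:restrictions}), and then argue by inspection of the computation rules that such a step merely extends the stores with fresh bindings without touching $\sigma_{\mathit{rem}}$ or $\theta_{\mathit{rem}}$.
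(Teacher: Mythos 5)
Your proof is correct, but it is organized differently from the paper's. The paper proves Lemma~\ref{postponement} by a fresh case analysis on the modular structure of $\luastep$ (store-independent steps, steps touching $\sigma$, steps touching $\theta$), in each case constructing the commuting square by hand and concluding that the two final configurations coincide outright (modulo a renaming in the allocation case). You instead factor the argument through lemmas already on the shelf: Corollary~\ref{gc_preserve_reach_eq} gives $(\sigma_1 : \theta_1 : s_1) \rcheq (\sigma_2 : \theta_2 : s_1)$, Lemma~\ref{l_preserve_reach_eq} (via symmetry of $\rcheq$, and well-formedness of the post-GC configuration from Lemma~\ref{gc_preserves_well-formed}) transports the $\luastep$ step back to $(\sigma_1 : \theta_1 : s_1)$, Lemma~\ref{preserve_non_reach} shows the originally discarded bindings are still collectible, and transitivity of $\rcheq$ closes the diagram. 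This buys you a shorter, more modular proof that does not duplicate the case analysis already carried out inside Lemma~\ref{l_preserve_reach_eq}, at the price of a weaker intermediate conclusion ($\rcheq$-equivalence rather than equality) --- which is all the statement asks for anyway. The one step you correctly flag as needing care, namely that $\sigma_{\mathit{rem}}$ and $\theta_{\mathit{rem}}$ literally survive inside $(\sigma_4 : \theta_4 : s_2)$ so that removing exactly them satisfies Definition~\ref{Simple GC cycle} and the progress side-condition of $\gcstep$, is also present (implicitly) in the paper's proof, and your justification --- non-reachable bindings cannot be the target of an update since an updated reference must occur in the redex, hence in the root set --- is the right one.
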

\begin{proof}
We will follow the modular structure of $\luastep$ to reason over 
the step that 
transforms $(\sigma_2 : \theta_2 : s_1)$ into $(\sigma_3 : \theta_3 : s_2)$:
\begin{itemize}
  \item[-] {\it The computation does not change bindings from a store or 
dereferences locations}: then it must be the case that every information from 
the stores is already put into the instruction term $s_1$ so as to make the 
computation from $\luastep$ viable, without regard to the content of the stores. 
Then, the hypothesis can be rewritten as:
\begin{center}
$(\sigma_1 : \theta_1 : s_1) \gcstep 
 (\sigma_2 : \theta_2 : s_1) \luastep
 (\sigma_2 : \theta_2 : s_2)$
\end{center}
If we take $(\sigma_4 : \theta_4 : s_2) = (\sigma_1 : \theta_1 : s_2)$, then 
we can assert that:
\begin{center}
$(\sigma_1 : \theta_1 : s_1) \luastep (\sigma_1 : \theta_1 : s_2) = 
 (\sigma_4 : \theta_4 : s_2)$
\end{center}
where we exploited the fact that, for the previous $\luastep$ step to be 
performed, the actual content of the stores does not affect the applicability 
and the outcome of said computation. 
Then, by Lemma~\ref{preserve_non_reach}, if a binding was ready to be collected
in $(\sigma_1 : \theta_1 : s_1)$ it will remain in that state  
in $(\sigma_1 : \theta_1 : s_2)$. So, by the non-deterministic nature of 
$\gcstep$, we could ask for it to remove the same bindings
that changed the stores from $(\sigma_1 : \theta_1 : s_1)$ into the stores
from $(\sigma_2 : \theta_2 : s_1)$. 
Hence, it must be the case that 
$(\sigma_1 : \theta_1 : s_2) \gcstep (\sigma_2 : \theta_2 : s_2)$ 
holds. We obtained:
\begin{center}
$(\sigma_1 : \theta_1 : s_1) \luastep (\sigma_1 : \theta_1 : s_2) \gcstep
 (\sigma_2 : \theta_2 : s_2)$
\end{center}

Finally, 
$(\sigma_3 : \theta_3 : \Var{s}_2) \rcheq (\sigma_3' : \theta_3' : \Var{s}_2)$
because 
\begin{center}
$(\sigma_3 : \theta_3 : \Var{s}_2) = (\sigma_2 : \theta_2 : \Var{s}_2) =
 (\sigma_3' : \theta_3' : \Var{s}_2)$
\end{center}

\item[-] {\it The computation changes or dereferences locations from $\sigma_1$}:
we would need to do case analysis on each computation that interacts with the 
value store. As an example, let us consider the implicit dereferencing of
a reference to $\sigma_1$. That is, the $\luastep$ step 
should be:
\begin{center}
$(\sigma_2 : \theta_2 : 
\overset{\overset{\Var{s}_1}{=}}{\Var{E} [\![ \Var{r}\;]\!]})
\luastep (\sigma_2 : \theta_2 : 
\overset{\overset{\Var{s}_2}{=}}{\Var{E} [\![ \sigma_2(\Var{r}) \;]\!]})$
\end{center} 
Then, the hypothesis can be rewritten as:
\begin{center}
$(\sigma_1 : \theta_1 : s_1) \gcstep 
 (\sigma_2 : \theta_2 : s_1) \luastep
 (\sigma_2 : \theta_2 : s_2)$
\end{center}
If we take $(\sigma_4 : \theta_4 : s_3) = (\sigma_1 : \theta_1 : s_2)$, 
we can assert that: 
\begin{center}
$(\sigma_1 : \theta_1 : 
\overset{\overset{\Var{s}_1}{=}}{\Var{E} [\![ \Var{r}\;]\!]})
\luastep (\sigma_1 : \theta_1 : 
\overset{\overset{\Var{s}_2}{=}}{\Var{E} [\![ \sigma_2(\Var{r}) \;]\!]})$
\end{center} 
because \Var{r} is reachable in $(\sigma_1 : \theta_1 : s_1)$, and
the $\gcstep$ step from the hypothesis preserves its binding, in the sense 
expressed in Lemma~\ref{reach_bindings_remain}: hence, if it was possible to 
perform the dereferencing in $(\sigma_2 : \theta_2 : s_1)$ (by hypothesis),
it must be possible to perform it in $(\sigma_1 : \theta_1 : s_1)$, obtaining
the same result. Finally, by preservation of bindings ready for collection after
a $\luastep$ step, Lemma~\ref{preserve_non_reach}, and the non-deterministic 
behaviour of $\gcstep$, we could ask for the GC step to remove exactly the 
necessary bindings so that 
$(\sigma_1 : \theta_1 : \Var{s}_2) \gcstep (\sigma_2 : \theta_2 : \Var{s}_2)$ 
holds. We obtained:
\begin{center}
$(\sigma_1 : \theta_1 : \Var{s}_1)
\luastep (\sigma_1 : \theta_1 : \Var{s}_2)
 \gcstep (\sigma_2 : \theta_2 : \Var{s}_2)$
\end{center}

Finally, 
$(\sigma_3 : \theta_3 : \Var{s}_2) \rcheq (\sigma_3' : \theta_3' : \Var{s}_2)$
because 
\begin{center}
$(\sigma_3 : \theta_3 : \Var{s}_2) = (\sigma_2 : \theta_2 : \Var{s}_2) =
 (\sigma_3' : \theta_3' : \Var{s}_2)$
\end{center}

\item[-] {\it The computation changes or dereferences locations from $\theta_1$}:
we would need to do case analysis on each computation that interacts with 
$\theta_1$. As an example, let us consider table allocation. 
Then, the hypothesis can be rewritten as:
\begin{center}
{\small
$(\sigma_1 : \theta_1 : s_1) \gcstep 
 (\sigma_2 : \theta_2 : s_1) \luastep
 (\sigma_2 : \theta_2 \uplus \{ (\Var{tid}, \Var{t}) \} : 
 \overset{\overset{\Var{s}_2}{=}}{\Var{E} [\![ \Var{tid} \;]\!]})$}
\end{center}
for an adequate internal representation of a table, \Var{t}, and table 
identifier \Var{tid}, that, for our purposes, it will be useful if\\
$\Var{tid} \notin dom(\theta_1)$. If it is not the case, we can continue with our
reasoning over an appropriate $\alpha$-converted configuration, where the 
references in $(\sigma_1 : \theta_1 : s_1)$ are consistently changed so as to 
make $\Var{tid} \notin dom(\theta_1)$. It is because of cases like this one 
that we cannot assert a stronger postponement statement, as the one in 
\cite{ammm}: we are not talking about convergence towards a single configuration;
we need to think in terms of $\rcheq$-equivalent configurations.

If we take
\begin{center} 
$(\sigma_4 : \theta_4 : s_3) 
 = (\sigma_1 : \theta_1 \uplus \{ (\Var{tid}, \Var{t}) \} : s_2)$
\end{center}

we know that:
\begin{center} 
$(\sigma_1 : \theta_1 : s_1) \luastep 
(\sigma_1 : \theta_1 \uplus \{ (\Var{tid}, \Var{t}) \} : s_2)$
\end{center}
where we can ask for the instruction term to be exactly 
$\Var{s}_2 = \Var{E} [\![ \Var{tid} \;]\!]$. By Lemma~\ref{preserve_non_reach} we know that every binding 
which is ready for collection in $(\sigma_1 : \theta_1 : s_1)$ is in the same 
state in $(\sigma_1 : \theta_1 \uplus \{ (\Var{tid}, \Var{t}) \} : s_2)$. Even 
more, such bindings just belongs to $\sigma_1$ or $\theta_1$. Then, by the 
non-deterministic nature of $\gcstep$ we could ask for it to remove just the 
necessary bindings so as to make true
\begin{center} 
$(\sigma_1 : \theta_1 \uplus \{ (\Var{tid}, \Var{t}) \} : s_2) 
\gcstep
(\sigma_2 : \theta_2 \uplus \{ (\Var{tid}, \Var{t}) \} : s_2)$.
\end{center}
Then, the following holds:
\begin{center} 
{\small
$(\sigma_1 : \theta_1 : s_1) \luastep ...  \gcstep
(\sigma_2 : \theta_2 \uplus \{ (\Var{tid}, \Var{t}) \} : s_2)$}
\end{center}

Finally, 
$(\sigma_3 : \theta_3 : \Var{s}_2) \rcheq (\sigma_3' : \theta_3' : \Var{s}_2)$
because 
\begin{center}
$(\sigma_3 : \theta_3 : \Var{s}_2) = 
 (\sigma_2 : \theta_2 \uplus \{ (\Var{tid}, \Var{t}) \} : \Var{s}_2) =
 (\sigma_3' : \theta_3' : \Var{s}_2)$
\end{center}
\end{itemize}
\end{proof}

\paragraph*{Correctness of simple GC}

The expected statement of GC correctness should mention that, for a given 
configuration, the observations under $\luastep$ should be the same that those
under $\luagcstep$ (\ie $\luastep \cup \gcstep$). However, under $\luastep$ and
$\luagcstep$ we expect the observations to be just a singleton:
the programs either diverge or reach to a end, returning some results or an error
object. Giving this observation, we could change the statement of GC correctness
to reach to a property that can be proved with less effort: given a 
configuration, under $\luastep$ its execution reaches to a end, if and only if 
its execution reaches to an end under $\luagcstep$, and, in both cases, what is 
returned (either values or error objects) is the same.

The stated property will imply the preservation of observations, as defined in Definition~\ref{program_equivalence}, but it will allow us to focus just on convergent 
computations; preservation of divergent computations will be a consequence of
the double implication structure of the statement:

\begin{theorem}[GC correctness]
\label{gc_correctness}
For a given well-formed configuration 
$\sigma : \theta : \Var{s}$, 
\begin{center}
$(\sigma : \theta : \Var{s}) \Downarrow_{\luastep} (\sigma' : \theta' : \Var{s}')
\Leftrightarrow
(\sigma : \theta : \Var{s}) \Downarrow_{\luagcstep} 
(\sigma'' : \theta'' : \Var{s}'')$
\end{center}
and
$\mathsf{result}(\sigma' : \theta' : \Var{s}') = 
 \mathsf{result}(\sigma'' : \theta'' : \Var{s}'')$.
\end{theorem}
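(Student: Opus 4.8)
The plan is to prove the biconditional by handling its two directions separately and to read off the result-equality from the relation $\rcheq$. The forward direction is immediate: since $\luagcstep = \luastep \cup \gcstep$, every $\luastep$-computation is verbatim a $\luagcstep$-computation ending at the same terminal configuration, so from $(\sigma:\theta:\Var{s}) \Downarrow_{\luastep} (\sigma':\theta':\Var{s}')$ I would simply take $(\sigma'':\theta'':\Var{s}'') = (\sigma':\theta':\Var{s}')$, making result-equality trivial. All the work lies in the backward direction, where an arbitrary interleaving of $\luastep$ and $\gcstep$ steps must be collapsed onto a pure $\luastep$-computation that ends in an $\rcheq$-equivalent configuration.

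For the backward direction I would prove, by induction on the length of the given $\luagcstep$-computation, the invariant: whenever $(\sigma:\theta:\Var{s})$ reduces to a configuration $C$ under $\luagcstep$, there is a configuration $\hat C$ with $(\sigma:\theta:\Var{s}) \luastep^{*} \hat C$ and $\hat C \rcheq C$. The base case is immediate. In the inductive step, a trailing $\gcstep$ is absorbed without extending the prefix: by Corollary~\ref{gc_preserve_reach_eq} the step relates $\rcheq$-equivalent configurations, and transitivity of $\rcheq$ preserves the invariant. A trailing $\luastep$ is transported along the equivalence by Lemma~\ref{l_preserve_reach_eq}, which fires the matching step from $\hat C$ and yields a new prefix endpoint still $\rcheq$-equivalent to the advanced configuration. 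Both cases keep the instruction terms of $\hat C$ and $C$ synchronized (as $\gcstep$ leaves the term untouched and $\rcheq$ only relates configurations with a common term), and well-formedness is maintained throughout by Corollary~\ref{progress_prop} and Lemma~\ref{gc_preserves_well-formed}, so the hypotheses of the invoked lemmas always hold.

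Instantiating the invariant at the terminal configuration $C$ of the convergent $\luagcstep$-run gives $\hat C$ with $(\sigma:\theta:\Var{s}) \luastep^{*} \hat C$ and $\hat C \rcheq C$. Because $C$ is terminal it admits no $\luastep$ step, so by the contrapositive of Lemma~\ref{l_preserve_reach_eq} neither does $\hat C$; hence $(\sigma:\theta:\Var{s}) \Downarrow_{\luastep} \hat C$, and the convergence direction is established. Since $\hat C$ and $C$ are then final configurations sharing an instruction term, Lemma~\ref{reach_eq_preserve_result} gives $\mathsf{result}(\hat C) = \mathsf{result}(C)$, the required result-equality. Divergence is handled by contraposition: an infinite $\luagcstep$-run cannot be made of $\gcstep$ steps alone (Lemma~\ref{gc_finite_steps}), so it contains infinitely many $\luastep$ steps, and the same transport argument exhibits a divergent $\luastep$-run.

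I expect the main obstacle to be careful bookkeeping rather than a single deep step. The delicate points are keeping the $\rcheq$-equivalence and the instruction terms exactly aligned across every inductive case, and ensuring each intermediate configuration stays well-formed so that Lemmas~\ref{l_preserve_reach_eq} and~\ref{reach_eq_preserve_result} remain applicable. The competing strategy of physically commuting steps through the Postponement lemma (Lemma~\ref{postponement}) would require an explicit bubbling of every $\gcstep$ past the $\luastep$ steps, during which the $\rcheq$-equivalences accumulate and must be threaded through the remainder of the computation; the invariant-based induction avoids this by never reordering steps and only transporting the pure-$\luastep$ prefix along equivalences.
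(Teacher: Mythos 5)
Your proof is correct, and it reaches the same conclusion by a genuinely different decomposition than the paper. The paper's argument normalizes the mixed trace globally: it iterates the Postponement Lemma~\ref{postponement} to bubble every $\gcstep$ past the subsequent $\luastep$ steps, obtaining a trace of the shape $\luastep^{*};\gcstep^{*}$, and then concludes by noting that the configuration at the end of the pure $\luastep$ prefix is $\rcheq$-related (via Corollary~\ref{gc_preserve_reach_eq}) to the terminal configuration, so Lemma~\ref{reach_eq_preserve_result} closes the result-equality. You instead run a forward simulation with $\rcheq$ as the simulation relation: $\gcstep$ steps are absorbed by Corollary~\ref{gc_preserve_reach_eq} and transitivity, and $\luastep$ steps are transported by Lemma~\ref{l_preserve_reach_eq}, so the Postponement Lemma is never needed. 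Your approach buys a cleaner induction --- the paper's ``inductive reasoning over the number of computation steps'' quietly requires re-threading the accumulated $\rcheq$-equivalences through the remainder of the trace after each commutation, which is exactly an application of Lemma~\ref{l_preserve_reach_eq} that the paper leaves implicit, whereas your invariant makes that transport the explicit workhorse --- and it also gives a more direct account of divergence, since each $\luastep$ step of the mixed run maps to exactly one $\luastep$ step of the simulated run (with Lemma~\ref{gc_finite_steps} ruling out infinite $\gcstep$ tails). What the paper's route buys in exchange is the normalized trace shape itself, which is a reusable artifact. One small point to keep explicit in a polished write-up: when you argue that the terminal $C$ forces $\hat{C}$ to be terminal via the contrapositive of Lemma~\ref{l_preserve_reach_eq}, you are using the symmetry of $\rcheq$ and the fact that $\rcheq$-related configurations share the instruction term, so $\hat{C}$ has one of the final syntactic forms on which $\mathsf{result}$ is defined; both facts hold, but they are doing real work there.
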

\begin{proof}
Let us assume that 
$(\sigma : \theta : \Var{s}) \Downarrow_{\luastep} 
(\sigma' : \theta' : \Var{s}')$. Then
$(\sigma' : \theta' : \Var{s}')$ is a final configuration where \textsf{result}
is defined. Because $\luastep \; \subseteq \; \luagcstep$, 
it is always possible to emulate the previous trace by not using $\gcstep$ steps.
Then, $(\sigma : \theta : \Var{s}) \Downarrow_{\luagcstep} 
       (\sigma' : \theta' : \Var{s}')$, where it follows that, in both cases,
the computations returns the same, under $\luagcstep$ and $\luastep$.

On the other hand, let us assume that
\begin{center} 
$(\sigma : \theta : \Var{s}) \Downarrow_{\luagcstep} 
(\sigma' : \theta' : \Var{s}')$
\end{center}
Then, it must be the case that there exist a finite trace of computation steps, 
as follows:
\begin{center}
{\small
$(\sigma : \theta : s) \luagcstep
(\sigma_1 : \theta_1 : s_1) 
\luagcstep ... \luagcstep 
(\sigma_n : \theta_n : s_n)$
}
\end{center}
where $(\sigma_n : \theta_n : s_n) = (\sigma' : \theta' : s')$ is a final 
configuration over which \textsf{result} is defined.

By applying inductive reasoning over the number of computation steps and the 
Postponement Lemma~\ref{postponement}, it can be shown that we can rewrite the previous trace as 
follows:

\begin{center}
{\small
$(\sigma : \theta : s) \luastep ...
\luastep (\sigma_{i'} : \theta_{i'} : s_{i'})
\gcstep ...
\gcstep (\sigma_{n'} : \theta_{n'} : s_{i'})
$
}
\end{center}

where every computation that does not involve GC is performed at the beginning.
We obtained a convergent trace consisting only in $\luastep$ steps. That is:
\begin{center} 
$(\sigma : \theta : \Var{s}) \Downarrow_{\luastep} 
(\sigma_{i'} : \theta_{i'} : s_{i'})$
\end{center}
What remains is to see if the result is also preserved. To that end, note that
the postponement lemma used also tells us that
\begin{center}
$(\sigma_{n'} : \theta_{n'} : s_{i'}) \rcheq
(\sigma_{n} : \theta_{n} : s_{n})$
\end{center}
 
Then, because final configurations which are $\rcheq$ represent the same
result, according to Lemma~\ref{reach_eq_preserve_result}, it follows that
\begin{center}
{
$\mathsf{result}(\sigma_{n'} : \theta_{n'} : s_{i'}) = 
\mathsf{result}(\sigma_{n} : \theta_{n} : s_{n})$
}
\end{center}

Finally, because $\rcheq$ is closed under $\gcstep$ steps, Lemma~\ref{gc_preserve_reach_eq}, it must be the case that:

\begin{center}
{\small
$(\sigma_{i'} : \theta_{i'} : s_{i'}) \rcheq (\sigma_{n'} : \theta_{n'} : s_{i'})$
}
\end{center}

Hence,
\begin{center}
{
$\mathsf{result}(\sigma_{i'} : \theta_{i'} : s_{i'}) = 
\mathsf{result}(\sigma_{n'} : \theta_{n'} : s_{i'}) = 
\mathsf{result}(\sigma_{n} : \theta_{n} : s_{n})$
}
\end{center}

\end{proof}

An immediate corollary of the previous theorem is that, under $\luagcstep$, 
the set of observations over programs is a singleton, even under the 
non-determinism nature of $\gcstep$:

\begin{corollary}
\label{singleton_under_luagcstep}
For a well-formed configuration 
$\mathsf{\sigma : \theta : \Var{s}}$, \\
$\mathsf{\vert obs(\sigma : \theta : \Var{s}, \luagcstep) \vert = 1}$
\end{corollary}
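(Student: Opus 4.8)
The plan is to reduce the whole statement to the deterministic behaviour of $\luastep$, transported across $\luagcstep$ by Theorem~\ref{gc_correctness}. First I would recall that, by Assumption~\ref{ass:restrictions} together with the progress Corollary~\ref{progress_prop}, the relation $\luastep$ is deterministic and never gets stuck; hence for the given well-formed configuration exactly one of two situations occurs: either $(\sigma : \theta : \Var{s})$ diverges under $\luastep$, or it converges to a final configuration $C'$ that is unique up to the renaming of locations assumed in Assumption~\ref{ass:restrictions}. In particular $\mathsf{obs}(\sigma : \theta : \Var{s}, \luastep)$ is already a singleton, equal to $\{\bot\}$ in the first case and to $\{\mathsf{result}(C')\}$ in the second. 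The goal is then to show that $\mathsf{obs}(\sigma : \theta : \Var{s}, \luagcstep)$ coincides with it.

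In the convergent case, Theorem~\ref{gc_correctness} gives directly that some $\luagcstep$-computation converges and that every convergent $\luagcstep$-computation reaches a configuration $C''$ with $\mathsf{result}(C'') = \mathsf{result}(C')$, so all convergent $\luagcstep$-runs contribute the single value $\mathsf{result}(C')$. What remains is to rule out a divergent $\luagcstep$-run coexisting with it, since Theorem~\ref{gc_correctness} speaks only about convergence; this is the step I expect to be the main obstacle. The idea is that an infinite $\luagcstep$-trace can be simulated, step for step, by an infinite $\luastep$-trace up to reachability-equivalence. Concretely, I would walk the infinite $\luagcstep$-trace while maintaining a pure $\luastep$-trace whose current configuration is $\rcheq$ to the current one: a $\gcstep$ step is matched by doing nothing, since $\gcstep$ preserves $\rcheq$ by Corollary~\ref{gc_preserve_reach_eq}, while a $\luastep$ step is matched by a corresponding $\luastep$ step using Lemma~\ref{l_preserve_reach_eq}, which keeps the two traces $\rcheq$. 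Since a maximal run of consecutive $\gcstep$ steps is finite by Lemma~\ref{gc_finite_steps}, the infinite $\luagcstep$-trace must contain infinitely many $\luastep$ steps, so the simulated $\luastep$-trace is infinite, contradicting convergence of $\luastep$. Hence $\bot \notin \mathsf{obs}$ and $\mathsf{obs}(\sigma : \theta : \Var{s}, \luagcstep) = \{\mathsf{result}(C')\}$.

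In the divergent case the two directions separate neatly. Existence of a divergent $\luagcstep$-run is free, because the witnessing infinite $\luastep$-computation is already a $\luagcstep$-computation ($\luastep \subseteq \luagcstep$), so $(\sigma : \theta : \Var{s}) \Uparrow_{\luagcstep}$ and $\bot \in \mathsf{obs}$. To see that no $\luagcstep$-run converges, I would invoke the right-to-left implication of Theorem~\ref{gc_correctness}: a convergent $\luagcstep$-computation would force a convergent $\luastep$-computation, contradicting divergence of $\luastep$. Therefore $\mathsf{obs}(\sigma : \theta : \Var{s}, \luagcstep) = \{\bot\}$, again a singleton. Throughout I rely on Lemma~\ref{gc_preserves_well-formed} to know that $\luagcstep$ stays within well-formed configurations, so that every non-final configuration reached can indeed take a further step and divergence is genuinely the only alternative to convergence. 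The delicate point, and the one I would write out carefully, is precisely the simulation of an \emph{infinite} $\luagcstep$-trace: unlike the finite postponement used inside the proof of Theorem~\ref{gc_correctness}, here I cannot push all GC steps to the end, so I must argue the step-by-step $\rcheq$-preserving construction above together with the finiteness of consecutive GC runs to guarantee that infinitely many $\luastep$ steps survive.
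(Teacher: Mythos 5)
Your proposal follows the same top-level route as the paper: Corollary~\ref{singleton_under_luagcstep} is derived from Theorem~\ref{gc_correctness} together with the determinism of $\luastep$, and the paper's own proof is literally the one sentence ``it follows immediately from the previous theorem and the determinism of programs under $\luastep$.'' The value of your write-up is that you correctly identify the point at which ``immediately'' is doing real work: Theorem~\ref{gc_correctness} is an equivalence between the \emph{existence} of convergent computations (with matching results), so on its own it does not exclude a divergent $\luagcstep$-run coexisting with a convergent one, which would put both $\bot$ and $\mathsf{result}(C')$ into the observation set and break the cardinality claim. Your step-by-step simulation of an infinite $\luagcstep$-trace by an infinite $\luastep$-trace --- matching $\gcstep$ steps by Corollary~\ref{gc_preserve_reach_eq} and transitivity of $\rcheq$, matching $\luastep$ steps by Lemma~\ref{l_preserve_reach_eq}, and using Lemma~\ref{gc_finite_steps} to guarantee that infinitely many $\luastep$ steps survive --- is exactly the missing piece, and it is sound: the $\rcheq$ relation and Lemma~\ref{l_preserve_reach_eq} both keep the instruction term synchronized, so the invariant you maintain is well-typed throughout. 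The paper nowhere makes this argument (its Postponement Lemma~\ref{postponement} is only ever applied to finite traces inside the proof of Theorem~\ref{gc_correctness}, and its later observation-based corollary appeals to ``determinism of $\luagcstep$,'' which is the very statement being proved). So your proof is not a different decomposition, but a completion of the paper's; the divergent case and the appeal to Lemma~\ref{gc_preserves_well-formed} for progress are likewise handled correctly.
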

\begin{proof}
It follows immediately from the previous theorem and the determinism of programs
under $\luastep$.
\end{proof}

Now, based on the observations of the beginning of this section, we can state an 
equivalent version of correctness for simple GC, but in terms of the notion
of observations previously defined:
\begin{corollary}[GC correctness]
For a given well-formed configuration 
$\sigma : \theta : \Var{s}$, 
\begin{center}
$(\sigma : \theta : \Var{s}, \luastep) \equiv
(\sigma : \theta : \Var{s}, \luagcstep)$
\end{center}
\end{corollary}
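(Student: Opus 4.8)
The plan is to derive this statement directly from Theorem~\ref{gc_correctness}, unfolding the definition of \textsf{obs} and of program equivalence~($\equiv$). By the definition of observations, $\mathsf{obs}(\sigma : \theta : \Var{s}, \rightarrow)$ contains $\bot$ exactly when the configuration diverges under $\rightarrow$, and otherwise contains the \textsf{result} of each reachable final configuration. Hence proving the equivalence amounts to showing that the two observation sets coincide, which I would do by the natural case split on whether $\sigma : \theta : \Var{s}$ converges under $\luastep$.

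In the convergent case, suppose $(\sigma : \theta : \Var{s}) \Downarrow_{\luastep} (\sigma' : \theta' : \Var{s}')$. Since $\luastep$ is deterministic (Assumption~\ref{ass:restrictions}), this is the only outcome, so $\mathsf{obs}(\sigma : \theta : \Var{s}, \luastep) = \{\mathsf{result}(\sigma' : \theta' : \Var{s}')\}$. By the forward direction of Theorem~\ref{gc_correctness}, the configuration also converges under $\luagcstep$, and by the result-equality part of that theorem every $\luagcstep$-final configuration carries the same \textsf{result}. Appealing to Corollary~\ref{singleton_under_luagcstep}, which guarantees that $\mathsf{obs}(\cdot, \luagcstep)$ is a singleton, I conclude that the two observation sets are equal.

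For the divergent case I would argue that $\mathsf{obs}(\sigma : \theta : \Var{s}, \luastep) = \{\bot\}$ and show the same for $\luagcstep$. Since $\luastep \subseteq \luagcstep$, the infinite $\luastep$-trace witnessing divergence is itself a valid $\luagcstep$-trace (one that simply never chooses a $\gcstep$ step), so $\bot \in \mathsf{obs}(\sigma : \theta : \Var{s}, \luagcstep)$. Conversely, no \textsf{result} can appear: were $\sigma : \theta : \Var{s}$ to converge under $\luagcstep$, the backward direction of Theorem~\ref{gc_correctness} would force convergence under $\luastep$, contradicting the case assumption. Combined again with Corollary~\ref{singleton_under_luagcstep}, this pins the $\luagcstep$-observation set to $\{\bot\}$.

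I expect the genuinely delicate point to be the divergent case rather than any computation: one must be careful that \emph{diverges under} $\luastep$ really yields a $\bot$ observation under $\luagcstep$, which hinges on two facts already in hand---the containment $\luastep \subseteq \luagcstep$ (so a divergent run is preserved) and Lemma~\ref{gc_finite_steps} (so that GC steps alone cannot manufacture a spurious infinite trace, ensuring the singleton characterization is meaningful). All the heavy lifting---the correspondence of results up to the reachable portion of the stores---has already been discharged in Theorem~\ref{gc_correctness} via the Postponement Lemma, so here I would only need to repackage that theorem together with determinism into the observational formulation.
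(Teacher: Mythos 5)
Your proposal is correct and follows essentially the same route as the paper: both derive the observational equivalence by repackaging Theorem~\ref{gc_correctness} together with the singleton property of $\mathsf{obs}(\cdot,\luagcstep)$ (Corollary~\ref{singleton_under_luagcstep}) and determinism of $\luastep$, splitting on convergence versus divergence. If anything, your handling of the divergent case (deriving $\bot$-preservation from the containment $\luastep\subseteq\luagcstep$ and the contrapositive of the backward direction of the theorem) is spelled out more explicitly than the paper's, which asserts that step directly from ``correctness of GC.''
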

\begin{proof}
It follows directly from the previous corollary, together
with Theorem~\ref{gc_correctness}. Then, if 
$\mathsf{result(\sigma', \theta', \Var{s}') \in
         obs(\sigma : \theta : \Var{s}, \luastep)}$, for
$(\sigma : \theta : \Var{s}) \Downarrow_{\luastep} 
 (\sigma' : \theta' : \Var{s}')$, by theorem \ref{gc_correctness}, the previous
occurs if and only if $(\sigma : \theta : \Var{s}) \Downarrow_{\luagcstep} 
           (\sigma'' : \theta'' : \Var{s}'')$, where
\begin{center}
$\mathsf{result(\sigma', \theta', \Var{s}') = 
         result(\sigma'', \theta'', \Var{s}'')}$ 
\end{center}
Hence 
$\mathsf{result(\sigma', \theta', \Var{s}') \in
         obs(\sigma : \theta : \Var{s}, \luagcstep)}$, and we can conclude
that $\mathsf{obs(\sigma : \theta : \Var{s}, \luastep) =
             obs(\sigma : \theta : \Var{s}, \luagcstep)}$.
The converse is analogous.

If $\mathsf{\bot \in obs(\sigma : \theta : \Var{s}, \luastep)}$, by correctness
of GC, it must happen if and only if 
$\mathsf{\bot \in obs(\sigma : \theta : \Var{s}, \luagcstep)}$, and because of
the determinism of both, $\luagcstep$ and $\luastep$, we can conclude that:
\begin{center}
$\mathsf{obs(\sigma : \theta : \Var{s}, \luastep) =
         obs(\sigma : \theta : \Var{s}, \luagcstep)}$
\end{center}
The converse is analogous.
\end{proof}

\end{appendices}

\end{document}
